\newcommand{\remove}[1]{}
\newtheorem{definition}{Definition}
\newtheorem{theorem}{Theorem}
\newtheorem{corollary}{Corollary}
\newtheorem{lemma}{Lemma}
\newtheorem{fact}{Fact}
\newif\ifrobocza
\newif\ifbibtex
\newcommand{\tj}[1]{{\color{blue}{#1}}}
\newcommand{\todo}[1]{}
\newcommand{\tj}[1]{#1}
\newcommand{\cL}{{\mathcal L}}
\newcommand{\zet}{\text{lg } n}
\newcommand{\Set}[2]{%
	\{\, #1 \mid #2 \, \}%
}
\newcommand{\Tup}[2]{%
	(#1, #2) 
}
\DeclarePairedDelimiter\ceil{\lceil}{\rceil}
\DeclarePairedDelimiter\floor{\lfloor}{\rfloor}
\newcommand{\executor}{\textsc{Executor}}
\newcommand{\radionetwork}{{\sf Radio Network}}
\newcommand{\execack}{\textsc{ExecAck}}
\newcommand{\level}{\textit{level}}
\newcommand{\layer}{\textit{layer}}
\newcommand{\leaf}{\textit{leaf}}
\newcommand{\parent}{\textit{parent}}
\newcommand{\fastsd}{\textsc{FastSD}}
\begin{document}
	
\remove{
	\title{On Labelings for Network Topology Recognition in Radio Model
		\footnote{%
			This work was supported by the Polish National Science Centre 
			grant DEC-2017/25/B/ST6/02010.
		}
}
	
	%\titlerunning{Distributed Broadcast}  % abbreviated title (for running head)
	%                                     also used for the TOC unless
	%                                     \toctitle is used

	\author[1]{Adam Ga\'{n}czorz}
	\author[1]{Tomasz Jurdzi\'{n}ski}
	\author[1]{Mateusz Lewko}
	\author[2]{Andrzej Pelc}

	\affil[1]{
%	\affil{
	Institute of Computer Science, University of Wroc{\l}aw, Poland}
	% \\Emails: \{id\}@domain}
	\affil[2]{Departement dâinformatique, Universite du Quebec en Outaouais, Gatineau, Quebec, Canada}
	
	\date{}
	
}

\def\thefootnote{\fnsymbol{footnote}}

\title{{\bf Deterministic Size Discovery and Topology Recognition in Radio Networks with Short Labels}}
%\title{{\bf Exact trade-offs between time and advice\\ for topology recognition}}

\author{Adam Ga\'{n}czorz\footnotemark[1]
	\and Tomasz Jurdzi\'{n}ski\footnotemark[1] 
	\and Mateusz Lewko\footnotemark[1] 
	\and  Andrzej Pelc\footnotemark[2]
}

\footnotetext[1]{Institute of Computer Science, University of Wroc{\l}aw, Poland. 
	Emails: {\tt \{adam.ganczorz,tju\}@cs.uni.wroc.pl, mateusz.lewko@gmail.com}. 
	Supported by the Polish National Science Centre 
	grant DEC-2017/25/B/ST6/02010.}

\footnotetext[2]{
	D\'epartement d'informatique, Universit\'e du Qu\'ebec en Outaouais, Gatineau,
	Qu\'ebec J8X 3X7, Canada. {\tt pelc@uqo.ca}. Partially supported by NSERC discovery grant 2018-03899
	and by the Research Chair in Distributed Computing at the
	Universit\'e du Qu\'ebec en Outaouais.}

\maketitle

\thispagestyle{empty}

\begin{abstract}
We consider the fundamental problems of \emph{size discovery} and \emph{topology recognition} in radio networks modeled by simple undirected connected graphs.
Size discovery calls for all nodes to output the number of nodes in the graph, called its size, and in the task of topology recognition each node has to learn the topology of the graph and its position in it.
In radio networks, nodes communicate in synchronous rounds and start in the same round. In each round a node can either transmit the same message to all its neighbors, or stay silent and listen. At the receiving end, a node $v$ hears a message from a neighbor $w$ in a given round, if $v$ listens in this round, and if $w$ is its only neighbor that transmits in this round. If more than one neighbor of a node $v$ transmits in a given round, there is a {\em collision} at $v$. We do not assume collision detection: in case of a collision, node $v$ does not hear anything
(except the background noise that it also hears when no neighbor transmits). The time of a deterministic algorithm for each of the above problems is the worst-case number of rounds it takes to solve it. Nodes have labels which are (not necessarily different)
binary strings. {Each node knows its own label and can use it when executing the algorithm.}
The length of a labeling scheme is the largest length of a label.

Our goal is to construct short labeling schemes for size discovery and topology recognition in arbitrary radio networks, and to design efficient deterministic algorithms for each of these tasks, using these short schemes.
It turns out that the optimal length of labeling schemes for both problems depends on the maximum degree $\Delta$ of the graph.
For size discovery, we construct a labeling scheme of length $O(\log\log\Delta)$ (which is known to be optimal, even if collision detection is available) and we design an algorithm for this problem using this scheme and working in time $O(\log^2 n)$, where $n$ is the size of the graph. We also show that time complexity $O(\log^2 n)$ is optimal for the problem of size discovery, whenever the labeling scheme is of optimal length $O(\log\log\Delta)$. For topology recognition, we construct a labeling scheme of length $O(\log\Delta)$, and we design an algorithm for this problem using this scheme and working in time 
$O\left(D\Delta+\min(\Delta^2,n)\right)$, where $D$ is the diameter of the graph. We also show that the length of our labeling scheme is asymptotically optimal, by proving that topology recognition in the class of arbitrary radio networks requires labeling schemes of length  $\Omega(\log \Delta)$.

\vspace*{0.5cm}

\noindent
{\bf keywords: } size discovery, topology recognition, radio network, labeling scheme

\end{abstract}

\pagebreak
	
%\tj{TODO:

%	\begin{itemize}
%		\item zmienic gwiazdki i krzyzyki przy autorach na cyferki
%	\end{itemize}	
%}	

	\renewcommand{\thefootnote}{\arabic{footnote}}

\section{Introduction}
\label{s:intro}
Information about the topology of the network or some of its parameters, such as size, often determines the efficiency and sometimes the feasibility of many network algorithms. For example, graph exploration with stop in rings with non-unique labels is impossible without knowing some upper bound on the size of the ring. On the other hand, optimal broadcasting algorithms in wireless networks with distinct labels are faster when the topology of the network is known than without this assumption \cite{CD2,KP2}. Hence, the problems of  \emph{size discovery} and \emph{topology recognition} are fundamental in network computing.
Size discovery calls for all nodes to output the number of nodes in the underlying graph, called its size, and in the task of topology recognition each node has to output an isomorphic copy of the graph with its position in it marked. More formally, in topology recognition, every node $v$ of the graph $G$ modeling the network must output a graph $G'$ and a node $v' $ in this graph, such that there exists
an isomorphism $f: G \rightarrow G'$, for which $f(v)=v'$. 

\subsection{The model}

We consider size discovery and topology recognition in radio networks modeled by simple undirected connected graphs. 
Throughout this paper $G=(V,E)$ denotes the graph modeling the network, $n$ denotes the number of its nodes, $D$ its diameter,  and $\Delta$ its maximum degree. 
We use square brackets to indicate sets of consecutive integers: $[i,j] = \{i, \dots, j\}$ and $[i] = [1, i]$.

As usually assumed in the algorithmic literature on radio networks, nodes communicate in synchronous rounds, starting in the same round. In each round a node can either transmit the same message to all its neighbors, or stay silent and listen. At the receiving end, a node $v$ hears a message from a neighbor $w$ in a given round, if $v$ listens in this round, and if $w$ is its only neighbor that transmits in this round. If more than one neighbor of a node $v$ transmits in a given round, there is a {\em collision} at $v$. Two scenarios concerning collisions were considered in the literature. The availability of {\em collision detection} means that node $v$ can distinguish collision from silence which occurs when no neighbor transmits. If collision detection is not available, node $v$ does not hear anything in case of a collision
(except the background noise that it also hears when no neighbor transmits). In this paper we do not assume collision detection. The time of a deterministic algorithm for each of the above problems is the worst-case number of rounds it takes to solve it. 

If nodes are anonymous then neither size discovery nor topology recognition can be performed, as no communication in the network is possible. Indeed, without any labels, in every round either all nodes transmit or all remain silent, and so no message can be received. Hence we consider labeled networks. A {\em labeling scheme} for a given network represented by a graph $G=(V,E)$ is any function $\cL$ from the set $V$ of nodes into the set $S$ of finite binary strings. The string $\cL(v)$ is called the label of the node $v$.
Note that labels assigned by a labeling scheme are not necessarily distinct. The {\em length} of a labeling scheme $\cL$ is the maximum length of any label assigned by it. Every node knows {a priori} only its label, and can use it as a parameter for the size discovery or topology recognition algorithm.

Our goal is to construct short labeling schemes for size discovery and topology recognition in arbitrary radio networks, and to design efficient deterministic algorithms for each of these tasks, using  such schemes. Such short schemes in the context of radio networks were studied for size discovery in \cite{GorainP18}, and for topology recognition in \cite{GorainP17}. In  \cite{GorainP18} the authors worked in the model with collision detection. They constructed labeling schemes of length $O(\log\log\Delta)$  and a size discovery algorithm using this scheme and working in time $O(Dn^2\log \Delta)$. They also proved that labels of size $\Omega(\log\log\Delta)$ are necessary to solve the size discovery problem in this model.
In \cite{GorainP17}, the authors studied topology recognition without collision detection, similarly as we do in the present paper, but restricted attention only to tree networks. They constructed labeling schemes of length  $O(\log\log\Delta)$ and a topology recognition algorithm working for arbitrary trees, using these schemes. Moreover, they showed that labels of size $\Omega(\log\log\Delta)$ are necessary to solve the topology recognition problem for trees.

Solving distributed network problems with short labels can be seen in the framework of
algorithms with {\em advice}.  In this paradigm that has recently got growing attention, an oracle knowing the network gives {advice} to nodes not knowing it,  in the form of binary strings, and a distributed algorithm cooperating with the oracle uses this advice to solve the problem efficiently.
The required size of advice (maximum length of the strings) can be considered a measure of the difficulty of the problem.
Two variations are studied in the literature: either the binary string given to nodes is the same for all of them \cite{GMP} or different strings may be given to different nodes
\cite{DBLP:conf/spaa/EllenGMP19,EllenG20,FKL,FPP}, as in the case
of the present paper. If strings may be different, they can be considered as labels assigned to nodes by a labeling scheme.
Such labeling schemes permitting to solve a given network task efficiently are also called {\em informative labeling schemes}.
One of the famous examples of using informative labeling schemes is to answer adjacency queries in graphs \cite{AKTZ}.

Several authors have studied the minimum amount of advice (i.e., label length) required to solve certain
network problems (see the subsection Related work). The framework of advice or labeling schemes permits us to quantify the amount of information used to solve a network problem, such as size discovery or topology recognition, regardless of the type of information that is provided.  It should be noticed that the scenario of the same advice given to all nodes would be trivial in the case of radio networks: no communication could occur, and hence the advice would have to contain the size of the network for size discovery, and would not help for topology recognition, as nodes would not be able to find their position in the network without communicating.

\subsection{Our results}
It turns out that the optimal length of labeling schemes,  both for size discovery and for topology recognition,  depends on the maximum degree $\Delta$ of the graph.
For size discovery, we construct a labeling scheme of length $O(\log\log\Delta)$, which is optimal, in view of \cite{GorainP18}, and we design an algorithm for this problem using this scheme and working in time $O(\log^2 n)$, where $n$ is the size of the graph. We also show that time complexity $O(\log^2 n)$ is optimal for the problem of size discovery, whenever the labeling scheme is of optimal length $O(\log\log\Delta)$. Hence, without collision detection we achieve the same optimal length of the labeling scheme, as was done in \cite{GorainP18} with collision detection, and for this optimal scheme our size discovery algorithm is exponentially faster than that in  \cite{GorainP18}.

For topology recognition, we construct a labeling scheme of length $O(\log\Delta)$, and we design an algorithm for this problem using this scheme and working in time 
$O\left(D\Delta+\min(\Delta^2,n)\right)$, where $D$ is the diameter of the graph. We also show that the length of our labeling scheme is asymptotically optimal, by proving that topology recognition in the class of arbitrary radio networks requires labeling schemes of length  $\Omega(\log \Delta)$. (In fact we prove a stronger result that this lower bound holds even in the model with collision detection.) If the optimal length of a labeling scheme sufficient to solve a problem is considered a measure of the difficulty of the problem, our result shows, in view of the labeling scheme of length $O(\log\log\Delta)$ for topology recognition in trees \cite{GorainP17}, that this task is exponentially more difficult in arbitrary radio networks than in radio networks modeled by trees. 

%%%%%%%%%%%%%%%%%%%%%%%%%%%%%%%%%%%%
\subsection{Related work}
%%%%%%%%%%%%%%%%%%%%%%%%%%%%%%%%%%%%

There is a vast literature concerning distributed algorithms for various tasks in radio networks. These tasks include, e.g., 
 broadcasting \cite{CGR,GPX}, gossiping \cite{CGR,GPPR} and leader election
\cite{CD,KP}. In some cases \cite{CGR,GPPR}, the authors use the model without collision detection, in others \cite{GHK,KP}, the collision detection capability is assumed.

Many authors use the framework of algorithms with advice (or equivalently informative labeling schemes) to investigate the amount of information needed to solve a given network problem.
In \cite{FIP1}, the authors compare the minimum size of advice required to
solve two information dissemination problems, using a linear number of messages.
In \cite{FKL}, it is shown that advice of constant size permits to carry out the distributed construction of a minimum
spanning tree in logarithmic time.
In \cite{AKTZ}, optimal labeling schemes are constructed in order to answer adjacency queries in graphs.
In \cite{EFKR}, the advice paradigm is used to solve online problems.

In the model of radio networks, apart from the previously mentioned papers  \cite{GorainP18} studying size discovery and \cite{GorainP17} studying topology recognition in the framework of short labeling schemes, other authors studied the tasks of broadcast  \cite{IKP,DBLP:conf/spaa/EllenGMP19,EllenG20}, and convergecast  \cite{BLPR20} in this framework. 
While  \cite{DBLP:conf/spaa/EllenGMP19,EllenG20} assume that short labels are given to anonymous nodes, \cite{IKP} adopts a different approach.
The authors study radio networks without collision detection for
which it is possible to perform centralized broadcasting in constant time, i.e., when the topology of the network is known and all nodes have different labels. They investigate
how many bits of additional information (i.e., not counting the labels of nodes) given to nodes are sufficient
for performing broadcast in constant time in such networks, if the topology of the network is not known to the nodes.

The task of topology recognition was also investigated in models other than the radio model: in \cite{FPP} the authors use the LOCAL model, and in \cite{MSRT} the model used is the congested clique.

\remove{
The computation and communication capabilities and efficiency of a wireless computer networks heavily depends on the amount of information about topology of the network and its basic parameters available to the nodes, as the size, the diameter, the eccentricity, or the maximal degree. 
The mobility of wireless devices and the fact that the topology might change without necessity of establishing (``wired'') infrastructure motivates the studies of wireless networks and algorithms in the scenario that nodes have limited information about the topology of the network.
In particular, the extreme scenario that nodes %do not have assigned unique identifiers 
have no information about the network and they are indistinguishable (do not have assigned any labels which could differentiate their behavior) leads to the situation that so-called \emph{symmetry breaking} task cannot be solved deterministically and therefore several other problems cannot be solved either. 
Therefore, the common assumption is that the nodes of a network are at least equipped with unique IDs which breaks symmetry without necessity of randomization. Such a setting with unknown topology of the network but unique IDs of nodes is often called \emph{Ad-hoc Radio Network}.
However, one can ask a natural question whether and which tasks can be solved with short labels, i.e., the labels which are not necessarily unique but still give some information facilitating distributed computation and communication.

%\textbf{TU JAKIES LAGODNIEJSZE PRZEJSCIE DO LABELLING SCHEMES}

We address the above %described 
problem for the radio network model, in the framework of \emph{labeling schemes}. In this framework, a solution of a given algorithmic or communication problems consist of two components: a labeling and a distributed algorithm associated with the labeling. Given a communication graph $G=(V,E)$, the labeling scheme assigns binary strings called \emph{labels} to the nodes of $G$. Then, the nodes of $G$ execute an associated algorithm $A$ using assigned labels. The \emph{length} of a labeling scheme is the largest length of a label of a node of $G$. The goal is usually to determine the labeling with the smallest length of a labeling scheme for a given problem, expressed as a function of network parameters as its size, maximum degree of nodes, diameter and so on. Optimization of round complexity of the associated algorithm $A$ is another goal as well as identification of possible tradeoffs between the size of the labeling scheme and round complexity of an algorithm. The labeling schemes for broadcast, topology recognition, size discovery, convergecast and other problems were studied e.g.\ in \cite{GorainP18,DBLP:conf/sirocco/GorainP17,DBLP:conf/spaa/EllenGMP19,BLPR20,EllenG20}.}

\remove{
In order to measure the amount of information available to individual nodes needed to solve particular problems, the notion of \emph{labeling schemes} were introduced \cite{GorainP18,DBLP:conf/sirocco/GorainP17,DBLP:conf/spaa/EllenGMP19,BLPR20} for the radio network model.
Given, a network graph $G=(V,E)$ of a {radio network} and a problem $P$, a labeling scheme assigns labels to the nodes of $G$. Then, given these labels, the nodes $G$ execute a distributed algorithm with communication governed by the rules of {radio network} to solve the problem $P$. The \emph{size} of the labeling scheme is equal to the maximum over all nodes of $G$ of the number of bits of the label of a node. Another efficiency measure is the round complexity of an algorithm which solves the problem $P$ for a given labeling scheme. A common approach here is to determine a labeling scheme with the smallest size and then optimize round complexity of algorithms solving a given problem for the minimal size of a labeling scheme. However, if the increase of the size of labels admits faster algorithms, one might also be interested in trade-offs between the size of a labeling scheme and round complexity of a distributed algorithm solving the given problem $P$.
}

%\subsubsection{Previous work and our results}

\remove{
\paragraph{Previous work}

In \cite{GorainP18}, the authors proved that labels of size $\Omega(\log\log\Delta)$ are necessary in order to solve the size discovery problem. Moreover, they give a size discovery algorithm which uses labels with asymptotically optimal  size $O(\log\log\Delta)$ and works in $O(nD)$ rounds (SPRAWDZIC!). However, their algorithm works only in the radio network model with collision detection, i.e., each node can distinguish silence (no neighbor is sending a message) and interference (two or more neighbors transmit in the current round). 

	Topology recognition was studied in the LOCAL model, \cite{DBLP:journals/iandc/FuscoPP16}. Then, the topology recognition in the radio network model has been investigated \cite{DBLP:conf/sirocco/GorainP17}, restricted to tree topologies. It has been shown that the $O(\log\log\Delta)$-bit labels are sufficient to learn the topology of the network, provided that the network graph is a tree. 
	
Surprising broadcast algorithms with constant size labels were presented in \cite{DBLP:conf/spaa/EllenGMP19,EllenG20}. 
In \cite{bu:hal-02650472}, the authors build labeling schemes and lower bounds for the convergecast problem.	}
	
\remove{
\paragraph{Our results}
We show that there exists a labeling scheme of size $O(\log\log \Delta)$ for the size discovery problem in {radio network} model without collision detection, where $\Delta$ is the (upper bound on) largest degree of a node of the network graph $G=(V,E)$.
Moreover, we design an algorithm associated with our labeling scheme which solves the size discovery problem in time which asymptotically matches the fastest broadcast algorithms in {radio network}. Then, we improve our solution further by presenting a size discovery algorithm which solves the size discovery problem in $O(\log^2n)$ rounds with asymptotically optimal size of labels $O(\log\log\Delta)$.
Thus, a bit surprisingly, the time complexity of the algorithm is independent of the diameter of the communication graph. Our results match the lower bound on the size of labels necessary to solve the size discovery problem \cite{GorainP18}, which holds even for radio networks \emph{with} collision detection. More importantly, the time complexity of our algorithm is much smaller than that of the algorithm by Gorain et al.\ \cite{GorainP18}, which requires the mechanism of collision detection. Moreover, we also prove that time complexity of our algorithm is actually asymptotically optimal as well. 
%Our algorithm makes use of new surprising broadcast algorithms with constant size labels \cite{DBLP:conf/spaa/EllenGMP19,EllenG20}. 

\tj{TU POWTORKA z ABSTRACTu:}
We show that any algorithm solving topology recognition
problem in general graphs, requires a labeling scheme of
the size $\Omega(\log \Delta)$. Then, we give a labeling scheme with asympotically optimal
size of labels $O(\log \Delta)$ and a topology recognition algorithm using the labels constructed
by our labeling scheme. The algorithm works in $O\left(D\Delta+\min(\Delta^2,n)\right)$ rounds, where 
$D$ is the diameter of the network graph.}

\remove{
\subsection{Radio network model}

\noindent\emph{Ad-Hoc Radio Network without Collision Detection} will be modeled by some graph $G = (V,E)$ called a \emph{communication graph} or a \emph{network}. Each node $v \in V$ will represent a computer with unlimited local memory. Each edge $e \in E$ will be a channel of communication between connected nodes. We will divide our time into rounds, each consisting of firstly receiving message(s) from neighboring nodes transmitted in the previous round, then making some local computation and at the end transmitting (or not) some message. 
\begin{comment}
We also assume that there is a global clock, so each node knows the current round number. 
\end{comment}
We distinguish the model \emph{with a global clock}, when all nodes know the current global round number
and the model \emph{without a global clock} when nodes might have different values of their clocks.
If not stated otherwise, we assume the model with the global clock.
In \emph{Ad-Hoc Radio Networks} each node will receive a message only if exactly one of its neighbors in $G$ was transmitting a message in that round. If more than one node did transmit, then we say that a collision occurred. In the model \emph{without Collision Detection} we cannot distinguish whether zero or more than one neighbors of a given node transmit, in contrast in \emph{Ad-Hoc Radio Network with Collision Detection} we either get a %broadcasted 
message or information whether none of neighbors transmits or there was a collision.

The most common measure of complexity is time complexity measured in rounds. As in most cases, in real world applications, most of the time is used for communication, so we abstract from time complexity of local computations. 
%For some problems we also introduce energy complexity, defined as total number of rounds that a node was \textit{active}, which means it was transmitting or receiving messages. %, %or making local computations, 
%summed for all nodes.

One of the most common problem in this model is the \emph{broadcast} problem. 
%In this problem, given a distinguished node $s$ of the communication graph with some message $M$, the goal is to deliver $M$ to all nodes of the communication graph $G$. 
%Therefore, we assume that all nodes are reachable from $s$ in the communication graph $G$. 
%We want to find an effective algorithm for passing this message to all nodes in $G$.
%
The \emph{broadcast problem} is to deliver the broadcast message $M$ known to a single node $s$ of a network to all nodes
of the network.
In context of the {\radionetwork} model, the \emph{broadcast algorithm} is defined as follows.
Given a connected communication graph $G=(V,E)$ and the source node $s\in V$ with the 
\emph{broadcast message} $M$, $s$ starts an execution of a distributed broadcast algorithm $A$.
All other nodes join the execution of $A$ only after receiving $M$. The algorithm $A$ solves the broadcast problem in $T$ rounds in the network $G$ if all nodes from $V$ know the broadcast message after $T$ rounds.

The algorithm $A$ solves the \emph{acknowledged broadcast} problem in $T$ rounds iff the broadcast problem is solved, moreover all nodes from $V$ know that the broadcast has been completed exactly after $T$ rounds.

\subsection{Labeling schemes, the size discovery and the topology recognition problem}
%\noindent\textbf{Problem: size discovery.}
%
A solution of 
a given algorithmic problem in the framework of labeling schemes 
%the \emph{size discovery} 
%problem 
consists of two components: a labeling scheme algorithm and a radio network algorithm. 

A \emph{Labeling Scheme} is an algorithm which knows the whole communication graph $G=(V,E)$ of the considered radio network and assigns to each node some label. As this is a preprocessing stage we do not care about running time of this algorithm but we try to optimize  output lengths of the labels. 
The \emph{Radio Network Algorithm} is a distributed algorithm running on nodes of $G$, where each node initially knows only its label given by the labeling scheme.

For the \emph{size discovery} problem we want that at the end of an execution of a radio network algorithm on an input graph $G=(V,E)$, the nodes output the size of the network graph $n=|V|$ given the labels assigned to them by the labeling scheme.

In the topology recognition problem every node of $G$ has to
learn the underlying graph.
Each node $v \in V$ has to output graph $G' = (V', E')$
and a node $v'$ such that there exists an isomorphism
$f: V' \rightarrow V$ and it holds that:
\begin{itemize}
	\setlength\itemsep{0em}
	\item the set $E'$ is an isomorphic copy of $E$:
	$$ E = \Set{ \Tup{f(u')}{f(v')} }{ \Tup{u'}{v'} \in E' } $$
	\item $v'$ corresponds to $v$ in the isomorphic graph, i.e., $f(v') = v$.
\end{itemize}

}

\section{Preliminaries}\label{s:algorithmic:tools}

%%%%%%%%%%%%%%%%%%%%% algorithmic tools begin

As our algorithms use recent results showing that there exist constant-length labeling schemes for the broadcast problem, we recall these results, adjust them to our needs and introduce an auxiliary notion called the {\em broadcast tree}. Then, using the fact that there exists an efficient broadcast algorithm with constant-length labels, we give a lemma allowing to ``encode'' a given message $M$ in labels of the neighborhood of some path associated with an efficient algorithm that collects the whole message in a single node and broadcasts it to the whole network.

\subsection{Broadcast algorithms and broadcast trees}\label{ss:algorithmic:broadcast}
We use the constant-length labeling schemes for the broadcast problem 
\cite{DBLP:conf/spaa/EllenGMP19,EllenG20}. In this problem, a source node $s$ has a message which must be communicated to all other nodes.
First, we recall the result regarding fast broadcast with constant-length labels from \cite{EllenG20}.
\begin{theorem}\label{t:broadcast}\cite{EllenG20}
	There exists a labeling scheme of length $O(1)$ and an algorithm $\executor$ using it which solves the %\textbf{acknowledged} 
	broadcast problem 
	in time $O(D\log n+\log^2n)$. 
%	where $n$ is the size of the network graph 
%	and $D$ is the diameter of the graph. The algorithm {\executor} does not require the model with the global clock.
\end{theorem}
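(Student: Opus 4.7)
The plan is to follow the approach of \cite{EllenG20}. Since the oracle that assigns labels knows the graph $G$ and the designated source $s$, it can precompute a BFS tree $T$ rooted at $s$ with layers $L_0=\{s\}, L_1,\ldots,L_D$. The overall strategy is to broadcast along $T$ layer by layer, using constant-length labels to orchestrate a collision-free transmission schedule in spite of the arbitrary cross and back edges of $G$.

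The first step is to design a scheduling primitive that, given $O(1)$-bit labels, advances the message from one BFS layer to the next in amortized $O(\log n)$ rounds. The key combinatorial idea is to assign each node a constant-size ``role tag'' chosen by the oracle so that, for every node $v\in L_{i+1}$, its BFS parent in $T$ transmits in a round where no other neighbor of $v$ in $L_i$ (or in $L_{i+1}$, due to within-layer edges) transmits. Because the number of labels is bounded by a constant, but the neighborhood of $v$ may contain many nodes sharing labels, the oracle must instead encode, for each node, a short index into a combinatorial structure (for example, a constant-size family of selectors amplified by a logarithmic-length schedule driven by a universal object such as a strongly selective family). The labels then determine which slots of this schedule a node participates in, ensuring that every child in $L_{i+1}$ hears from its designated parent exactly once per phase.

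Second, I would combine the layer-advance primitive with a global timing mechanism. The $\log^2 n$ additive term comes from the initial ``bootstrap'' phase: getting the message from $s$ into a sufficiently dense front so that amortization kicks in, which is handled by running a classical $O(\log^2 n)$ broadcast on a small piece. After that, each layer costs $O(\log n)$ amortized rounds by the primitive above, yielding the $O(D\log n)$ term, for a total of $O(D\log n+\log^2 n)$. The executor algorithm simply reads its label, infers its role in the schedule, and transmits or listens accordingly, while silently piggybacking acknowledgments so that the procedure $\execack$ used later in the paper can be built on top.

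The main obstacle is the combinatorial construction in Step 1: proving that $O(1)$ label bits suffice to schedule transmissions across \emph{every} BFS layer of \emph{every} graph in a way that is robust against non-tree edges. The difficulty is that within a constant-size label alphabet one cannot uniquely identify parents, so one must instead argue that a carefully chosen global schedule (depending on $n$ but not on individual labels beyond a constant amount of structural information) guarantees, for every tree edge $(u,v)$, a slot in which $u$ transmits and all other neighbors of $v$ stay silent. Once this core combinatorial lemma is in place, the rest of the argument is bookkeeping: patching together the layer-advance primitive across $D$ layers and verifying that the $\log^2 n$ bootstrap dominates only for small $D$.
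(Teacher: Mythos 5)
First, a framing remark: the paper does not prove Theorem~\ref{t:broadcast} at all --- it is imported from \cite{EllenG20} --- and the only in-paper material on $\executor$ is the structural description in the proof of Lemma~\ref{l:broadcast:properties}. Your proposal is therefore an attempt to reprove the cited result, and it has a genuine gap at exactly the point you yourself flag as the ``main obstacle.'' The layer-advance primitive you postulate cannot exist in the form you state it. You ask for an $O(\log n)$-round phase, driven by $O(1)$-bit labels, after which \emph{every} node of layer $L_{i+1}$ has received the message from its designated parent in $L_i$ without collision. But informing all of $L_{i+1}$ from $L_i$ is an instance of bipartite broadcast, and by the lower bound of Alon et al.~\cite{AlonBLP91} (restated as Theorem~\ref{t:ablp} in this very paper) there exist bipartite graphs on which \emph{every} transmission schedule --- even a centralized one with full knowledge of the graph, hence a fortiori any schedule encodable in labels --- needs $\Omega(\log^2 n)$ rounds before all receivers have heard a message. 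So ``every child hears from its parent once per phase'' fails for some layers, and the $O(D\log n)$ term cannot be obtained by completing each layer before starting the next; the vague appeal to amortization does not repair this, because you give no mechanism by which an unfinished layer hands work to later phases. The combinatorial objects you invoke do not help either: strongly selective families for neighborhoods of size up to $\Delta$ have size $\Omega(\min(n,\Delta^2))$, not $O(\log n)$.

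The actual construction of \cite{EllenG20}, summarized in the proof of Lemma~\ref{l:broadcast:properties}, is structurally different. The labels are three bits (\textsc{join}, \textsc{stay}, \textsc{go}); the algorithm runs in $3$-round stages and maintains a \emph{minimal} dominating set $\text{DOM}_i$ of informed nodes dominating the current frontier. Minimality guarantees that each dominator has a \emph{private} uninformed neighbor, so each stage informs at least one new node per dominator, and designated feedback nodes use the \textsc{stay}/\textsc{go} bits to steer membership in the next dominating set. There is no guarantee that any BFS layer is finished within a bounded number of stages; the $O(D\log n+\log^2 n)$ bound comes from derandomizing the probabilistic analysis of Bar-Yehuda, Goldreich and Itai~\cite{Bar-YehudaGI92}, which charges progress to distances from the source rather than to completed layers. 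If you want to reconstruct the theorem, that dominating-set-with-feedback mechanism --- not a per-layer selector schedule --- is the missing idea.
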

%ACKNOWLEDGED BROADCAST moze wyjasnic (bo w 
%\cite{EllenG20} o tym malo pisza):
%One of the nodes which get the broadcast message last will have $LAST$ bit equal to $1$. When it receives \textbf{SEND BACK and the numbre of phase/round} 
%A MOZE O ACK DOPIERO W Lemma~\ref{l:broadcast:properties} ?

We say that a node $v$ is \emph{informed} in some round $t$ of a broadcast algorithm if $v$ knows the broadcast message in round $t$. Otherwise, $v$ is \emph{uninformed} in round $t$.
We assume that feedback messages in the execution of $\executor$ are distinct from the broadcast message (this can be easily ensured by adding a special sign to the broadcast message).
\begin{definition}[Broadcast Tree] 
	Let  $G=(V,E)$ be a graph with the source node $s\in V$. 
	
%	Moreover, assume that some received messages are ignored
%%	Moreover, assume that 
%%	a node ignores the source message unless it comes from a node whose level is one less than its level modulo 3. 
%	\footnote
%	{Nodes recognize messages to be ignored from the content of received messages using their knowledge collected during an execution of a distributed algorithm.} 
%	by nodes during an execution of $A$, in such a way that each node $v$ receives the broadcast message which is not ignored by $v$.
	For each node $v\in V\setminus\{s\}$, the parent of $v$, denoted $\parent(v)$, is the first node which successfully transmits the broadcast message to $v$ during the execution of $\executor$ in $G$.
		The \emph{broadcast tree} $T_{\executor(G)}$ is the tree with the root $s$ and the set of edges $(v,\parent(v))$ for each $v\neq s$.
	
	The \emph{level} of a node $v\in V$, denoted $\text{level}(v)$, in the broadcast tree $T_{\executor(G)}$ is equal to the natural number $i$ such that $v$ receives the broadcast message from $\parent(v)$ for the first time in round $i$ of the execution of $\executor$ on $G$. 
\end{definition}

%Let $A_{\text{EG}}$ be the broadcast algorithm described in \cite{EllenG20} which was designed to prove Theorem~\ref{t:broadcast}. 
In our algorithms for the size discovery problem, we will use the following properties of the algorithm {\executor} from \cite{EllenG20}. 
\begin{lemma}\label{l:broadcast:properties}
	The broadcast algorithm $\executor$ described in Theorem~\ref{t:broadcast} %\cite{EllenG20} %which was designed to prove Theorem~\ref{t:broadcast}. Then:
	satisfies the following properties:
	\begin{enumerate}[(1)]
		\item
		Assume that the level of a node $v$ in the broadcast tree $T_{\executor(G)}$ is equal to $i$ and the level of $\parent(v)$ in this tree is $j<i$. Then, $\parent(v)$ has a child with level $k$ for each $k\in[j+1,i]$ such that $k\mod 3=1$.
		\item 
		The maximum value of the levels of nodes of the broadcast tree $T_{\executor(G)}$ is larger than $t-3$, where $t$ is the number of rounds of the execution of {\executor}.
		\item 
		Each node $v$ can determine the level of $\parent(v)$ and its own level in the broadcast tree $T_{\executor}(G)$.
	\end{enumerate} 
\end{lemma}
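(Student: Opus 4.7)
The plan is to exploit the internal structure of \executor{} from Ellen–Gorain, which proceeds in phases of $3$ rounds -- this is precisely why item (1) talks about levels $k$ with $k\bmod 3=1$. I would start by recalling the phase structure of \executor{}: each informed node participates, according to its constant-length label, in a schedule in which successful receptions from an informed to an uninformed neighbor can only happen at round indices $k\equiv 1\pmod 3$. As an immediate corollary, the level of every non-source node lies in this congruence class, which is the structural backbone for the rest of the lemma.

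For property (1), I would proceed phase by phase from round $j$. Once $\parent(v)$ becomes informed in round $j$, the \executor{} schedule forces it, in each subsequent phase whose ``success'' round $k$ lies in $[j+1,i]$ and satisfies $k\bmod 3=1$, to inform at least one previously uninformed neighbor; otherwise the progress invariants underlying the $O(D\log n+\log^2 n)$ guarantee of Theorem~\ref{t:broadcast} would be violated, and in particular $v$ could not fail to be informed before round $i$, contradicting $\text{level}(v)=i$. For property (2), I would argue from termination: \executor{} only stops once no further uninformed node exists, so if the execution lasts $t$ rounds then some reception must have occurred within the last phase, i.e., in a round strictly greater than $t-3$, which by definition is the level of the receiving node.

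Item (3) is, in my view, the main obstacle. Knowing one's own level is immediate from the global clock: it is simply the round in which the broadcast message first arrives at $v$. Recovering $\text{level}(\parent(v))$ from $v$'s local view is more delicate; the cleanest route is to have \executor{} piggyback the sender's current level onto the broadcast message, which inflates the transmitted packet by $O(\log n)$ bits but does not affect the label length (which is what the lemma is actually about). A purely structural alternative would be to decode $\text{level}(\parent(v))$ from $\parent(v)$'s constant-length label together with the reception round $i$ and the determinism of \executor{}; this requires a careful audit of how labels dictate intra-phase transmission schedules, and is where I expect the verification effort to concentrate.
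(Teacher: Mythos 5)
Your overall strategy---unpacking the three-round stage structure of \executor{} and arguing stage by stage---is the same as the paper's, and your treatments of items (2) and (3) essentially match it (for (3), the paper's \executor{} already attaches the current round number, and the transmitter's own level, to the messages sent in rounds 1 and 3 of a stage, so your ``piggyback'' route is the actual argument rather than a modification). However, there are two problems. First, a factual error: in \executor{} successful receptions happen not only in the first round of a stage but also in the third (``speed-up'') round, in which a dominator transmits again after receiving $\textsc{go}=1$ from its feedback node. So your ``immediate corollary'' that every non-source level is $\equiv 1 \pmod 3$ is false, and this is precisely why item (1) only claims children at levels $k$ with $k \bmod 3 = 1$ rather than at every level in $[j+1,i]$.

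Second, and more importantly, the key step of item (1) is not actually justified. Saying that ``otherwise the progress invariants underlying the $O(D\log n+\log^2 n)$ guarantee would be violated'' is not an argument: the time bound is a global property and does not by itself force a \emph{particular} node to successfully inform a new neighbor in every stage. The mechanism the paper uses is that \executor{} maintains, for each stage $i$, a set $\mathrm{DOM}_i$ of informed nodes that is a \emph{minimal} dominating set of the current frontier; minimality gives each $v\in\mathrm{DOM}_i$ a private uninformed neighbor dominated by $v$ alone, which therefore receives $v$'s message without collision in the first round of stage $i$. Combined with the fact that each node's membership in the sets $\mathrm{DOM}_i$ forms a consecutive run of stages---which, since $\parent(v)$ delivers to $v$ at round $i$, must cover every stage whose first round lies in $[j+1,i]$---this yields item (1). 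Without identifying the minimal-dominating-set invariant, your proof of (1) has a genuine gap.
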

\begin{proof}
	%\noindent\textbf{Learning the Tree}
	We first give a short description of {\executor} \cite{EllenG20} focusing on elements which are important for the statement of the lemma. We refer the reader to \cite{EllenG20} for more details.
	The label of each node of the communication graph consists of three bits called \textsc{join}, \textsc{stay} nad \textsc{go}.
	The algorithm works in \emph{stages} such that the stage $i$ consist of three round $3i+1, 3i+2, 3i+3$ for a natural number $i$. Each node $v$ is active in a consecutive sequence of stages (possibly empty) starting at the stage directly following the stage at which $v$ receives the broadcast message. The labels guarantee that, if $v$ is active in a stage, at least one uninformed node receives the broadcast message from $v$ in the given stage.
	
	The sequence of sets $\text{DOM}_i$ for $i=1,\ldots, t$ plays an important role in the formal description and analysis of the algorithm, where $t$ is the number of stages of the algorithm.
	One can verify that the following properties are satisfied during an execution of the algorithm (see \cite{EllenG20}):
	\begin{enumerate}[(a)]
		\item 
		each node $v$ can determine whether it belongs to $\text{DOM}_i$ at the beginning of stage $i$, based on its label and its history of communication until stage $i-1$,
		\item 
		all elements of $\text{DOM}_i$ know the broadcast message already,
		\item 
		$\text{DOM}_i$ is nonempty for each $i\le t$,
		\item 
		each element of $\text{DOM}_i$ transmits the broadcast message in the first round of the stage $i$,
		\item 
		for each $v\in\text{DOM}_i$, at least one uninformed node $u\in \mathcal{N}(v)$ receives the non-ignored broadcast message from $v$ in the first round of stage $i$,
		\item 
		for each node $v$: either $v$ does not  belongs to $\text{DOM}_i$ for any $i$ or $v$ belongs to the sequence of sets
		$\text{DOM}_i, \text{DOM}_{i+1}, \ldots, \text{DOM}_j$ for $1\le i\le j<t$, where $t$ is the number of all stages.
	\end{enumerate}
	Below, we shortly describe the elements of the algorithm which are important with respect to the correctness of the above properties (a)--(f). 
	
	Let $\text{FRONTIER}_i$ be the set of nodes which are uniformed {before} stage $i$ but have at least one informed neighbor.
	The set $\text{DOM}_i$ (as DOMinating set) is always a \emph{minimal} set of informed nodes which dominates the set of frontier nodes {$\text{FRONTIER}_{i}$}.
	At first $\text{DOM}_1 = \{s\}$ and the conditions (a)--(f) are therefore satisfied at the beginning of stage~1.
	Assume that (a)--(f) are satisfied until the beginning of stage~$i$. Below, we describe an execution of stage $i$ and argue that the properties (a)--(f) are also preserved at the beginning of stage $i+1$.
	%The stage $i$ works as follows:
	\begin{itemize}
		\item 
		In the first round of the stage $i$, each node in $\text{DOM}_i$ sends the broadcast message and the current round number $3i+1$, which assures (d). All newly informed nodes store the round number $3i+1$ of reception of the broadcast message. Each such newly informed node 
		becomes an element of the dominating set $\text{DOM}_{i+1}$ iff its \textsc{join} bit is equal to $1$.

		As $\text{DOM}_i$ is a \textbf{minimal} set dominating the set $\text{FRONTIER}_{i}$, i.e., the set of uninformed neighbors of informed nodes, each $v\in\text{DOM}_i$ has an (at least one) uninformed neighbor $u\in\text{FRONTIER}_{i}$ such that $v$ is the only neighbor of $u$ in $\text{DOM}_i$. 
		Therefore, such node $u$ receives the broadcast message from $v$ which gives (e). One of such newly informed neighbors of $v$ is chosen as the \textit{Feedback Node} of $v$ for stage $i$.
		% by setting the value $1$ of its $\textsc{stay}$ bit during the construction of labels of nodes. 
		
		\item
		The second round of the stage is the \emph{Feedback round}. In that round, each newly informed node checks the value of its \textsc{stay} and \textsc{go} bits. If at least one of them is equal to $1$, then the node transmits both of these bits. Each node from $\textsc{DOM}_{i}$ will no longer be in $\textsc{DOM}_{i+1}$ unless it receives $\textsc{stay}=1$ in the second round.

		\item 
		In the last round of the phase, introduced as a speed-up of the previous algorithm with constant size labels \cite{DBLP:conf/spaa/EllenGMP19}, a node from $\text{DOM}_i$ transmits the broadcast message if it got the value $\textsc{go}=1$ from its designated Feedback Node in the previous round.
		
		Finally, $\text{DOM}_{i+1}$ contains all elements of $\text{DOM}_i$ which received $\textsc{stay}=1$ in the second round of the stage as well as the nodes which were newly informed in rounds $1$ and $3$ of the stage $i$ and their $\textsc{join}$ bit is equal to $1$. This definition of $\textsc{DOM}_{i+1}$ proves the statements (a), (b) and (f).
	\end{itemize}
	Finally, if $\textsc{DOM}_i$ is empty, then all nodes are informed, as the communication graph is connected.
	And thus the broadcast task is finished, which gives the above property (c).
	
	%In this part we will execute an algorithm from . As described there all nodes will remember whether they are in \textit{Dominating Set} ($DOM$). At first $DOM = \{r\}$. Algorithm is divided into the phases consisting of 3 rounds.
	
	Given the properties (a)--(f), one can see that the item (1) of the lemma directly follows from (e) and (f), the item (2) follows from (c) and the item (3) follows from the fact that each transmitting node attaches the current round number to its message transmitted in rounds 1 and 3 of a stage.
\end{proof}
%\section{Basic tools}
Below, we show that {\executor} can be easily transformed into an acknowledged broadcast algorithm. We say that an algorithm $A$ solves the \emph{acknowledged broadcast} problem in $T$ rounds iff it solves the broadcast problem, and moreover, all nodes of the graph know after $T$ rounds that broadcast has been completed.

Similar ideas were applied for the acknowledged broadcast in \cite{DBLP:conf/spaa/EllenGMP19} (note that the formulation of acknowledged broadcast in \cite{DBLP:conf/spaa/EllenGMP19} was slightly weaker than ours). However, as the round complexity of the algorithm from \cite{DBLP:conf/spaa/EllenGMP19} is larger than the round complexity of {\executor} and we need additional properties of the algorithm for further applications, we provide the following corollary.
\begin{corollary}\label{c:ack:broadcast}
	Algorithm {\executor} can be transformed into algorithm {\execack} such that
	\begin{itemize}
		\item 
		{\execack} solves the {acknowledged} broadcast problem  on every graph $G$, using $O(1)$-bit labels and working in at most $3t$ rounds,
		where $t=O(D\log n+\log^2n)$ is the number of rounds of the execution of  {\executor} on $G$.
		\item after an execution of {\execack} on $G$, each node knows the number of levels of the broadcast tree $T_{\executor(G)}$, as well as its own level and the level of its parent in $T_{\executor(G)}$.
	\end{itemize}
\end{corollary}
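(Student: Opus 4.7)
My plan is to build $\execack$ from three consecutive executions of $\executor$ (with a light modification for the middle one), each of length at most $t$, for a total of $3t$ rounds.

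\textbf{Phase~1 (broadcast).} Run $\executor$ unchanged on the broadcast message. By Theorem~\ref{t:broadcast} this finishes in at most $t$ rounds with $O(1)$-bit labels, and by Lemma~\ref{l:broadcast:properties}(3) each node $v$ ends Phase~1 knowing its own level $\ell(v)$ in $T_{\executor(G)}$ and the level of $\parent(v)$.

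\textbf{Phase~2 (convergecast of the maximum level).} The goal is to deliver $L=\max_{v}\ell(v)$ to the source, which I would do by processing the levels of $T_{\executor(G)}$ from deepest to shallowest. At the round devoted to level $\ell$, every node $u$ with $\ell(u)=\ell$ transmits $\max\{\ell,M_u\}$, where $M_u$ is the largest value $u$ has already received from its tree-children in earlier Phase~2 rounds. The collision-freeness argument---the main technical obstacle---rests on property~(e) in the proof of Lemma~\ref{l:broadcast:properties}: whenever $u$ is a feedback node of some $v\in\text{DOM}_s$, $v$ is the \emph{unique} neighbour of $u$ in $\text{DOM}_s$ (otherwise $u$ would have heard a collision in the first round of stage $s$), so two distinct feedback nodes chosen in the same stage cannot share a DOM-member neighbour, and every such parent hears cleanly from its own feedback child. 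Combined with Lemma~\ref{l:broadcast:properties}(1), which forces distinct tree-children of a common parent to lie at distinct levels, this yields a collision-free schedule for the feedback-round children; the children informed in the ``GO'' (third) round of a stage need a separate sub-round per level (several GO-round children of the same parent may share a level), costing only a constant factor. Since the deepest level is at most $t$, Phase~2 completes within $t$ rounds.

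\textbf{Phase~3 (broadcast of $L$).} The source re-runs $\executor$ with message $L$, so within at most $t$ further rounds every node learns $L$. At the end of the three phases each node knows its own level, its parent's level and $L$; by Lemma~\ref{l:broadcast:properties}(2) it can then compute a uniform upper bound on the round in which the original broadcast terminated, so the broadcast is acknowledged. The labels remain exactly those of $\executor$, hence of length $O(1)$. The one non-routine step is the Phase~2 collision argument outlined above; once that is in place, the remaining timing and synchronisation bookkeeping is straightforward.
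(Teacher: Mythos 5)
Your three-phase skeleton (broadcast, convergecast of the depth, re-broadcast) matches the paper's, and Phases~1 and~3 are fine. The gap is in Phase~2. You schedule \emph{every} node of level $\ell$ to transmit in the round devoted to level $\ell$, and you justify collision-freeness by claiming that Lemma~\ref{l:broadcast:properties}(1) ``forces distinct tree-children of a common parent to lie at distinct levels.'' That is a misreading: item~(1) guarantees the parent has \emph{at least} one child at each admissible level, not at most one. In fact a node $w\in\text{DOM}_s$ that transmits in the first round of stage $s$ typically informs \emph{many} neighbours simultaneously (all those for which $w$ is the unique transmitting neighbour in that round), and every one of them becomes a child of $w$ at the \emph{same} level. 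When all of them transmit in your level-$\ell$ round, $w$ hears a collision and learns nothing. Your proposed repair --- ``a separate sub-round per level, costing only a constant factor'' --- does not work: the number of same-level children of a single parent can be as large as $\Delta-1$, and with $O(1)$-bit labels there is no way to schedule them apart. (This is precisely why the paper's {\tt Size-learning} procedure, which \emph{does} convergecast over the whole tree, needs $O(\log\log\Delta)$-bit \textsc{index} labels and pays a $\log\Delta$ factor in time.)

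The fix the paper uses is to avoid a full convergecast altogether: the source does not need $\max_v \level(v)$ aggregated over the tree, only the value held by one deepest node. So the labeling marks, with two extra bits, a single root-to-leaf path $P$ ending at a node $v_P$ of maximum level (which exceeds $t-3$ by Lemma~\ref{l:broadcast:properties}(2)), and Phase~2 is a relay along $P$ only: $v_P$ sends $(t,\level(\parent(v_P)))$, and each marked node transmits in the round after it hears a message tagged with its own level. Since at most one node per level is marked, at most one node transmits per round and no collisions occur; the root receives $t$ within $t$ rounds and re-broadcasts it. You should adopt this single-marked-path relay (which does require adding $O(1)$ bits to the labels, contrary to your claim that the labels are unchanged) in place of your whole-tree convergecast.
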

\begin{proof}
	First, we use the labeling scheme of {\executor} for the \emph{standard} broadcast problem with $3$-bit labels.
	%Let $T_{\executor}$ be the broadcast tree of the communication graph $G$ corresponding to {\executor}.
	By Lemma~\ref{l:broadcast:properties}.(2), we can choose a path $P$ from the root to a leaf 
	in $T_{\executor(G)}$ with the last node $v_P$ of $P$ on the maximum level $l> t-3$ .
	Moreover, we add two extra bits to the labels of {\executor} in order to mark all nodes from $P$ and to mark the last node $v_P$ of $P$ (with the maximum level in $T_{\executor(G)}$). 
	
	Given the above described labels, the algorithm starts by an execution of {\executor} in order to broadcast the original message stored in the source node $s$. After reception of the broadcast message, $v_P$ knows that the execution of the broadcast algorithm is finished and it also knows the number of rounds $t$ 
	%of the execution is equal to the round number received together with the broadcast message. 
	of {\executor}.
	Then, in round $t+1$, $v_P$ sends a new message containing the pair which consists of $t$ and the level of $\parent(v_P)$ (see Lemma~\ref{l:broadcast:properties}.(3)).\footnote{In order to make it clear that this message is not sent during the execution of {\executor} but in the next phase of the algorithm, one can add an appropriate prefix in messages, denoting the parts of the considered algorithm. We will omit this issue in further description of algorithms.} 
	Then, each other marked node $v$ transmits a message in the round following the reception of the pair $(t,\level(v))$. The message of $v$ contains $t$ and $\level(\parent(v))$.  The root $s$ receives the value of $t$ in at most $t$ rounds.
	After reception of $t$, the root starts the second execution of {\executor} with the value of $t$ as the broadcast message.
	Thus, all nodes receive $t$ as the broadcast message and they know that the execution of {\execack} is finished in $3t$ rounds.
\end{proof}
%Let hight of a tree $T$ be denoted as $h(T)$.

\subsection{Encoding a message in labels}\label{ss:algorithmic:encoding}
Now, we make an observation regarding the length of labels sufficient to store (in a distributed way) an arbitrary message $M$ initially unknown to all nodes, and subsequently make the message $M$ known to all nodes.
%, where 
%$\executor$ is the broadcast algorithm from Theorem~\ref{t:broadcast}, using the labels with size $O(1)$
% labels with the asymptotical size of the labels of $A$.
\begin{lemma}\label{l:message:on:path}
	Let $\execack$ be the acknowledged broadcast algorithm which
	 %accomplishes broadcast in $O(D\log n+\log^2n)$ rounds using $O(1)$-bit labels as described in 
	 satisfies Corollary~\ref{c:ack:broadcast}.
	%from \cite{EllenG20}, designed to prove Theorem~\ref{t:broadcast}. 
	Then, an arbitrary message $M$ of size $m$ (initially unknown to nodes) can be made known to all nodes of graph $G$ in $O(t)$ rounds using $O(1+m/t)$-bit labels, where $t=O(D\log n+\log^2n)$ is the number of rounds of the execution of $\execack$ on $G$. 
\end{lemma}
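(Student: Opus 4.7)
The plan is to distribute $M$ as $\Theta(t)$ small chunks across a carefully chosen set of representatives in the broadcast tree $T_{\executor(G)}$, collect all chunks at the source $s$ via a pipelined convergecast along the path $P$ marked by $\execack$, and finally have $s$ rebroadcast the assembled message using $\executor$. The decisive structural fact is Lemma~\ref{l:broadcast:properties}: the tree has maximum level $l_k > t-3$, and by part~(1) at least one node sits at every level $\ell \equiv 1 \pmod 3$ in $[1, l_k]$, giving $\Omega(t)$ independent ``slots'' into which $M$ can be split.

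For the labeling, I would reuse the path $P = v_1, \ldots, v_k$ from $s=v_1$ to a maximum-level leaf $v_k$ already marked during the proof of Corollary~\ref{c:ack:broadcast}, with levels $l_1 < l_2 < \cdots < l_k$ (taking $l_1 = 0$ for the root by convention). For each level $\ell \equiv 1 \pmod 3$ with $\ell \in [1, l_k]$, I would designate one \emph{representative} $u_\ell$ at level $\ell$: either $v_i$ when $l_i = \ell$, or otherwise a child of the unique path node $v_i$ with $\ell \in (l_i, l_{i+1})$, whose existence at level $\ell$ is guaranteed by Lemma~\ref{l:broadcast:properties}.(1) applied to $(v_i, v_{i+1})$. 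There are $\Omega(l_k/3) = \Omega(t)$ representatives, so $M$ splits into chunks of length $O(m/t)$, each placed in the label of the corresponding $u_\ell$. The $O(1)$-bit $\execack$ label (which already distinguishes path from non-path nodes) is augmented with a single bit marking representatives, yielding total label length $O(1+m/t)$.

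The dissemination algorithm runs in three phases. \emph{Phase~1} executes $\execack$ in $3t$ rounds, after which every node knows $t$, $l_k$, its own level, and its parent's level. \emph{Phase~2} is a pipelined convergecast of $l_k = O(t)$ rounds: in round $\tau$, the unique node (if any) whose level equals $l_k + 1 - \tau$ transmits to its broadcast-tree parent the concatenation of its own chunk (if it is a representative) with all chunks it has already accumulated. A path node $v_i$ therefore first hears $v_{i+1}$ in round $l_k + 1 - l_{i+1}$, then the helper representatives of segment $(v_i,v_{i+1})$ in subsequent rounds, and finally speaks in round $l_k + 1 - l_i$; all chunks propagate up to $s$ by the end of Phase~2. \emph{Phase~3} has $s$ run $\executor$ with $M$ as the broadcast message, delivering $M$ to every node in $t$ more rounds. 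The total running time is $O(t)$.

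The main obstacle is verifying that Phase~2 is collision-free at every intended receiver. Path-node levels are pairwise distinct, and the chosen helper levels lie strictly inside the open intervals $(l_i, l_{i+1})$ and are hence disjoint from the path-node levels, so the level-to-round rule places at most one transmitter per round. At a path receiver $v_i$, the only broadcast-tree neighbours that ever transmit during Phase~2 are $v_{i+1}$ (in round $l_k+1-l_{i+1}$), the helper children in segment $(v_i, v_{i+1})$ (in rounds strictly between $l_k+1-l_{i+1}$ and $l_k+1-l_i$), and $v_{i-1}$ (in round $l_k+1-l_{i-1}$, strictly after $v_i$'s own round), all in pairwise distinct rounds; every other graph neighbour of $v_i$ is a non-representative, non-path node and stays silent throughout Phase~2. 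Consequently $v_i$ decodes each incoming transmission cleanly, the chunks reach $s$ in the order induced by representative level so $s$ can reconstruct $M$, and the claimed $O(t)$-round, $O(1+m/t)$-label bounds follow.
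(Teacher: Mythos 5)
Your proposal is correct and follows essentially the same route as the paper's proof: both identify that the path alone may have too few nodes because levels can be skipped, both invoke Lemma~\ref{l:broadcast:properties}.(1) to mark one representative child per missing level (yielding $\Omega(t)$ label slots of $O(1+m/t)$ bits each), both schedule the convergecast so that the node at level $\ell$ transmits in round indexed by $L-\ell$ (hence one transmitter per round and no collisions), and both finish by rebroadcasting from the root with $\executor$. The only cosmetic difference is that you select representatives exactly at levels $\equiv 1 \pmod 3$ while the paper marks one child per occupied level and then lower-bounds the count by $L/3$; the resulting bounds are identical.
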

\begin{proof}
	%	Let $T_{\executor}$ be the broadcast tree of the communication graph $G$ corresponding to the algorithm {\executor}.
	By Lemma~\ref{l:broadcast:properties}.(2) and Corollary~\ref{c:ack:broadcast}, we can choose a path $P$ from the root to a leaf 
	in $T_{\executor}$ with a leaf on the maximum level $L>t/3 -1$, where $t$ is the number of rounds of the execution of {\execack} on $G$.
	The tempting idea is to split the message $M$ into the labels of the path $P$ and collect these messages by transmissions from the end of $P$ backward in the direction of the root, similarly to the simple construction described in the proof of Corollary~\ref{c:ack:broadcast}.
	However, such a solution is insufficient, as the height of the broadcast tree can be much smaller that the number of levels of that tree (a node of $T_{\executor(G)}$ might have children of various levels).
	Therefore, we leverage the property from Lemma~\ref{l:broadcast:properties}.(1) and, for each node $v$ on the path $P$ and each level $k$ such that $v$ has at least one child of level $k$ in $T_{\executor(G)}$, we choose exactly one child of $v$ of level $k$.
	
	Then, we add one extra bit to the labels of {\execack} in order to mark all nodes from $P$ and their chosen children, as described above. %Moreover, for each neighbor $v$ and $\parent(v)$ from this paths with levels $i>j$ (i.e., the level of $v$ is equal to $i$ and the level of $j$ is equal to $j$), we mark one child of $\parent(v)$ on the level $k$ for each $k\in[j+1,i-1]$ such that $k\mod 3=1$.
	%	Note that such children of $\parent(v)$ exist by Lemma~\ref{l:broadcast:properties}.1.
	Apart from that, we use the labeling of $\execack$.

	Note that the number of marked nodes is at least $L/3>$ $t/9-1$, by Lemma~\ref{l:broadcast:properties}.(1).
	Thus, we can split the bits of $M$ between the marked nodes such that each node keeps at most $9\ceil{m/t}$ bits. 
	In order to make $M$ known to all nodes of the network, we use the following algorithm:
	\begin{enumerate}
		\item First,  $\execack$ is executed. At the end of this part each node $v$ knows the number of rounds of {\execack}, its own level $\level(v)$ and the level of its parent, as shown in Corollary~\ref{c:ack:broadcast}.
		
		\item Then, the message $M$ is gathered in the root $s$ of the broadcast tree.
		
		This part of the algorithm is initiated by the node $v_P$l. 
		Recall that $t$ is the number of rounds of the first part of the algorithm.
		Each marked node $v$ sends the tuple which consists of its part of the message $M$ and $\level(v)$ in the round $t+L-\level(v)+1$ together with all other messages received during this part of the algorithm.
		As at most one node from each level is marked, there is at most one transmitter in each round.
		Therefore all messages stored in the marked nodes will be collected in the root $s$ in round $t+L$, and the root can decode the whole message $M$ from these received messages.
		%After the end of {\execack}, $v_P$ sends its part of $M$ together with the current round number $t_0$ and the level $\level(\parent(v_P))$ of its parent $\parent(v_P)$. In the round $t_0+i$, the (only) marked node from $P$ on the level $i$ sends 

		\item Finally, the root $s$ broadcasts $M$ to all nodes of the graph using {\execack} again.
	\end{enumerate}
\end{proof}

%%%%%%%%%%%%%%%%%%%%% algorithmic tools end

\section{The Size Discovery Problem}\label{s:size}

In Section~\ref{ss:sd:labeling} we describe a labeling scheme for encoding the size of a graph and an algorithm which makes use of this encoding to first collect information about the size of the graph at some node and then broadcast this value using an efficient broadcast algorithm. Then, in Section~\ref{ss:fast}, we describe a much faster size discovery algorithm with time complexity $O(\log^2n)$. Finally, in Section~\ref{ss:lower:sd}, we show that this time complexity is asymptotically optimal for labels of size $O(\log\log\Delta)$.
\subsection{Efficient labeling scheme}\label{ss:sd:labeling}
%%%%%%%%%%%%%%%%%%%%%% labeling for size begin
%In this chapter 
We design a general labeling scheme of length $O(\log\log\Delta)$ and a size discovery algorithm using this scheme, based on the broadcast algorithm working with constant-length labels \cite{EllenG20}.
In Section~\ref{sss:subtree} we prove a general lemma which states that each tree with maximum degree $\Delta$ and size $n$ contains a subtree with maximum degree $O(\log \Delta)$ and size $\Omega(\log n)$. Then, in Section~\ref{sss:labeling:scheme}, we describe a labeling scheme which yields a solution of the size discovery problem using the result from Section~\ref{sss:subtree}. Finally, in Section~\ref{sss:algorithm},
we design an algorithm which solves the size discovery problem using the labeling scheme from Section~\ref{sss:labeling:scheme} in time $O(D\log n+\log^2n)$, where $D$ is the diameter of the graph.

%(possibly) of independ

\subsubsection{$\log\Delta$-degree subtrees of a tree}\label{sss:subtree}
We prove a general lemma stating that, given a tree $T$ with maximum degree $\Delta$ and $3$ bits  of ``local memory'' at each node, and given a binary string $M$ of length $\lceil \log (n+1)\rceil+2$, there exists a subtree $T'$ of $T$ with maximum degree $\lfloor \log \Delta\rfloor+1$,
such that $M$ can be split into the $3$-bit local memories of the nodes of $T'$. %in local memories of its nodes.
\begin{lemma}\label{l:labeling}
	Let $T=(V,E)$ be a tree with maximum degree $\Delta$, size $n$ and the root $r\in V$. Let $M$ be a binary string of length $\lceil \log (n+1)\rceil+1$. Then, there exists a subtree
	$T'$ of $T$ rooted at $r$ and an assignment of binary strings to nodes of $T'$ such that:
	\begin{enumerate}[(a)]
		\item the number of children of each node of $T'$ is at most $\lfloor \log \Delta\rfloor+1$,
		\item the string assigned to each node of $T'$ has length at most $3$,
		\item the string assigned to the root of $T'$ has length $2$,
	\end{enumerate}
	and the concatenation of all strings assigned to nodes of $T'$ is $M$.
\end{lemma}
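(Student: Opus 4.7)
The plan is to decompose the lemma into (i) an existence claim that $T$ contains a rooted subtree $T'$ with branching at most $b := \lfloor\log\Delta\rfloor + 1$ and $|T'| \ge (\lceil\log(n+1)\rceil + 2)/3$, and (ii) a routine bit-distribution step. For (ii), I would fix a canonical traversal of $T'$ starting at $r$ (say, a DFS), assign the first $2$ bits of $M$ to $r$, then feed the remaining bits three at a time to the successive nodes of the traversal, letting any trailing nodes receive the empty string. Because the capacity $2 + 3(|T'|-1) \ge \lceil\log(n+1)\rceil + 1 = |M|$, the process consumes $M$ exactly, yielding (a), (b), (c) and the concatenation condition.

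The substance of the argument is (i), which I would prove by induction on $n$ with the slightly stronger statement that every rooted tree of size $n$ and max degree at most $\Delta$ admits such a $T'$. The base case $n = 1$ is immediate. For the inductive step, let $u_1, \ldots, u_k$ be the children of $r$ in $T$ with subtree sizes $n_1 \ge n_2 \ge \cdots \ge n_k$ summing to $n-1$, where $k \le \Delta$. Applying the IH to each $T_{u_i}$ yields a subtree $T'_i \subseteq T_{u_i}$ with the claimed properties. I then form $T'$ by attaching $r$ to those $T'_i$ corresponding to the $b' := \min(k, b)$ largest values of $n_i$; the branching constraint at $r$ holds since $b' \le b$, and elsewhere by the IH.

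Writing $L(m) := \lceil \log(m+1) \rceil$, the size of $T'$ is bounded by
\[
|T'| \;\ge\; 1 + \sum_{i=1}^{b'}\frac{L(n_i)+2}{3} \;\ge\; 1 + \frac{2b'}{3} + \frac{L(n_1) + (b'-1)}{3} \;=\; b' + \frac{L(n_1)+2}{3},
\]
using $L(n_i) \ge 1$ for every child. Since $n_1 \ge (n-1)/k$ gives $L(n_1) \ge \log n - \log k \ge L(n) - 2 - \log k$, the target $|T'| \ge (L(n)+2)/3$ reduces to the inequality $3b' \ge \log k + 2$. In Case 1 ($k \le b$, so $b' = k$) this reads $3k \ge \log k + 2$, trivially true for $k \ge 1$. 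In Case 2 ($k > b$, so $b' = b$) it reads $3b \ge \log\Delta + 2$, which follows from $b \ge \log\Delta$ whenever $\Delta \ge 2$; the residual case $\Delta = 1$ is a path, for which $T' = T$ works directly.

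The hard part is precisely this final inequality: the whole argument hinges on $b$ being chosen just large enough that the extra branching available in $T'$ compensates, up to constants, for the worst-case loss of a factor $\Delta$ when the recursion passes from $T$ to its largest child subtree. The specific choice $b = \lfloor\log\Delta\rfloor + 1$ gives $3b \ge 3\log\Delta$, leaving comfortable slack against $\log\Delta + 2$, which is what closes the induction.
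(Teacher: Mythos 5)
Your proof is correct, but it is organized differently from the paper's. The paper proves the full statement directly by induction: it splits $M$ into contiguous blocks of length $\lceil\log(n_{v_i}+1)\rceil+1$ matched to the $\lfloor\log\Delta\rfloor+1$ largest child subtrees, recurses, and then places the overflow bits by exploiting invariant (c) — each child's root has so far received only $2$ bits, so the parent level can push one more bit into each child root and keep two for itself; the capacity count is the chain $\sum_k\lceil\log(n_{v_k}+1)\rceil+\lceil\log\Delta\rceil+3\ \ge\ \log\bigl(\sum_k n_{v_k}+1\bigr)+\log\Delta+2\ \ge\ \log(n+1)+2$, using $\prod(n_{v_k}+1)\ge\sum n_{v_k}+1$ and $\sum n_{v_k}\ge\frac{\lfloor\log\Delta\rfloor+1}{\Delta}n$. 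You instead isolate a purely combinatorial claim — a rooted subtree with branching at most $b=\lfloor\log\Delta\rfloor+1$ and at least $(\lceil\log(n+1)\rceil+2)/3$ nodes exists — and reduce the assignment to a greedy fill along a canonical traversal. Your inductive invariant is just a size bound, your key inequality is $3b'\ge\log k+2$, and the delicate "$2$ bits at the root so the parent can add a third" bookkeeping disappears entirely; this is a cleaner modularization, and the contiguity of blocks per subtree that the paper's version produces is not actually needed downstream (the labeling scheme fixes its own post-order anyway). Both arguments hinge on the same structural choice (the $b$ children with largest subtrees) and the same trade-off that branching $\log\Delta$ compensates for the factor-$\Delta$ loss when passing to the largest child. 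Two trivia: in your Case 2 the condition $k>b'\ge 1$ already forces $k\ge 2$ and hence $\Delta\ge 2$, so the residual case is vacuous (and $\Delta=1$ would mean a tree with at most two nodes, not a path); and condition (c) is best read as "at most $2$" — your construction satisfies it either way since $|M|\ge 2$.
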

\begin{proof}
	%\noindent Here we will present 
	In order to prove the lemma, we present a 
	recursive algorithm for assigning %information 
	bits of an arbitrary binary string $M$ of size  $\lceil \log (n+1)\rceil+1$ to the nodes of $T$ such that the assignment will satisfy the stated properties. Along with the construction, we present a proof of the lemma by induction on the size $n$ of the tree.
	%Once again 
	
	For the basic step of induction, assume that $n=1$.
	Then $r$ has no children, $\ceil{\log(n+1)}=1$ and therefore the length of ${M}$ is at most $2$ -- we can store the bits of $M$ in the root $r$. 
	
	For the inductive step, assume that the lemma holds for all trees with size smaller than some $n>1$.
	Let ${M}$ be a binary string with at most $\ceil{\log{(n+1)}}+1$ bits and let $T$ be a tree with $n$ nodes, the root $r$ and the maximum degree $\Delta$. Let $v_0, \dots, v_{\floor{\log{\Delta}} }$ be the children of $r$ in $T$ 
	with the largest subtrees.
	Let $n_{v_i}$ denote the size of the subtree of $T$ rooted in $v_i$. We will construct an assignment $\mathcal{A}$ of substrings of $M$ to nodes such that the conditions (a)--(c) of the lemma are satisfied and the subtree $T'$ of nodes with assigned substrings contains the root $r$ and some nodes from the subtrees rooted at $v_0, \dots, v_{\floor{\log{\Delta}} }$.
	\begin{comment}
	\begin{itemize}
	\item $r$ has at most $2$ information bits of $M$,
	\item every other node has at most $3$ information bits of $M$,
	\item the nodes with assigned bits of $M$ form a subtree of $T$ with maximum degree $\le\floor{\log{\Delta}}+1$.
	\end{itemize}
	\end{comment}
	Split ${M}=$ $M[0]M[1]\cdots M[\ceil{\log(n+1)}]$ into blocks $B_0, \dots, B_{\floor{\log{\Delta}}}$ such that each block $B_i$ except the last one consists of $\ceil{\log(n_{v_i}+1)}+1$ bits of $M$ with indices in the range
	$$\left[\sum_{k=0}^{i-1} \left(\ceil{\log{(n_{v_k}+1)}}+1\right), \sum_{k=0}^{i} \left(\ceil{\log{(n_{v_k}+1)}}+1\right) - 1\right]$$
	provided that $\sum_{k=0}^{i} \left(\ceil{\log{(n_{v_k}+1)}}+1\right) - 1\le \ceil{\log(n+1)}+1$. Otherwise, the block $B_i$ contains all remaining bits of $M$ with indices larger than or equal to $\sum_{k=0}^{i-1} \left(\ceil{\log{(n_{v_k}+1)}}+1\right)$.\footnote{Note that, according to the above definition, some number of last blocks might be empty.}
	%(Bits are numbered from $0$.) 
	%The last block $B_{\floor{\log{\Delta}} + 1}$ consists of all remaining bits.  
	Then we recursively assign bits from the block $B_i$ to the tree rooted in $v_i$. As $n_{v_i}<n$, we can use the inductive hypothesis: as the length of $B_i$ is not larger than $\ceil{\log(n_{v_i}+1)}+1$, there exist assignments of bits of $B_i$  for $i=0,\ldots,\floor{\log\Delta}$ satisfying the conditions (a)--(c) of the lemma.
	In particular, the roots $v_0,\ldots,v_{\floor{\log \Delta}}$ have at most $2$ bits assigned. 
	In the case that the number $\sum_{k=0}^{\floor{\log\Delta}} \left(\ceil{\log{(n_{v_k}+1)}}+1\right)$ of bits
	assigned to subtrees rooted at $v_0,\ldots,v_{\floor{\log\Delta}}$ is smaller than $\ceil{\log(n+1)}+1$, we assign the remaining suffix of $M$ such that:
	\begin{itemize}
		\item the $i$th bit of the remaining suffix is assigned to $v_i$ for $i\in[0,\floor{\log\Delta}]$ (note that $v_i$ has been assigned only $2$ bits before, by (c));
		\item {up to two last remaining} bits are stored in the root $r$.
	\end{itemize}
	%\textbf{Tutaj jakis wstep, ze to  poprawne}
	Now, it is sufficient to show that the number of assigned bits is at least $\ceil{\log{(n+1)}}+1$. 
	%Proof will be by induction on $n$. For trees with $n=1$ message \textbf{M} has at most 1 bit so we can encode it. 
	%For bigger trees, 
	If the degree of the root $r$ is at least $\floor{\log\Delta} + 1$, then all tress rooted in marked children can store cumulatively $\sum_{k=0}^{\floor{\log\Delta}}\ceil{\log{(n_{v_k}+1)}}$ bits by the induction hypothesis. As we choose nodes with the biggest subtrees, $\sum_{k=0}^{\floor{\log\Delta}}n_{v_k} \geq \frac{\floor{\log\Delta}+1}{\Delta} n$. 
	%From induction we also have that we each of children has allocated at most 2 bits from possible 3, so 
	Moreover, as described above, we can store additional $\ceil{\log\Delta} + 1$ bits in $v_0,\ldots,v_{\floor{\log \Delta}}$ and 2 bits in the root $r$. 
	%\newpage
	In total this gives 
	\begin{align*}	
	\sum_{k=0}^{\floor{\log\Delta}} \ceil{\log{(n_{v_k}+1)}}  + \ceil{\log\Delta} + 1 + 2
	&\geq 
	\sum_{k=0}^{\floor{\log\Delta}}\log{(n_{v_k}+1)} + \log\Delta + 2
	\\ &\geq 
	\log\left(\prod_{k=0}^{\floor{\log\Delta}} (n_{v_k} + 1) \right) + \log\Delta + 2
	\\ &\geq  
	\log\left(\left(\sum_{k=0}^{\floor{\log\Delta}} n_{v_k}\right) + 1 \right) + \log\Delta + 2
	\\ &\geq 
	\log\left( \frac{\floor{\log\Delta+1}}{\Delta} n + 1 \right) + \log\Delta + 2
	\\ &\geq 
	\log\left( \frac{n+1}{\Delta} \right) + \log\Delta + 2
	\\ &\geq 
	\log{(n+1)} + 2  \geq \ceil{\log{(n+1)}}+1
	\end{align*}
	bits stored in the whole subtree.
	
	\noindent If the root $r$ of the tree has $x < \floor{\log\Delta} + 1$ children then we have in total
	\begin{align*}	
	\sum_{k=0}^{x-1} \ceil{\log{(n_{v_k}+1)}}  + x  + 2
	&\geq 
	\sum_{k=0}^{x-1}\log{(n_{v_k}+1)}  + 2
	\\ &\geq 
	\log\left(\prod_{k=0}^{x-1} (n_{v_k} + 1) \right) + 2 
	\\ &\geq  
	\log\left(\left(\sum_{k=0}^{x-1} n_{v_k}\right) + 1 \right) + 2 
	\\ &\geq 
	\log\left( n-1 + 1 \right) + 2
	\\ &\geq 
%	\log\left( \frac{n+1}{\Delta} \right) + 2
	\log\left( n \right) + 2
	\\ &\geq 
	%\log{(n+1)} + 2  \geq 
	\ceil{\log{(n+1)}} + 1
	\end{align*}
	bits stored in the whole subtree. These inequalities conclude the proof of the inductive step and thus give the proof of the lemma.

	%\noindent\textit{Distribution of information bits}

\end{proof}

\subsubsection{Labeling scheme}\label{sss:labeling:scheme}
In this section we describe the labeling scheme \textsc{CompactLabels} which combines the so-called $\Delta$-learning primitive from \cite{GorainP18}, the construction of a broadcast tree with help of the algorithm {\executor} from Theorem~\ref{t:broadcast}, %\cite{EllenG20}, 
and the limited-degree subtree described in Lemma~\ref{l:labeling}.

The labels will consist of the \textsc{root} bit and three disjoint blocks: \textit{$\Delta$-block}, \textit{BroadcastTree-block} and \textit{Message-block}. %The first two parts introduced in \cite{GorainP18}. %by \textbf{CITATION}.

\noindent\textbf{$\Delta$-block and the \textsc{root} bit.}
\noindent First, we choose a node $r$ with the largest degree $\Delta$ in the graph $G$, and mark $r$ using the one-bit flag \textsc{root}. 
That is, $\textsc{root}_r=1$ and $\textsc{root}_v=0$ for each $v\neq r$.

Then, we describe $O(\log\log \Delta)$-bit strings called $\Delta$-blocks which will be used to learn the value of $\Delta$ by the node $r$.
The root $r$ is assigned the pair $(x, 0)$, where $x$ is the binary representation of the integer $\floor{\log\Delta} + 1$. This integer is the size of the binary encoding of $\Delta$. Then we choose $\floor{\log\Delta} + 1$ neighbors of $r$ and assign them consecutive natural numbers in the range $[1,\floor{\log\Delta} + 1]$. The $\Delta$-block of the $i$th chosen node is equal to the pair $(a_i,b_i)$ where $a_i$ is the binary representation of $i\in[\floor{\log\Delta}+1]$ of length $\floor{\log(\floor{\log\Delta} + 1)}+1$, with leading zeros,
 and $b_i$ is the $i$th bit of the binary representation of $\Delta$. 
 %and the number $i\in[\floor{\log\Delta}+1]$ is written on $\floor{\log(\floor{\log\Delta} + 1)}+1$ bits with leading zeros. 
The $\Delta$-blocks of other nodes are equal to the tuple $(0,0)$.
As the value of $\log\Delta$ can be encoded on $O(\log\log\Delta)$ bits, the length of $\Delta$-blocks is $O(\log\log\Delta)$.

\noindent\textbf{BroadcastTree-block.}
We apply Corollary~\ref{c:ack:broadcast} to construct the broadcast tree $T_{\executor}$ of the graph with root (source vertex) $r$ and assign $O(1)$-bit labels to nodes, used by the acknowledged broadcast algorithm {\execack}  working in $O(D\log n+\log^2n)$ rounds (Corollary~\ref{c:ack:broadcast}).

\noindent\textbf{Message-block.}
Let $M$ be the binary representation of  the size $n$ of the graph. The message-block is the concatenation of two blocks:  \textsc{index} and \textsc{message}.

We apply Lemma~\ref{l:labeling} in order to construct a subtree $T'$ of $T_{\executor}$ such that the degree of $T'$ is not larger than $\floor{\log\Delta}+1$ and each node of $T'$ is assigned a substring of $M$ of length at most $3$. 
{To each node $v$ of $T'$ apart from the root we assign a natural number $k_v$ such that the numbers assigned to the children of any node are consecutive integers starting from $0$.}
For a node $v$ of $T'$,  the block \textsc{index} of fixed length  $O(\log\log\Delta)$  is the binary representation of the integer $k_v\in[1,\floor{\log \Delta}]+1$ (with leading zeros), where $v$ is chosen as the child with number $k_v$ of its parent. The substring of length at most 3 assigned to each node of $T'$ according to Lemma~\ref{l:labeling} is the block \textsc{message} of this node.
The concatenation of these substrings of $M$ forms the string $M$ and substrings are assigned to nodes of $T'$ in post-order, i.e., the substring
assigned to a given node $v$ is situated in $M$ after the substrings assigned to the children of $v$ in $T'$, and substrings are assigned to the children $w$ in increasing order of $k_w$. 
The blocks \textsc{index} and \textsc{message} of nodes outside of tree $T'$ are the string $(0)$.

Conceptually, the label of a node is the concatenation of the root bit, the $\Delta$-block, the Broadcast Tree-block and the Message-block. In order to mark separations between the different blocks, we use the standard trick: the bit 1 is coded as 10, the bit 0 is coded as 01 and separations are coded as 00. This does not change the complexity of the label length, hence our labeling scheme has length
$O(\log\log \Delta)$.

\subsubsection{A simple size discovery algorithm}\label{sss:algorithm}

In this section we describe a simple size discovery algorithm using the above labeling scheme and working in time $O(D\log n+\log^2n)$. We will then show how to improve this algorithm to get the optimal time $O(\log^2n)$.

We first design the size discovery algorithm \textsc{AuxiliarySD},
which
% using the labeling scheme described in the previous section.
%	
%	The algorithm 
uses the labeling scheme described in the previous section. The algorithm
	\textsc{AuxiliarySD} is a composition of three procedures, {\tt $\Delta$-learning}, {\tt Ack-broadcast}, and {\tt Size-learning}, corresponding to the three blocks of the labels described above: \textit{$\Delta$-block}, \textit{BroadcastTree-block} and \textit{Message-block}.
	
	\noindent
	Procedure {\tt $\Delta$-learning}.\\
	This procedure lasts $\floor{\log\Delta}+1$ rounds. In the $i$th round, the node with $\Delta$-block $(a_i,b_i)$ for $i>0$ transmits the message with two bits: $0$ and $b_i$. In each round all nodes except the root $r$ (i.e., the node with the \textsc{root} bit equal to $1$) ignore messages with the value of the first bit equal to $0$ and $r$ stores consecutive bits $b_i$. The root $r$ also knows the value of $\floor{\log\Delta}+1$ stored in its label, so it will know the number of rounds of the $\Delta$-Learning procedure.
	
	\noindent
	Procedure {\tt Ack-broadcast}\\ 
	We execute algorithm {\execack} from 
	%Theorem~\ref{t:broadcast} \\
	Corollary~\ref{c:ack:broadcast}
	with the source vertex $r$, the labels from the BroadcastTree-block and the broadcast message equal to the binary string encoding the value of $\Delta$. 
	% as the broadcast message. 
	%
	As we use the algorithm solving the acknowledged broadcast problem, all nodes know the number of the last round of 
	this procedure.
	
	\noindent
	Procedure {\tt Size-learning}\\
	The goal of this procedure is to learn the binary representation $M$ of the size $n$ of the graph that is distributedly stored in nodes of the subtree $T'$ of the broadcast tree $T_{\executor}$, as described in the previous section. 
	
	%We will define $level(v)$ as a distance from $r$ to $v$ in $T$ and $L$ will be maximum level. 
	%TJ: POTRZEBUJEMY WIECEJ NIZ ACK - ZNAJOMOSCI LICZBY RUND ALGORYTMU.
	Let $L$ be the number of levels of the broadcast tree $T_{\executor}$.
	The procedure {\tt Size-learning}  is split into phases $1,\ldots,L$ such that the strings \textsc{message} stored in nodes from the level $L-i+1$ are transmitted to their parents in phase $i$, together with messages containing strings \textsc{message} received by those nodes from their subtrees. Each phase lasts $\floor{\log\Delta} + 1$ rounds. 
	At the beginning of the phase $i$, each node $v$ of $T'$ of level $L-i+1$ reconstructs its message from all messages received in previous rounds and its own string \textsc{message}. 
	%TEGO CHYBA NIE TRZEBA PISAC. BO TO O KODOWANIU CIAGOW I POWINNO BYC OCZYWISTE
	%They do this by concatenating all received messages with removed two last bits in order of receiving them, then all this removed bits in the same order and the end first four own information bits. 
	%If a node did not receive any messages in previous round then its reconstructed message consists only of part of his information bits. As this process is reverted Distribution of Information Bits we get reconstruct our message. 
	As the labeling scheme uses post-order encoding, the node $v$ first concatenates the messages of its children in $T'$ in the order of numbers assigned to them and then adds its own string \textsc{message} at the end of the string to get $M_v$, the part of $M$ stored in its subtree.
	%Then if the node is our starting node $r$ then it reverse encoding and receive back the message $M$. In other cases each node $v$ appends two remaining bits to its reconstructed message, creating $M_v$. 
	Assume that the node $v$ of level $L-i+1$ has \textsc{index} that is the binary representation of the integer $k_v\in[1,\floor{\log\Delta}+1]$. Then, $v$ transmits $M_v$ in the round $k_v$ of phase $i$.
	
\begin{lemma}\label{l:sd:first}
	The algorithm \textsc{AuxiliarySD} solves the size discovery problem using a labeling scheme of length $O(\log\log \Delta)$, 
	and it works 
	in time $O(T_{\executor}(n,D)\cdot \log \Delta)$, where $T_{\executor}(n,D)=O(D\log n+\log^2n)$ is the time of the broadcast algorithm $\executor$ from Theorem~\ref{t:broadcast}.
\end{lemma}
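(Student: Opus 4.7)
The plan is to verify three things in order: the $O(\log\log\Delta)$ length of the labeling, the correctness of the three procedures composing \textsc{AuxiliarySD}, and the round complexity. For the label length, I would sum the block contributions: one \textsc{root} bit; a $\Delta$-block of the form $(a_i,b_i)$ with $a_i$ encoding an integer in $[0,\floor{\log\Delta}+1]$ in $O(\log\log\Delta)$ bits and $b_i$ a single bit; a BroadcastTree-block of $O(1)$ bits supplied by Corollary~\ref{c:ack:broadcast}; and a Message-block whose \textsc{index} uses $O(\log\log\Delta)$ bits while \textsc{message} has at most three bits by Lemma~\ref{l:labeling}. The standard separator encoding preserves the $O(\log\log\Delta)$ asymptotic.

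For correctness of the first two procedures, \texttt{$\Delta$-learning} works because in round $i$ the unique neighbor of $r$ with $a_i=i$ is the only transmitter (the remaining chosen neighbors carry $a_j\neq i$ and all other nodes carry $(0,0)$), so $r$ reads off $b_i$ and, after the $\floor{\log\Delta}+1$ rounds announced by its own label, reconstructs the binary encoding of $\Delta$. \texttt{Ack-broadcast} is a direct invocation of $\execack$ with source $r$ and broadcast value $\Delta$; by Corollary~\ref{c:ack:broadcast}, every node subsequently knows $\Delta$, the total number $L$ of levels of $T_{\executor}$, and its own and its parent's levels.

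The heart of the proof is \texttt{Size-learning}, which I would analyze by induction on the phase index $i$. In phase $i$, each $v\in T'$ with $\level(v)=L-i+1$ transmits in round $k_v$ the string $M_v$ formed by post-order concatenation (in the order dictated by Lemma~\ref{l:labeling}) of the strings $M_w$ previously received from its children $w\in T'$ (all of which lie at levels $>L-i+1$ and were therefore processed in earlier phases), followed by $v$'s own \textsc{message}. Two collision scenarios must be ruled out at the intended receiver $p_1=\parent(v_1)$. Distinct children of $p_1$ in $T'$ carry distinct $k$-values, separating trivially. The delicate case is two nodes $v_1,v_2\in T'$ of level $L-i+1$ with different parents $p_1\neq p_2$ but identical $k_{v_1}=k_{v_2}$: if $v_2$ were a neighbor of $p_1$, then in round $L-i+1$ of the underlying \executor{} execution both $p_1$ and $p_2$ (each a transmitter in that round, since each delivered the broadcast to a level-$(L-i+1)$ child) would be transmitting neighbors of $v_2$, colliding at $v_2$ and contradicting $\level(v_2)=L-i+1$. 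Hence $v_2\notin\mathcal{N}(p_1)$ and $p_1$ receives $v_1$'s message uncorrupted. Inductively, $r$ holds $M$ after phase $L$; a closing invocation of $\execack$ with broadcast value $n$ disseminates the answer to the whole network. I expect this collision-avoidance argument to be the main obstacle, since the $k$-assignment is only locally unique within $T'$ and global non-collision relies on the structural property of $T_{\executor}$ just described.

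For the round complexity, \texttt{$\Delta$-learning} costs $O(\log\Delta)$ rounds, \texttt{Ack-broadcast} costs $O(T_{\executor}(n,D))$ rounds by Corollary~\ref{c:ack:broadcast}, and \texttt{Size-learning} consists of $L=O(T_{\executor}(n,D))$ phases of $\floor{\log\Delta}+1$ rounds each, for a total of $O(T_{\executor}(n,D)\cdot\log\Delta)$; the final broadcast of $n$ adds another $O(T_{\executor}(n,D))$ rounds. Summing yields the claimed $O(T_{\executor}(n,D)\cdot\log\Delta)$ bound.
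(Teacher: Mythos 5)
Your proposal is correct and follows essentially the same route as the paper: the decisive step in both is the observation that two same-level, same-index nodes with distinct parents cannot be adjacent to each other's parents, since both parents transmit in the round equal to that level in the underlying \executor{} execution and an extra adjacency would have caused a collision contradicting the level definition. The remaining bookkeeping (label-length accounting, per-procedure round counts, and the concluding broadcast of $n$) matches the paper's proof.
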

\begin{proof}	
	Procedure {\tt $\Delta$-learning} works in time $O(\log \Delta)$. Time complexity of Procedure {\tt Ack-broadcast} is $O(T_{\executor}(n,D))$ by Corollary~\ref{c:ack:broadcast}. Procedure {\tt Size-learning} takes $O(\log\Delta)$ rounds for each round of  {\execack}. Thus, the claimed time complexity of \textsc{AuxiliarySD} follows from Theorem~\ref{t:broadcast} and Corollary~\ref{c:ack:broadcast}.
	The $O(\log\log\Delta)$ bound on the length of the labeling scheme follows from Corollary~\ref{c:ack:broadcast} and Lemma~\ref{l:labeling}.
	
	In order to prove correctness, it remains to show that all messages of nodes of $T'$ will be successfully delivered to their parents during Procedure {\tt Size-learning}.
	As each node $v$ has (at most) one child with the number $j\in[1,\floor{\log\Delta}+1]$ in its Message-block, children of $v$ will transmit in different rounds of the algorithm. Nevertheless, a child $v'$ of $v$ might collide with a child $u'$ of another node $u$ during an execution of Procedure {\tt Size-learning}, provided that $\level(u')=\level(v')$ and $k_{u'}=k_{v'}$.
	However, as $\level(u')=\level(v')$, the nodes $u'$ and $v'$ receive the broadcast message from their parents $u$, $v$ in the same round $l=\level(u')$. Moreover, their parents $u$, $v$ are different since $k_{u'}=k_{v'}$ and each node might have at most one child with a given value of \textsc{index}. Thus, reception of messages from $u$, $v$ by $u'$ and $v'$, respectively, in the round $l$ implies that there is no edge $(u,v')$ nor $(v,u')$ in the graph. This fact in turn contradicts our assumption that the messages of $u'$ and $v'$ cause a collision preventing delivery of their messages to the parents $u$ and $v$.
	%In the $j$th round of phase $i$ there will be at most one node $v$ with $level(v)=L-i+1$ and $k_v = j$. In that round that node $v$ will transmit its message $M_v$.
\end{proof}

%\begin{comment}
Now, we combine Lemma~\ref{l:sd:first} with Lemma~\ref{l:message:on:path} to 
construct an improved version of \textsc{AuxiliarySD}, called \textsc{GeneralSD}.
The goal is to
get rid of the additional $\log\Delta$ multiplicative factor in the time complexity of a size discovery algorithm based on the broadcast algorithm {\executor}.

	Given a graph $G$, we first compute the number $t_G$ of rounds of the execution of {\executor} on $G$. If $t_G<\log n$, then we use the labeling scheme and the algorithm from Lemma~\ref{l:sd:first}.
As the number of levels of the tree $T_{\executor}$ is not larger than the number of rounds of the algorithm,
the number of rounds of the size discovery algorithm is $O(t_G \log\Delta)=$ $O(\log^2n)$.

If $t_G\ge \log n$, we use the constant-length labeling scheme and the broadcasting algorithm $\execack$ from Lemma~\ref{l:message:on:path}, with the binary representation of the size $n$ of the graph as the message $M$.
% of size $m=O(\log n)$. 
Thus, we obtain an algorithm which solves the size discovery problem in time $O(D\log n+\log^2n)$ using a labeling scheme of length $O\left(1+\frac{\log n}{t_G}\right)=$ $O(1)$.

Finally, in order to make the nodes of the graph aware of the chosen variant of the algorithm, we add one bit to all labels which contains the information whether or not $t_G<\log n$. Given the value of this bit, the nodes work according to instructions of the former or the latter algorithm described above. Thus, we obtain the following result.

\begin{theorem}\label{t:sz:first}
	The algorithm \textsc{GeneralSD} solves the size discovery problem using a labeling scheme of length $O(\log\log \Delta)$, and works 
	%in $O\left((D\log n+\log^2n)\log\Delta\right)$ rounds.
	in time $O(D\log n+\log^2n)$.
\end{theorem}
%\begin{proof}
%\end{proof}
%\end{comment}

%%%%%%%%%%%%%%%%%%%%%%%%%%%%%%%%%%%%%%%%%%%%%%% basic algorithm for size end

%%%%%%%%%%%%%%%%%%%%%%%%%%%%%%%%% fast algorithm begin

\subsection{Fast algorithm for size discovery}\label{ss:fast}

%In this chapter we further study round complexity of the size discovery problem. 

Now, we design the algorithm \textsc{FastSD}, which solves the size discovery problem in time $O(\log^2n)$ using a labeling scheme with asymptotically optimal length $O(\log\log\Delta)$. Actually, if the diameter of the graph is $\Omega(\log n)$, constant-length labels are sufficient.

In Section~\ref{sss:multi}, we generalize the recent labeling schemes for broadcast to the multi-source broadcast problem. Then, in Section~\ref{sss:fast:idea}, we give the idea of our new size discovery algorithm, using the multi-source broadcast algorithm. Finally, in Section~\ref{sss:fast:details}, we give details of our algorithm with proofs of its correctness and efficiency.

\subsubsection{Multi-source broadcast}\label{sss:multi}
For the purpose of faster size discovery, we need a generalization of the result from \cite{EllenG20} regarding the multi-source broadcast problem.
Let $G=(V,E)$ be a graph  and let a non-empty set $S\subset V$
be the set of \emph{sources}. Assuming that all nodes from $S$ know the broadcast message $M$,
the goal of \emph{multi-source broadcast} is to deliver the message $M$ to all nodes of $G$.
Let the \emph{distance} from a node $v\in V$ to $S\subset V$ be the minimum of distances 
from $v$ to $s$ in $G$, over all $s\in S$. Moreover, let the diameter $D_S$ of $G$ with respect to $S$ be the maximum of distances from $v$ to $S$, over all $v\in V$.

We construct the multi-source broadcast algorithm \textsc{MBroadcast} through a simple modification of the 
broadcast algorithm $\executor$  from \cite{EllenG20}.
The labeling scheme for the (single-source) broadcast algorithm from \cite{EllenG20} is based on the probabilistic broadcast algorithm from \cite{Bar-YehudaGI92}. As one can see in Section~6.3 of \cite{EllenG20} (Lemmas~9 and 10), the probabilistic analysis depends merely on the length of a shortest path from a node which knows the broadcast message initially to a considered node $v$. 
{More precisely, the only differences with respect to {\executor} concern the definition of the sets $\text{FRONTIER}_1$ and $\text{DOM}_1$, i.e., the set of uninformed nodes that have an informed neighbor before round $1$ and the set of informed nodes that is a minimal dominating set of $\text{FRONTIER}_1$. As there is exactly one informed node $s$ at the beginning in the broadcast problem, we have $\text{DOM}_1=\{s\}$ and $\text{FRONTIER}_1$ is equal to the set of neighbors of $s$, in the labeling scheme used by {\executor}. For the multi-source broadcast problem, we 
	\begin{itemize}
		\item 
		set $\text{FRONTIER}_1=$ $\{v\,|\, (s,v)\in E\text{ and } v\not\in S\}$, 
		\item
		choose a subset of $S$ that is a minimum dominating set of $\text{FRONTIER}_1$ and let this chosen subset of $S$ be $\text{DOM}_1$.   
	\end{itemize}
}
Then, the analysis from \cite{EllenG20} works also for the multi-source broadcast problem and it implies a multi-source broadcast algorithm \textsc{MBroadcast} using a constant-length labeling scheme and working in time $O(D_S\log n+\log^2n)$.

\begin{corollary}\label{c:multi:broadcast} %\cite{EllenG20}
	Algorithm \textsc{MBroadcast} solves the multi-source broadcast problem using a labeling scheme of length $O(1)$, and works in time 
		$O(D_S\log n+\log^2n)$, where $S$ is the set of sources.
\end{corollary}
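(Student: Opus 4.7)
The plan is to follow precisely the outline sketched in the paragraphs preceding the corollary: use the labeling scheme and algorithm \textsc{\executor} of \cite{EllenG20} essentially unchanged, altering only the initialization so that the multiple source nodes simultaneously act as the seed of the dominating sequence $\text{DOM}_1, \text{DOM}_2, \ldots$ employed by the algorithm.

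First, I would recall the structure of \textsc{\executor} as recounted in the proof of Lemma~\ref{l:broadcast:properties}: in each stage $i$ the set $\text{DOM}_i$ of informed ``dominators'' transmits in the first round, newly informed nodes respond in the feedback round, and a final round accelerates the progress. The correctness invariants (a)--(f) of Lemma~\ref{l:broadcast:properties} hinge on having $\text{DOM}_1$ well-defined: a set of \emph{already informed} nodes that minimally dominates the ``frontier'' $\text{FRONTIER}_1$ of uninformed nodes adjacent to informed ones. For single-source broadcast this forces $\text{DOM}_1=\{s\}$, but for the multi-source version I would simply redefine
\[
\text{FRONTIER}_1 = \{\, v \in V \setminus S \mid \exists s \in S,\ (s,v)\in E\,\},
\]
and let $\text{DOM}_1 \subseteq S$ be any minimal subset of $S$ dominating $\text{FRONTIER}_1$. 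Since every element of $S$ initially knows $M$, all invariants (a)--(f) remain valid at the start of stage~$1$, and the subsequent stages proceed exactly as in \textsc{\executor}. The labeling scheme is identical (still $O(1)$ bits per node, with the \textsc{join}/\textsc{stay}/\textsc{go} bits and a distinguished source flag now marking each $s \in S$ and the element that belongs to the initial $\text{DOM}_1$).

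For the time bound, I would appeal to the analysis in Section~6.3 of \cite{EllenG20} (specifically Lemmas~9 and~10 cited in the paragraph above). As noted there, the probabilistic argument underlying the derandomized labeling depends only on the length of a shortest path from \emph{some} initially informed node to the node $v$ under consideration. In the single-source case this length is at most $D$; in the multi-source case it is exactly the distance from $v$ to $S$, which is at most $D_S$. Substituting $D_S$ for $D$ throughout the analysis yields the claimed $O(D_S \log n + \log^2 n)$ time bound for delivering $M$ to every node of $G$.

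The only genuine sticking point, and the one I would treat most carefully, is verifying that the minimality of $\text{DOM}_1$ inside $S$ still guarantees property (e) of the proof of Lemma~\ref{l:broadcast:properties}, namely that every $v \in \text{DOM}_1$ has an uninformed neighbor uniquely dominated by it and so successfully delivers $M$ in the first round. This follows directly from choosing $\text{DOM}_1 \subseteq S$ as a \emph{minimal} (not necessarily minimum) dominating set of $\text{FRONTIER}_1$: removing any $v \in \text{DOM}_1$ would leave some node in $\text{FRONTIER}_1$ undominated, and that node is then dominated solely by $v$ within $\text{DOM}_1$. Once this base case is in hand, the inductive step of the invariants (a)--(f) is unchanged from the original argument, and the corollary follows.
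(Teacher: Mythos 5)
Your proposal matches the paper's argument essentially verbatim: the paper likewise obtains \textsc{MBroadcast} by redefining only $\text{FRONTIER}_1$ and $\text{DOM}_1$ (taking $\text{DOM}_1$ as a dominating subset of $S$), and then invokes the analysis of Section~6.3 of \cite{EllenG20}, which depends only on the distance to the initially informed set, to replace $D$ by $D_S$. Your additional remark that \emph{minimality} (rather than minimum cardinality) of $\text{DOM}_1$ is what secures property (e) is correct and consistent with the invariants used in the proof of Lemma~\ref{l:broadcast:properties}.
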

%\begin{proof}
%\end{proof}

Given the above multi-source broadcast algorithm, we can describe 
%the idea and details of 
our faster size discovery algorithm. For this purpose we introduce a few auxiliary notions.

Let $T_{\text{BFS}}$ be a BFS tree of the graph, where the root of the tree is an arbitrary node $s$. Moreover, let $V_i$ be the set of nodes at distance $i$ from the root, called the layer $i$ of $T_{\text{BFS}}$. Thus, in particular, $V_0=\{s\}$.

\subsubsection{The idea of the $O(\log^2n)$-time size discovery algorithm}\label{sss:fast:idea}
If the diameter $D$ of the graph $G$ is $O(\log n)$, we simply apply \textsc{GeneralSD} which gives the desired $O(\log^2n)$ time bound. 
%
%In order to design a size discovery algorithm working in $O(\log^2n)$ time, 
Otherwise,
we conceptually split the graph $G$ into subgraphs of small diameter and then distribute information about the value of $n$ in these subgraphs separately. Still, we have to assure that collisions caused by edges connecting different subgraphs do not prevent successful execution of our subroutines in the considered subgraphs.  
%
%For clarity of exposition, assume that $\log n$ is a natural number.
%
{For brevity, we introduce the notation $\zet=\ceil{\log (n+1)}$}, i.e., $\zet$ is the length of the binary representation of the natural number $n$.
%
%Then, we 
{We} say that a layer $V_i$ is \emph{green} iff $\lfloor i / \zet \rfloor$ is an even number; i.e., the layers $[j\zet,(j+1)\zet - 1]$ are green for each even number $j\ge 0$.
The nodes from layers $[j\zet,(j+1)\zet - 1]$ for even $j$ form the $j$th \emph{stripe} $X_j$.
Thus, we have the 0th, 2nd, 4th stripe and so on. Moreover, two consecutive stripes $X_j$ and $X_{j+2}$ are ``separated'' by nodes from the layers $V_{(j+1)\zet}, V_{(j+1)\zet+1}, \ldots,V_{(j+2)\zet -1}$ which do not belong to any stripe.
{Nodes from each stripe are called \emph{green}. Moreover, nodes from the last layer $V_{(j+1)\zet - 1}$ of the $j$th stripe for each even $j$ are called \emph{super-green}.}
%[ TRZEBA NAPISAC, ZE OSTATNI NIEPELNY STRIPE NIE BEDZIE ROZWAZANY JAKO STRIPE ]

%A MOZE $j$ zamienic na $2j-1$ ???
%\tj{a moze stripes ponumerowac 1, 2, 3, 4...? A nie 0, 2, 4, 6, ...?}

Our algorithm {\fastsd} (an abbreviation of Fast Size Discovery) either executes \textsc{GeneralSD} (if $D=O(\log n)$) with appropriate labels or it consists of two stages:
\begin{itemize}
	\item[] \textbf{Stage~1}: In each stripe separately and in parallel, we execute an algorithm which ensures that, at the end of the stage, all super-green nodes from the stripe (i.e., nodes from the last layer $(j+1)\zet+1$ of the stripe $j$) learn the value of $n$ at the end of the stage. We perform this task in two phases. For the first phase, we choose a minimal set $U_j$ of nodes from the first layer of the stripe (called a minimal BFS-cover) such that each super-green node in the stripe $j$ is reachable from this set $U_j$, i.e., there is a BFS-path from a node of $U_j$ to this node. Then, for each element of $u\in U_j$ we choose a path of length $\zet$ such that it is a shortest path from $u$ to the last layer $(j+1)\zet -1$ of the stripe $j$. Moreover, we guarantee that those paths are conflict-free, i.e., there are no edges connecting nodes on different layers of different paths from the chosen family of paths. For each such path, the value of $n$ is then encoded in $1$-bit parts along the nodes of the path. In Phase~1 we ``collect'' these $1$-bit parts in appropriate nodes from the BFS-cover $U_j$ and, in Phase~2, we use the multi-source broadcast algorithm \textsc{MBroadcast} with $U_j$ as the the set of sources and $D=\log n$ in order to broadcast the size $n$ to super-green nodes (i.e., the nodes on the layer $(j+1)\zet -1$). As the consecutive stripes are ``separated'' by $\log n$ layers, we can execute Phase~1 separately in each stripe without the risk that interferences between nodes from different stripes cause any problem.

	\item[] \textbf{Stage~2}: Execute the multi-source broadcast algorithm \textsc{MBroadcast} (see Corollary~\ref{c:multi:broadcast}) with $S$ equal to the set of all super-green nodes, i.e., $S$ composed of the nodes from the layers $(j+1)\zet-1$, where $j$ is even and $(j+1)\zet- 1$ is not larger than the height of the BFS tree and $D_S=2\zet$. The broadcast message in the execution of \textsc{MBroadcast} is equal to the binary encoding of $n$, known to the nodes from $S$ after Stage~1. As each node of the graph is at distance $\le 2\zet$ from some super-green node, all nodes learn the value of $n$ in Stage~2.
\end{itemize}

\subsubsection{Details of algorithm \textsc{FastSD}}\label{sss:fast:details}
We say that a path $P=(x_1,\ldots,x_k)$ in a graph $G$ with a fixed source node $s$ is a \emph{BFS-path} if, for each consecutive nodes $x_i$ and $x_{i+1}$ on $P$, the layer of $x_{i+1}$ is {by one} larger than the layer of $x_i$. In other words, each edge of the path increases the distance from the source node $s$ by one. Thus, in particular, we do not use edges connecting nodes from the same layer. We say that $v$ is \emph{BFS-reachable} from $u$ if there exists a BFS-path from $u$ to $v$.

In order to implement Stage~1 of \fastsd, we first build a \emph{BFS-cover} %of super-green nodes 
of stripe $X_j$
by the nodes from the first layer $V_{j\log n}$ of the stripe, for each even $j$ (see Figure~\ref{f:stripe}). 

\begin{definition}[BFS-cover]
	For an even natural number $j$, the set of nodes $U_j\subseteq V_{j\log n}$ is a BFS-cover of the stripe $X_j=\bigcup_{i=j\zet}^{(j+1)\zet-1}V_i$ if, for each %(super-green) 
	node $v\in V_{(j+1)\zet-1}$, there exists a BFS-path $u_{0}, u_{1}, \ldots, u_{\zet-1}=v$ such that $u_i\in V_{j\zet+i}$ for each $i\in[0,\zet-1]$, and $u_0\in U_j$. In other words, the path $u_{0},  \ldots, u_{\zet-1}$ starts at some node from the given BFS-cover $U_j$ and each edge of the path goes to the layer with larger index.
\end{definition}
We say that a BFS-cover $U_j$ of the stripe $X_j$ is a \emph{minimal BFS-cover} of the stripe $X_j$ if no proper subset $U'$ of $U_j$ is a BFS-cover of $X_j$. A set of BFS-paths $P_1,\ldots,P_k$ %$\mathcal{P}$ %in a graph 
is \emph{conflict-free} 
%if there is no edge $(u,v)$ in the graph such that $u$ and $v$ belong to different paths from $\mathcal{P}$. 
if, for each $i\neq j$, there is no edge $(x_i,x_j)$ in the graph such that $x_i\in P_i$, $x_j\in P_j$, $x_i$ and $x_j$ belong to different layers.

\begin{lemma}\label{l:cover:paths}
	Let $U_j=\{u_1,\ldots,u_k\}$ be a minimal BFS-cover of the stripe $X_j$. Then, there exists a set of \emph{conflict-free} BFS-paths $\{P_1,\ldots,P_k\}$ such that, for each $i\in[k]$:
	\begin{itemize}
		\item $P_i$ starts at $u_i$,
		\item the final node of $P_i$ is a super-green node of the stripe $X_j$,
		%for some even $j$),
%		\item
% 	,
	\end{itemize}
		
\end{lemma}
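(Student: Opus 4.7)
The plan is to exploit the minimality of $U_j$ to assign, to each cover node $u_i$, a ``private'' super-green target $v_i$ that is BFS-reachable from $u_i$ but from no other member of $U_j$. Any BFS-path $P_i$ from $u_i$ to $v_i$ can then serve as the $i$th path, and I will argue that any such choice of the family $\{P_1,\ldots,P_k\}$ is automatically conflict-free via a splicing argument.

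First I would construct the targets. Because $U_j$ is a \emph{minimal} BFS-cover, for each $i$ the set $U_j \setminus \{u_i\}$ fails to be a BFS-cover of $X_j$; by the definition of BFS-cover, this means there is a super-green node $v_i \in V_{(j+1)\zet - 1}$ to which no BFS-path originates from any $u_{i'}$ with $i' \neq i$. Since $U_j$ itself is a BFS-cover, $v_i$ must be BFS-reachable from some element of $U_j$, and that element can only be $u_i$. For each $i$, fix a BFS-path $P_i = (a_0^i, a_1^i, \ldots, a_{\zet-1}^i)$ from $u_i = a_0^i$ to $v_i = a_{\zet-1}^i$; by definition of a BFS-path, $a_r^i \in V_{j\zet + r}$, so $P_i$ visits each layer of the stripe exactly once.

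For the conflict-free property, I would argue by contradiction. Suppose there is an edge $(x,y) \in E$ with $x \in P_i$, $y \in P_{i'}$, $i \neq i'$, and $x, y$ in different layers. Endpoints of any edge of $G$ lie in BFS-layers whose indices differ by at most one, so after swapping the roles of $i$ and $i'$ if necessary we have $x = a_r^i$ and $y = a_{r+1}^{i'}$ for some $r \in [0, \zet - 2]$. Now splice the two paths: follow $P_i$ from $u_i$ to $a_r^i = x$, cross the edge $(x,y)$ to $a_{r+1}^{i'}$, and then follow the remainder of $P_{i'}$ to $v_{i'}$. Each consecutive pair on the resulting walk is an edge of $G$, and the layer index increases by exactly one at every step, so this walk is a BFS-path from $u_i$ to $v_{i'}$. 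This contradicts the privacy of $v_{i'}$, which was chosen to admit a BFS-path only from $u_{i'}$. Hence no such conflicting edge exists, and $\{P_1,\ldots,P_k\}$ is conflict-free.

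The main subtlety is making the first step airtight: the definition of BFS-cover speaks only about nodes in the top layer $V_{(j+1)\zet - 1}$ of the stripe, which is exactly what guarantees that the witness $v_i$ extracted from minimality is actually super-green and that the fixed BFS-path from $u_i$ to $v_i$ has length exactly $\zet - 1$ and threads through each intermediate layer in order. Once this layer alignment is pinned down, the splicing step becomes purely combinatorial and goes through as above.
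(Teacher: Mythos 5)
Your proof is correct and follows essentially the same route as the paper's: minimality of $U_j$ yields a private super-green witness for each cover node (the paper calls these ``witnesses''), and conflict-freeness is established by the same splicing argument, using the fact that any edge of $G$ joins nodes in BFS-layers differing by at most one to rewire a conflicting edge into a forbidden BFS-path to another node's witness. No gaps.
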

\begin{proof}
	%For the sake of contradiction assume that there is no such path for a node $u_l\in U_j$. 
	%Then, the set $U_j\setminus \{u_l\}$ is still a BFS-cover of the stripe $j$.
	As $U_j$ is a minimal BFS-cover, the following condition holds for each $l\in[k]$:
	there exists a (supergreen) node $y_l\in V_{(j+1)\zet-1}$ in the stripe $X_j$, such that $u_l$ and $y_l$ are connected by a BFS-path and there is no BFS-path from $u_i$ to $y_l$ for each $i\neq l$. Indeed, if this property does not hold for some $l\in[k]$, then $U_j\setminus\{u_l\}$ is also a BFS-cover which would contradict minimality of $U_j$. A node $y_l$ satisfying the above stated property will be called a \emph{witness} of $u_l$.
	
	Given the above property, we choose the paths $P_1,\ldots,P_k$ such that  $P_l$ is a BFS-path connecting $u_l$ with its witness $y_l$ and therefore there is no BFS-path from any other element $u_i\in U_j$ to $y_l$, for each $i\in[k]$, $i\neq l$.
	We claim that the paths $P_1,\ldots,P_k$ satisfy the properties stated in the lemma.
	For contradiction, assume that there exists an edge $(x_i,x_l)$ such that $x_i$ belongs to $P_i$, $x_l$ belongs to $P_l$, $i\neq l$, $x_i$ and $x_l$ are located in different layers. As the layers of nodes are determined by their depth in the BFS-tree $T_{\text{BFS}}$,
	each edge of the graph $G$ connects either nodes from consecutive layers or from the same layer.
	Thus, without loss of generality we can assume that $\layer(x_i)=\layer(x_l)+1$. This fact in turn implies that there exists a BFS-path from $u_l$ to $y_i$, the witness of $u_i$. Indeed, such a path contains edges of $P_l$ until $x_l$, then the edge $(x_l,x_i)$ followed by the edges of $P_i$ starting from $x_i$. This contradicts our choice of $y_1,\ldots,y_k$ as witnesses of $u_1,\ldots,u_k$. Thus the chosen set of paths $\{P_1,\ldots,P_k\}$ is conflict-free and satisfies the properties in the lemma.
\end{proof}

Let $M=M[1]M[2]\cdots M[\zet]$ be the binary representation of $n$.
In order to facilitate implementation of Stages~1 and 2 of \textsc{FastSD}, we construct labels of nodes in a given (green) stripe $X$ in the following way (see Figure~\ref{f:stripe}):
\begin{enumerate}
	\item A minimal BFS-cover $U=\{u_1,\ldots,u_k\}$ of the stripe $X$ is chosen, along with the set of conflict-free paths $\mathcal{P}=\{P_1,\ldots,P_k\}$ of length $\zet$, such that $P_i$ starts at $u_i$ and ends at {a node in} the last layer of $X$ (i.e., at a supergreen node).
	\item The set $X_{\text{BFS}}\subseteq X$ is determined such that $x\in X_{\text{BFS}}$ if $x$ is BFS-reachable from some element of the BFS-cover $U$. (Observe that, although each node from the last layer of a stripe is BFS-reachable from $U$ by the definition of a BFS-cover, it might be the case that nodes on smaller layers of the stripe are not reachable from $U$.)
	\item The label of each node $v$ is a concatenation of the following strings:
	\begin{enumerate}
		\item a 4-bit string composed of flags $\textsc{reach}_v$, $\textsc{super-green}_v$, $\textsc{cover}_v$, $\textsc{paths}_v$ indicating whether $v$ is green and it belongs to $X_{\text{BFS}}$ of its stripe $X$, whether $v$ is super-green, whether $v$ belongs to the chosen BFS-cover $U$ of its stripe, and whether $v$ belongs to one of the chosen cover-free paths $\mathcal{P}=\{P_1,\ldots,P_k\}$ 
		of its stripe, respectively;
		\item $M_v$: a one-bit string defined as follows. If $v$ is the $i$th node of a path $P_l$ from the above set of conflict-free paths, then $M_v$ is equal to $M[i]$; otherwise, the value of $M_v$ is 0;
		\item $B_v$: the constant-length label assigned to $v$ by the labeling scheme for the multi-source broadcast algorithm \textsc{MBroadcast}, with the graph $G'=(V',E')$, where $V'=X_{\text{BFS}}$,  $E'$ is the set of edges of $G$ connecting nodes from $X_{\text{BFS}}$, and the set of sources $S$ is equal to $U$.
		\item $\text{S2}_v$: the constant-length label assigned to $v$ by the labeling scheme for the multi-source broadcast algorithm \textsc{MBroadcast}, with the communication graph $G$, where the set of sources $S$ is equal to the set of all super-green nodes (i.e., such nodes $v$ that $\textsc{super-green}_v=1$).			
	\end{enumerate}
\end{enumerate} 

%\begin{enumerate}
%	\item 

\begin{figure}[H]
	\centering
	\includegraphics[width=0.75\textwidth]{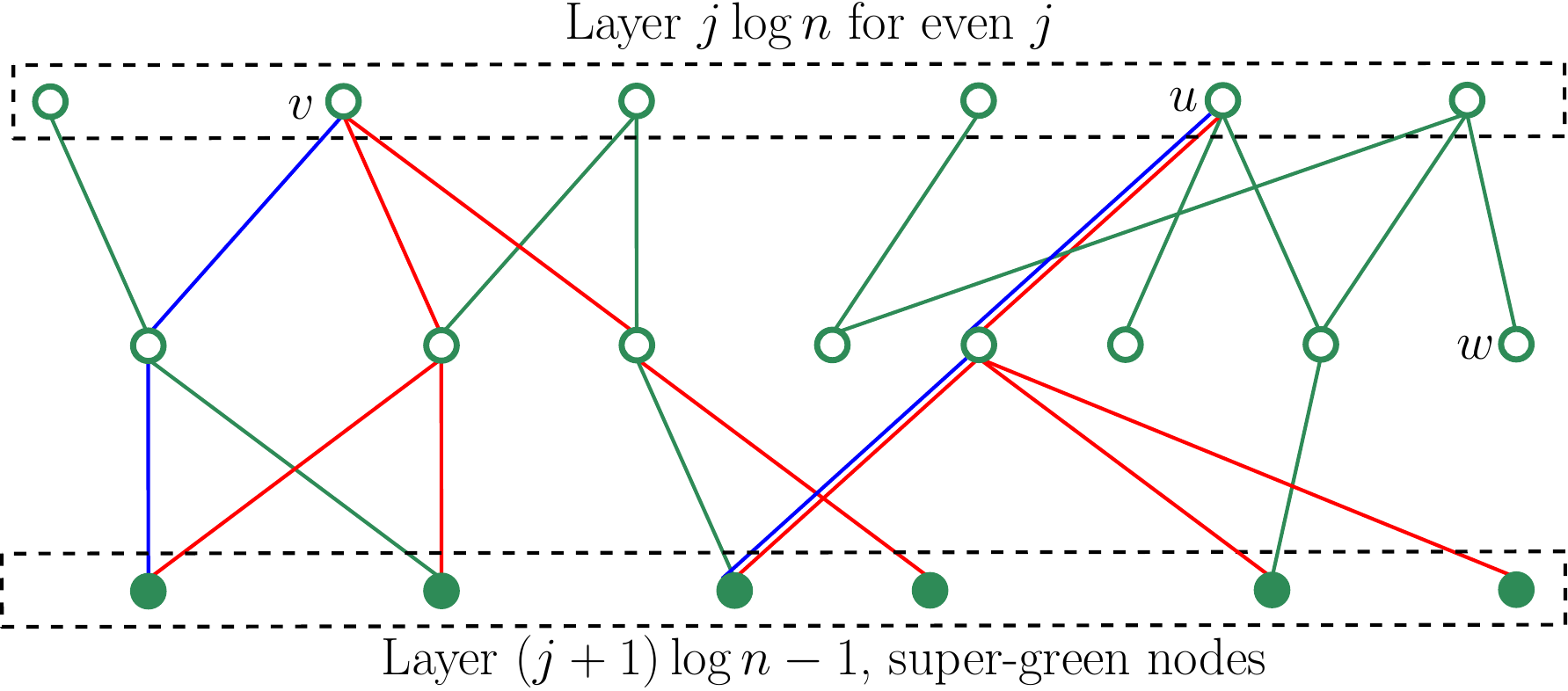}
	\caption{An example illustrating the notions and terms from the construction of \textsc{FastSD}. The edges denote paths of length $\frac12 \zet$. The set $\{u,v\}$ is a minimal BFS-cover -- the red edges show that each node from the layer $(j+1)\zet -1$ is reachable from $\{u,v\}$ by a BFS-path. The blue edges form a set of conflict-free BFS-paths described in Lemma~\ref{l:cover:paths}.  The node $w$ is an example of a node which is not active in Phase~2 of Stage~1, since $w\not\in X_{\text{BFS}}$. However, $w$ is at distance $\le 2\zet$ from some super-green node from the layer $(j-1)\zet - 1$ (stripe $j-2$) and therefore $w$ receives a message with the size of the graph in Stage~2. \label{f:stripe}}
\end{figure}

Using the above described labels, \textbf{Stage~1} for a stripe $X$ is implemented as follows:

\noindent\textbf{Phase~1:} collecting the value of $n$ at all nodes from the BFS-cover $U$. 
\begin{itemize}
	\item Round~1: each super-green  node $v$ from $\mathcal{P}$ (i.e., each $v$ such that $\textsc{super-green}_v=1$ and $\textsc{paths}_v=1$) sends $M_v$.
	\item Rounds $2, 3, \ldots, \zet$: each node $v$ with $\textsc{paths}_v=1$ which received a message $M$ in the previous round sends the concatenation of its bit $M_v$ and the received message $M$.
\end{itemize}
\noindent\textbf{Phase~2:} broadcast of the value of $n$ inside $X_{\text{BFS}}$. 

\noindent Using the labels $B_v$, all nodes with $\textsc{reach}_v=1$ execute the multi-source broadcast algorithm \textsc{MBroadcast}, where the set of sources $S$ is equal to $U$ and the graph is $G'=(V',E')$, $V'=X_{\text{BFS}}$ (determined by the flags $\textsc{cover}_v$) and $E'$ is the set of edges of $G$ connecting nodes from $X_{\text{BFS}}$.
%\end{enumerate}

\noindent\textbf{Stage~2} is an execution of the multi-source broadcast algorithm  \textsc{MBroadcast} on the whole graph $G$ using the labels $\text{S2}_v$ and thus with the source set $S$ consisting of all super-green nodes (i.e., the nodes $v$ with $\textsc{super-green}_v=1$).

\remove{\begin{lemma}\label{l:fast:sd}
	{\fastsd} solves the size discovery problem in $O(\log^2n)$ rounds using labels with size $O(1)$, provided that the diameter of the communication graph $G=(V,E)$ is larger than $\log n$. 
\end{lemma}
}
\begin{theorem}\label{l:fast:sd}
Algorithm	{\fastsd} solves the size discovery problem in time $O(\log^2n)$ using a labeling scheme of length
	%$O(1)$, 
	$O(\log\log\Delta)$.
	%provided that the diameter of the communication graph $G=(V,E)$ is larger than $\log n$. 
\end{theorem}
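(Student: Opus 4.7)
The plan is to treat the two branches of \fastsd{} separately. An extra flag bit prepended to every label tells each node which branch to execute. In the branch $D = O(\log n)$, \fastsd{} simply invokes \textsc{GeneralSD}, and Theorem~\ref{t:sz:first} gives correctness, time $O(D\log n + \log^2 n) = O(\log^2 n)$, and labels of length $O(\log\log\Delta)$. In the other branch each label consists only of the four one-bit flags ($\textsc{reach}_v$, $\textsc{super-green}_v$, $\textsc{cover}_v$, $\textsc{paths}_v$), the one-bit $M_v$, and the constant-length $B_v$ and $\text{S2}_v$ coming from the two invocations of \textsc{MBroadcast}, so the overall labeling scheme has length $O(\log\log\Delta)$, as required.

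For the large-diameter branch I would first analyse Phase~1 of Stage~1. In round $k$ every node on a path $P_l$ sitting at layer $(j+1)\zet - k$ transmits, and its intended receiver is its path-parent on $P_l$ at layer $(j+1)\zet - k - 1$. By the conflict-free property granted by Lemma~\ref{l:cover:paths}, this receiver has no neighbour on any other path $P_{l'}$ at a different layer, so its only transmitting neighbour in round $k$ is its path-parent. Same-layer transmitters on other paths do not reach receivers one layer below, and nodes in non-adjacent stripes are separated by at least $\zet$ BFS-layers, which strictly exceeds the maximum layer change of one along any edge of $G$; hence no cross-stripe collisions occur either. After $\zet$ rounds every $u_l \in U$ therefore holds the concatenation of the bits $M_v$ along $P_l$, which by construction equals the binary representation of $n$.

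For Phase~2 of Stage~1, I apply Corollary~\ref{c:multi:broadcast} to the subgraph $G'$ induced by $X_{\text{BFS}}$ with source set $U$. By definition every node of $X_{\text{BFS}}$ is BFS-reachable from $U$ by a path of length at most $\zet - 1$ lying entirely inside the stripe, so each connected component of $G'$ contains a source and the corresponding $D_S$ is $O(\log n)$; thus \textsc{MBroadcast} delivers $n$ to all nodes of $X_{\text{BFS}}$, and in particular to all super-green nodes of the stripe, in $O(\log^2 n)$ rounds. Parallel executions in different stripes do not interfere, by the same $\zet$-layer-gap argument. Stage~2 is an execution of \textsc{MBroadcast} on the whole graph $G$ with $S$ the set of all super-green nodes. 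Since super-green layers occur exactly at $V_{(2k+1)\zet - 1}$, any node of $G$ can reach a super-green node by a path in the BFS tree (ascending, or ascending and then descending through the root if it sits above the first super-green layer) of length at most $2\zet$, so $D_S = O(\log n)$ and Stage~2 also completes in $O(\log^2 n)$ rounds. Summing the three phases yields total time $O(\log^2 n)$.

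The main technical obstacle is verifying that Phase~1 is genuinely collision-free even though all paths in all stripes are transmitting simultaneously; this is exactly where Lemma~\ref{l:cover:paths} and the $\zet$-layer buffer between consecutive stripes have to be combined with care. Once this is in place, the remaining time and label-length bounds reduce to direct applications of Corollary~\ref{c:multi:broadcast} and Theorem~\ref{t:sz:first}.
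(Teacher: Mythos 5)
Your decomposition, collision analysis for Phase~1, and the distance bounds for the two invocations of \textsc{MBroadcast} all match the paper's proof, and those parts are correct. However, you have skipped over what the paper treats as the main subtlety: \emph{round synchronization of the phase boundaries}. The durations of Phase~1 ($\zet$ rounds), Phase~2 of Stage~1, and hence the starting round of Stage~2, all depend on $n$ --- which is exactly the quantity the nodes are trying to learn. A node cannot simply ``begin Phase~2'' or ``begin Stage~2'' at the right round unless you argue it can compute that round from information it already has, and your proposal never addresses this.

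The paper closes this gap in two steps. For Phase~2 of Stage~1, it observes that \textsc{MBroadcast} only requires the \emph{source} nodes to be synchronized (non-sources merely react to received messages); since every path $P_i$ has length exactly $\zet$, all nodes of the BFS-cover $U$ receive their complete messages in the same round and, having just learned $n$, know that this round is round $\zet$, so they can start Phase~2 together. For Stage~2, it notes that all super-green nodes know $n$ by the end of Phase~2 of Stage~1 and that the (worst-case) duration of Stage~1 is a function of $n$ alone (Phase~1 takes $\zet$ rounds and Phase~2 takes $O(D_S\log n+\log^2 n)$ rounds with $D_S=\zet$), so the sources of the Stage~2 broadcast can all compute the same starting round. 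Without some version of this argument your proof is incomplete: the time bound and the collision-freeness you establish presuppose that all relevant nodes agree on when each phase begins. The rest of your write-up (the branch on $D=O(\log n)$ via Theorem~\ref{t:sz:first}, the label-length accounting, and the conflict-free/stripe-separation argument for Phase~1) is essentially the paper's argument and is fine.
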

\begin{proof}
	As the case when $D=O(\log n)$ follows directly from Theorem~\ref{t:sz:first}, we assume that $D\ge \zet$.
	As there are no edges in a BFS-tree between nodes whose layers differ by more than one, computations in different stripes with green nodes do not interfere with each other, as the consecutive stripes are separated by $\zet$ layers. Moreover, as the set of paths $\mathcal{P}$ in each stripe is conflict-free, Phase~1 works without conflicts, and finally each node $u_i$ of the BFS-cover of the stripe (which is the source node of the path $P_i$) receives the whole message $M$ containing the binary representation of $n$, in $\zet$ rounds. 
	
	A potential problem with the implementation of Phase~2 might follow from the fact that (most of) the nodes do not know the actual value of $n$ or $\zet$ after Phase~1 and therefore they do not have access to information about the round in which Phase~2 starts. 
	Fortunately, %as can be easily seen in the description of {\executor} algorithm, 
	the multi-source broadcast algorithm requires only round synchronization of the set of source nodes $S$, cf.\ Corollary~\ref{c:multi:broadcast}.
	%\newline \textbf{TO JAKOS SFORMALIZOWAC W THEOREM/COROLLARY}? 
	%\newline
	As the length of each path $P_i$ in the set of conflict-free paths $\mathcal{P}$ of each stripe is exactly $\zet$, the nodes from $U$ receive their messages during Phase~1 in the same round. As a minimal BFS-cover $U$ of a stripe is equal to the set of sources for Phase~2 of Stage~1, we can implement Phase~2 despite the fact that most nodes do not know the exact round number of the beginning of this phase.

	In order to synchronize the beginning of Stage~2, we can use the fact that all super-green 
	nodes know the value of $n$ at end of Phase~2 of Stage~1. Moreover, Phase~1 takes $\log n$ rounds and Phase~2 works in $O(D_S\log n+\log^2n)$ rounds, where $D_S=\zet$, as we execute  algorithm \textsc{MBroadcast} on the set of BFS-reachable nodes from $U$ in each stripe. Thus, the exact value of the upper bound $O(D\log n+\log^2n)$ on the time of an execution of Phase~2 is also fully determined by $n$. As the value of $n$ is known to the set of source nodes of an execution of the multi-source broadcast algorithm in Stage~2 (recall that the set of sources is equal to the set of super-green nodes of each stripe), those nodes are able to synchronize and start an execution of Stage~2 simultaneously.
	
	%NAPISAC JESZCZE, ZE KAZDY WIERZCHOLEK JEST W ODLEGLOSCI mniejszej lub rÃ³wnej $2\log n$ od jakiegos wierzcholka super-green.
	Recall that the set of super-green nodes is exactly the set of nodes of layers $j\zet$ for even values of $j$.
	Thus, each node of the graph is at distance at most $2\zet$ from a super-green node.
	Therefore, the multi-source broadcast algorithm executed in Stage~2 delivers the value of $n$ to all nodes of the graph.

	Finally, let us emphasize the importance of the assumption that the diameter of the graph is at least $\zet$. If the diameter of $G$ is smaller than $\zet$ then there are no super-green nodes and therefore we are not able to build conflict free BFS-paths with length $\zet$. This in turn excludes the possibility of implementation of Stage~1. Hence we treat the case when the number of layers is smaller than $\zet$ separately, by executing \textsc{GeneralSD}. {As the diameter $D$ of the graph is $O(\log n)$ in this case, we obtain time complexity $O(D\log n+\log^2n)=O(\log^2n)$.}
\end{proof}

\remove{
\begin{theorem}
	There exists a size discovery labeling scheme and an associated size discovery algorithm such that the size of labels is $O\left(\min\left(\log\log\Delta,1+\frac{\log n}{D}\right)\right)$ and time complexity of the algorithm is $O(\log^2n)$.
\end{theorem}
\begin{proof}
	If the diameter $D$ of the communication graph is larger than $\log n$, the algorithm {\fastsd} satisfies properties stated in the theorem, Lemma~\ref{l:fast:sd}. If the diameter $D$ is not larger than $\log n$ and $\log\log \Delta<\frac{\log n}{D}$, the result immediately follows from Theorem~\ref{t:sz:first}.
	
	Finally, for the case that  $\log\log \Delta\ge \frac{\log n}{D}$
	and $D\le \log n$, we apply the following approach.
	Let $u, v$ be the nodes of the communication graph $G=(V,E)$ in the largest distance $D$
	and let $P=w_0,\ldots,w_D$ be a shortest path connecting $u$ and $v$, i.e., $w_0=u$ and $w_D=v$.
	We split the binary encoding $M$ of $n$ into blocks of size $\le \ceil{\frac{\log (n+1)}{D}}$ and put these blocks in the labels of the nodes from the path $P$. Then,
	\begin{itemize}
		\item The nodes from $P$ are marked as \emph{path-nodes}, the node $u=w_0$ is marked as the \emph{start-node}, and $v=w_D$ is marked as the \emph{end-node}.
		
		\item  
		At the beginning of an execution of the algorithm the start-node sends its part of the message $M$ and each other path-node sends the received message concatenated with its own part of $M$ immediately after reception of a message. The end-node starts the broadcast algorithm as its root immediately after reception of a message, the binary encoding $M$ of $n$ plays the role of the broadcast message.
	\end{itemize}
	One can easily verify that the above construction is correct and it satisfies the bounds from the theorem.
\end{proof}
}

%%%%%%%%%%%%%%%%%%%%%%%%%%%%%%%%% fast algorithm end

\subsection{Lower bound on time complexity of size discovery}\label{ss:lower:sd}

In this section we show that the $O(\log^2n)$ time complexity of {\fastsd} is asymptotically optimal, as long as the length of the labeling scheme used by the algorithm is $o(\log n)$. To this aim, we leverage a powerful technical result from \cite{AlonBLP91} originally proved as a part of a proof of the $\Omega(\log^2n)$ lower bound on time complexity of broadcast in radio networks, even for centralized algorithms.

Let $G=(U\cup V,E)$ be a bipartite graph, where $U,V$ are the parts of the bipartition of $G$, i.e., $U\cap V=\emptyset$, 
and there are no edges between nodes inside $U$ nor inside $V$, such that 
$|U|=|V|=n$ and there is no isolated vertex in $U\cup V$.
A \emph{bipartite broadcast schedule} for $G$ is a sequence $U_1,\ldots, U_k$ of subsets of $U$ such that, if 
one executes a $k$-round radio network algorithm in $G$ with the set of transmitters in the $i$th round equal to $U_i$ for each $i\in[k]$, then
every node from $V$ receives (at least) one message during this execution {of the algorithm}. The number of subsets $k$ is called the \emph{size} of the bipartite broadcast schedule for $G$.

\begin{theorem}\label{t:ablp}\cite{AlonBLP91}
	There exists a constant $c>0$ such that, for each sufficiently large natural number $n$, there exists a bipartite graph $G$
	with sides of size $n$ such that the size of each bipartite broadcast schedule for $G$ is larger than $c\log^2n$.
\end{theorem}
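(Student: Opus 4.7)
The plan is a probabilistic construction in the spirit of Alon, Bar-Noy, Linial, and Peleg: build $G$ at random so that right-vertices have carefully graded degrees, and show that no short schedule can cover all of them. Fix $K=\lfloor\log n\rfloor$ and partition $V$ into classes $V_1,\ldots,V_K$ of sizes $\Theta(n/K)$, declaring each $v\in V_i$ to have degree $d_i=2^i$. For each $v\in V_i$ choose $N(v)\subseteq U$ uniformly and independently at random among the $d_i$-subsets of $U$, and add a dummy matching to rule out isolated $U$-vertices so that $|U|=|V|=n$.

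Fix an arbitrary bipartite broadcast schedule $U_1,\ldots,U_T$ and set $p_t=|U_t|/n$. The probability that a fixed $v\in V_i$ hears a message in round $t$ equals $q_{t,i}=(1+o(1))\,d_i p_t(1-p_t)^{d_i-1}$, a quantity uniformly bounded by $1/e+o(1)$ and maximized at $p_t\approx 1/d_i$. The key elementary fact is geometric decay off this peak: if $|\log_2(p_t d_i)|\ge 2$ then $q_{t,i}\le 2^{-\Omega(|\log_2(p_t d_i)|)}q_i^{*}$, where $q_i^{*}=\max_p q_{t,i}$. Classifying rounds by their \emph{scale} $j=\lfloor\log_2(1/p_t)\rfloor$, a round of scale $j$ contributes non-negligibly to covering class $V_i$ only when $|i-j|=O(1)$.

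Independence of the neighborhoods inside a single class now does the work. Unless at least $c_0\log n$ rounds lie in the scale band near $i$, a fixed $v\in V_i$ is missed with probability bounded away from zero; because the $\Theta(n/\log n)$ nodes of $V_i$ are drawn independently, the probability that every one of them is covered is super-polynomially small. Since each round contributes meaningfully to only $O(1)$ classes, forcing every one of the $K=\Theta(\log n)$ classes to be fully covered costs $T=\Omega(\log^2 n)$ rounds. A union bound over short schedules---where each $|U_t|$ is first quantized to one of $O(\log n)$ dyadic ranges so as to tame the count---then produces, by the probabilistic method, a single fixed graph $G$ on which no schedule of size below $c\log^2 n$ can succeed, for some absolute $c>0$.

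The main technical obstacle is calibrating two constants in tandem: the geometric-decay constant that controls how wide a scale band around $i$ counts as "effective" for $V_i$, and the Chernoff concentration constant that ensures the per-schedule failure probability comfortably beats the (quantized) count of short schedules. Corner regimes $p_t=O(1/n)$ and $p_t=1-o(1)$, where the Taylor estimate of $(1-p_t)^{d_i-1}$ breaks down, must be handled separately; however, such rounds contribute expected coverage of only $O(1)$ nodes in the respective class, or can be rounded to a nearby well-behaved $p_t$ without losing a constant factor in the analysis.
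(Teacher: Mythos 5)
The paper does not actually prove Theorem~\ref{t:ablp}: it is imported verbatim from \cite{AlonBLP91}, so your attempt can only be measured against that reference. Your skeleton does match the Alon--Bar-Noy--Linial--Peleg strategy: receivers with graded degrees $d_i=2^i$ in $\Theta(\log n)$ scale classes, the hit probability $d_ip_t(1-p_t)^{d_i-1}$ peaking near $p_t\approx 1/d_i$ and decaying geometrically off the peak, and a cost of $\Omega(\log n)$ rounds per class.

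The gap is the closing step, ``a union bound over short schedules.'' With $|U|=|V|=n$ there are $2^{nT}$ schedules of length $T$, so the union bound needs every fixed schedule to succeed with probability below $2^{-nT}$; but any schedule in which each $v$ is hit with probability at least $1/2$ succeeds with probability at least $2^{-n}\gg 2^{-nT}$, so the count of schedules ($nT$ bits) can never be beaten by the failure exponent (at most $O(n)$, since $\sum_v\Pr[v\text{ missed}]\le n$). Quantizing the sizes $|U_t|$ does not tame anything: which vertices a round hits depends on the identity of $U_t$, not on $|U_t|$, and the schedule is chosen \emph{after} the graph is revealed. This order of quantifiers is not a technicality --- it breaks your per-class claim outright for small degrees. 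For $i=1$ the multiset $\{N(v):v\in V_1\}$ is a random multigraph on $U$ with only $\Theta(n/\log n)$ edges on $n$ vertices, hence with high probability a forest; taking $U_1$ to be one side of a proper $2$-coloring makes $|N(v)\cap U_1|=1$ for every $v\in V_1$, so a single adaptively chosen round covers the whole class, not $\Omega(\log n)$ rounds. The standard repair, which is what makes the construction in \cite{AlonBLP91} delicate, is to make the transmitter side polynomially smaller than the receiver side (say $|U|=\sqrt n$, $|V|=n$, padding $U$ back to size $n$ with vertices attached to one designated receiver): then every class induces a dense hypergraph on $U$, the schedule count drops to $2^{\sqrt n\,T}$, and the per-class failure exponent $\Omega(n^{1-O(c)}/\log n)$ finally dominates it. A secondary, fixable imprecision: a round of scale $j$ does not help $V_i$ ``only when $|i-j|=O(1)$'' --- $\Theta(\log n)$ rounds at any fixed constant scale distance still cover $V_i$ with high probability; the correct accounting is that each round contributes $\sum_i q_{t,i}=O(1)$ to a total coverage budget that must reach $\Omega(\log n)$ in each of the $\Theta(\log n)$ classes.
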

%\section{The Topology Recognition Problem}\label{s:topology}
Equipped with Theorem~\ref{t:ablp}, we are ready to prove the lower bound $\Omega(\log^2n)$ on the time complexity of the size discovery problem with short labels.

\begin{theorem}\label{time-lb}
	The time complexity of any algorithm solving the size discovery problem on all graphs with size at most $n$, using labeling schemes of length smaller than $\frac14 \log n$, is $\Omega(\log^2n)$.
\end{theorem}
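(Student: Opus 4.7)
The plan is to argue by contradiction, reducing to the bipartite broadcast lower bound stated as Theorem~\ref{t:ablp}. Suppose an algorithm $A$ together with a labeling scheme $\mathcal{L}$ of length $\ell < \frac{1}{4}\log n$ solves size discovery on all graphs of size at most $n$ in time $T$; observe that the scheme uses at most $L := 2^{\ell+1}-1 \le n^{1/4}$ distinct labels. The goal is to construct $L+1$ graphs of pairwise distinct sizes, each of size at most $n$, on which $A$ must distinguish the sizes through the behavior of a single $V$-side vertex whose output depends only on its label.

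I would take $H = (U \cup V, E)$ to be the bipartite graph from Theorem~\ref{t:ablp} on sides of size $m$, where $m = \Theta(n)$ is chosen so that $2m + 1 + L \le n$ and $c \log^2 m = \Omega(\log^2 n)$ (with $c$ the constant from Theorem~\ref{t:ablp}). For each $k \in \{0, 1, \dots, L\}$, let $G_k$ be obtained from $H$ by adding a fresh vertex $s_0$ adjacent to every vertex of $U$ (making the graph connected) together with a simple pendant path $s_1, \dots, s_k$ attached to $s_0$. Each $G_k$ is connected, the sizes $|G_k| = 2m+1+k$ are pairwise distinct and at most $n$, and, crucially, every $v \in V$ has its entire neighborhood inside $U$.

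Now run $A$ on $G_k$ under the labeling $\mathcal{L}(G_k)$ for $T$ rounds, and let $U_i \subseteq U$ denote the set of $U$-vertices that transmit in round $i$ of this execution. Because each $v \in V$ sees only neighbors in $U$, such a $v$ receives a message in round $i$ iff exactly one of its $U$-neighbors lies in $U_i$. If every $v \in V$ received at least one message during these $T$ rounds, then $(U_1, \dots, U_T)$ would be a bipartite broadcast schedule for $H$ of size $T$, contradicting Theorem~\ref{t:ablp} whenever $T \le c \log^2 m$. Assuming this bound on $T$, there must exist $v_k \in V$ that hears nothing during the execution on $G_k$; by the no-collision-detection assumption, the entire view of $v_k$ (background noise in every round) is independent of the algorithm, so its output is a fixed function $\Phi$ of $\mathcal{L}(G_k)(v_k)$ alone.

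Correctness forces $\Phi(\mathcal{L}(G_k)(v_k)) = 2m+1+k$ for each $k \in \{0, \dots, L\}$, and since these $L+1$ target values are pairwise distinct, the $L+1$ strings $\mathcal{L}(G_k)(v_k)$ must themselves be pairwise distinct, contradicting the bound of $L$ on the number of binary labels of length $< \frac{1}{4}\log n$. This contradiction yields $T > c \log^2 m = \Omega(\log^2 n)$. I expect the main delicacy to be the extraction of the schedule $(U_1, \dots, U_T)$ from a possibly adaptive algorithm whose $U$-vertices may transmit in round $i$ in response to messages received from $V$ or $s_0$ in earlier rounds: one has to verify that, whatever these decisions are, the resulting transmission sets still fit the combinatorial definition of a bipartite broadcast schedule so that Theorem~\ref{t:ablp} applies. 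Once this is established, the pigeonhole over the at most $L$ available labels closes the argument.
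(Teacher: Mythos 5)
Your proposal is correct, and its skeleton matches the paper's proof: assume a fast algorithm with short labels, use Theorem~\ref{t:ablp} to exhibit in each member of a family of connected graphs of pairwise distinct sizes a $V$-side node that hears nothing throughout the execution, and then pigeonhole over the small label set to find two such silent nodes with equal labels sitting in graphs of different sizes, forcing equal outputs where distinct outputs are required. The one genuine structural difference is how the family of distinct-size graphs is built: the paper takes a \emph{sequence} of hard bipartite graphs $G_1,\ldots,G_m$ with sides of sizes $m+1,\ldots,2m$ (one hard instance per size, each augmented by a single apex vertex), and pigeonholes $m$ graphs against $O(\sqrt{m})$ labels; you instead fix a \emph{single} hard bipartite graph $H$ and vary the size by attaching pendant paths of lengths $0,\ldots,L$ to the apex, pigeonholing $L+1$ graphs against $L$ labels. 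Your variant is slightly more economical (it invokes Theorem~\ref{t:ablp} only once) but has no slack in the pigeonhole count, whereas the paper's version has plenty; both are fine. Your identification of the ``main delicacy'' is apt but resolves itself exactly as you suspect: the bipartite-broadcast-schedule definition is purely combinatorial in the transmitter sets $U_1,\ldots,U_T$, so it is irrelevant that these sets arise adaptively from messages the $U$-vertices receive from $s_0$ or from $V$; since every $V$-vertex's neighborhood lies entirely in $U$, informing all of $V$ within $T$ rounds would yield a schedule of size $T$ for $H$, which Theorem~\ref{t:ablp} forbids for $T\le c\log^2 m$. The only nit is your bound $L=2^{\ell+1}-1\le n^{1/4}$, which can fail by a factor of $2$ when $\tfrac14\log n$ is not an integer; this costs nothing, since $L\le 2n^{1/4}$ still allows $2m+1+L\le n$ with $m=\Theta(n)$ and hence $\log^2 m=\Theta(\log^2 n)$.
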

\begin{proof}
	For the sake of contradiction assume that an algorithm $\mathcal{A}$ can solve the size discovery problem on all graphs of size at most $n$ in time at most $\frac{c}2 \log^2n$, using labeling schemes
	of length smaller than $\frac14\log n$,
	where $c$ is the constant from Theorem~\ref{t:ablp}. 
	For sufficiently large $m$, consider a sequence of bipartite graphs $G_1=(U_1\cup V_1,E_1),G_2=(U_2\cup V_2,E_2),\ldots,G_m=(U_m\cup V_m, E_m)$ such that the size of the sides of $G_i$ is $m+i$ and the size of each bipartite broadcast schedule for $G_i$ is larger than $c\log^2\left(2(m+i)\right)$. %\ge$ $c\log^2n$. 
	A sequence $G_1,G_2,\ldots,G_m$ satisfying these conditions exists by Theorem~\ref{t:ablp}.
	Now, let $G'_i$ be the graph obtained by adding a new node $s_i$ to $G_i$, connected by an edge with all nodes from $U_i$.
	Observe that $G'_i$ is connected for each $i\in[m]$.
	Thus, 
	according to the above assumptions and by the fact that $\mathcal{A}$ works in time at most $\frac{c}2\log^2\left({2(m+i)+1}\right)$ on $G'_i$, there exists a node $v_i\in V_i$ for each $i\in[m]$, such that $v_i$ does not receive any message during an execution of $\mathcal{A}$ on $G'_i$. 
	 Nodes of each graph $G'_i$ have labels of size smaller than $\frac14 \log (2(m+i)+1)$. % labels associated with $\mathcal{A}$.
	Since $\frac14 \log(2(m+i)+1)\le \frac12 \log m$ for sufficiently large $m$, all labels in all graphs $G'_i$  have length at most $ \frac12 \log m$. Consequently,  there are at most $2\sqrt{m}$ distinct labels assigned to all nodes in all graphs $G'_i$. Hence there exist $i,j\in[m]$, $i\neq j$, such that the label of $v_i$ in $G'_i$ and the label of $v_j$ in $G'_j$ are equal, while $v_i$ and $v_j$ do not receive any message during executions of $\mathcal{A}$ on $G'_i$ and $G'_j$, respectively. It follows that nodes $v_i$ and $v_j$ must make the same decision concerning the size of graphs $G'_i$ and $G'_j$, respectively. However, the sizes of these graphs are different, which gives a contradiction.
\end{proof}

Since $O(\log\log \Delta)$ is $o(\log n)$ for any graph, and in view of the lower bound $\Omega(\log\log\Delta)$ from \cite{GorainP18} on the length of labeling schemes permitting size discovery, Theorems \ref{l:fast:sd} and \ref{time-lb} give the following corollary.

\begin{corollary}
Algorithm	{\fastsd} solves the size discovery problem on all graphs using a labeling scheme of asymptotically optimal length $O(\log\log\Delta)$, and it works in time $O(\log^2n)$, which is aymptotically optimal for size discovery algorithms using asymptotically optimal labeling schemes.
	%$O(1)$, 
\end{corollary}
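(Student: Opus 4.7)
The plan is to assemble the corollary from three pieces that are already in place: the algorithmic upper bounds from Theorem~\ref{l:fast:sd}, the label-length lower bound from \cite{GorainP18}, and the time lower bound from Theorem~\ref{time-lb}. So essentially no new argument is needed — the task is just to verify that the quantitative hypotheses of Theorem~\ref{time-lb} are met by the labeling scheme of {\fastsd}.

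First, I would invoke Theorem~\ref{l:fast:sd} to conclude that {\fastsd} solves size discovery in time $O(\log^2 n)$ using a labeling scheme of length $O(\log\log\Delta)$. This already gives the positive part of the corollary. Next, for asymptotic optimality of the label length, I would cite the lower bound from \cite{GorainP18} stating that any size discovery algorithm on arbitrary radio networks (even with collision detection) requires labels of length $\Omega(\log\log\Delta)$; matched against the $O(\log\log\Delta)$ upper bound, this establishes that the length of the scheme used by {\fastsd} is asymptotically optimal.

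For asymptotic optimality of the time complexity, I would apply Theorem~\ref{time-lb}: any algorithm whose labeling scheme has length strictly less than $\tfrac14 \log n$ requires time $\Omega(\log^2 n)$. The key small check is that $O(\log\log\Delta)$ is $o(\log n)$ for every graph. Since $\Delta \le n-1$, we have $\log\log\Delta \le \log\log n$, and $\log\log n = o(\log n)$, so for all sufficiently large $n$ the length of any $O(\log\log\Delta)$-scheme falls below the threshold $\tfrac14 \log n$ required by Theorem~\ref{time-lb}. Consequently, the $O(\log^2 n)$ time bound attained by {\fastsd} is asymptotically the best possible among size discovery algorithms whose labeling schemes have asymptotically optimal length.

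I do not expect any obstacle here: the whole corollary is a bookkeeping statement combining three previously established results. The only subtlety worth spelling out is the comparison $\log\log\Delta = o(\log n)$, which ensures that the hypothesis of Theorem~\ref{time-lb} is actually satisfied by the scheme used by {\fastsd}, so that the time lower bound applies to exactly the class of algorithms that achieve the optimal label length.
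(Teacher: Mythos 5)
Your proposal is correct and matches the paper's argument exactly: the paper also derives the corollary by combining Theorem~\ref{l:fast:sd}, the $\Omega(\log\log\Delta)$ label-length lower bound of \cite{GorainP18}, and Theorem~\ref{time-lb}, noting precisely the same point that $O(\log\log\Delta)$ is $o(\log n)$ so the hypothesis of Theorem~\ref{time-lb} is satisfied. No gaps.
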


%%%%%%%%%%%%%%%%%%%%%%% lower bound begin
\section{Lower Bound on Lengths of Labels for Topology Recognition}\label{s:topology:lower}

We now start the investigation of the topology recognition problem.
In this section, we show that any algorithm for topology recognition
in general graphs requires a labeling scheme of length
$\Omega(\log \Delta)$. This length is thus exponentially larger than that sufficient to solve the easier problem of size discovery.
In fact, we will prove a stronger result: the above lower bound holds even in the model with collision detection. Hence, till the end of this section we assume that collision detection is available.

The proof is divided into two parts.
We first show the lower bound of $\Omega(\log n)$ for graphs of
 degree $2^{\Omega(\log n)}$.
Second, we generalize this result for graphs of arbitrary degree $\Delta$.

In our proof, we exploit the fact that each node in the graph
$G$ has to learn its degree.
Intuitively, if the set $\Set{deg(v)}{ v \in V}$ of node degrees in $G$
 is larger than the set $\Set{\mathcal{L}(v)}{ v \in V}$ of
node labels then
some two nodes
$u, v \in V,$ such that
$deg(u) \neq deg(v) \land \mathcal{L}(u) = \mathcal{L}(v)$
have to learn their degrees through their \textit{communication histories}
$\mathcal{H}$.
We will construct a family of graphs such that the labels
have to be of length {${\Omega(\log \Delta)}$}
so that nodes with the same labels and different degrees
could have different communication histories.
% i.e.,
%$$ deg(u) \neq deg(v) \land \mathcal{L}(u) = \mathcal{L}(v)
%\Rightarrow \mathcal{H}(u) \neq \mathcal{H}(v). $$

We now formally define a \textit{communication history} of a node for a given topology recognition algorithm executed on a given graph. Intuitively, this is a record of what the node learns during the execution of the algorithm (assuming collision detection).

\begin{definition}(Communication history) \label{D:Communication history}
	Let $G = (V, E)$ be a graph and let
	$\mathcal{L}$ denote a labeling scheme on $G$.
	Consider a topology recognition algorithm
	$\mathcal{A}$ executed on $G$ and let $i$ be a round of this algorithm.
	For each node $v \in V$,
	we define $\mathcal{H}_i(v)$ inductively
	such that $\mathcal{H}_i(v)$ denotes the communication history
	of $v$ at the end of the $i$th round of $\mathcal{A}$.
	
	\begin{itemize}
		\item $\mathcal{H}_0(v) = \emptyset$, for all $v$.
		
		\item If %In the following cases:
		\begin{itemize}
			\setlength\itemsep{0em}
			\item $v$ transmits in the $i$th round, or
			\item $v$ listens and more than one neighbor of $v$ transmits in the $i$th round (a collission), %or
			%\item no neighbor of $v$ transmits in the $i$th round,
		\end{itemize}
		we append the special character \texttt{\#}, denoting "no message",
		to $v$'s history, i.e.,
		$\mathcal{H}_i(v) = [ \mathcal{H}_{i - 1}(v), \texttt{\#} ]$,

		\item If $v$ listens and none of its neighbors transmits in the $i$th round (silence), we append $\varepsilon$ to $v$'s history, i.e., $\mathcal{H}_i(v) = [ \mathcal{H}_{i - 1}(v), \varepsilon ]$.

		\item If $v$ listens and successfully receives a message
		$m \in \{0, 1\}^*$ in the $i$th round (exactly one neighbor of $v$ transmits $m$),
		we append $m$ to $v$'s history, i.e.,
		$\mathcal{H}_i(v) = [ \mathcal{H}_{i - 1}(v), m ]$.
	\end{itemize}
	
	\noindent
	$\mathcal{H}(v)$ denotes the communication history of $v$ at the end of the last round of the execution of 
	 $\mathcal{A}$ on~$G$.
\end{definition}

%Note that the above notion of communication history is slightly more precise than the knowledge a node can acquire under the radio model with no collision: in our definition there are distinct characters for collision and for silence. We will need this distinction in the proof of our lower bound.

\subsection{The Lower Bound for Graphs with Large Degrees}\label{ss:lower:large:degre}

In this section we define a family  $\mathcal{G}$ of graphs of arbitrarily large size $n$ with
maximum degree $\Delta = 2^{\Omega(\log n)}$,
such that a labeling scheme of the length $\Omega(\log n)$ is
necessary to solve the topology recognition problem on each suffciently large graph
from this family.

%\remove{%wersja z sufitami i podlogami
%	Let $n$ be a sufficiently large natural number.
%We construct an $n$-node graph $G = (V, E)$.
%The graph is composed of $\floor{n^{1/2}}$ components, each component
%is composed of %at least 
%$\floor{n^{1/2}}$ nodes apart from the last component which contains
%$n-\floor{n^{1/2}}(\floor{n^{1/2}}-1)$ nodes.
%Let $C_i$ denote the set of nodes from the $i$th component.
%Every node is connected with every other node from a different component,
%i.e., for every $C_i, C_j$ ($i \neq j$) and
%every pair of nodes $u \in C_i, v \in C_j$ it holds that $\Tup{u}{v} \in E$.
%Let $C(v)$ denote the component to which node $v$ belongs.
%}
Let $n$ be a sufficiently large natural number such that $n^{1/2}$ is an even
natural number.
We construct an $n$-node graph $G_n = (V, E)$.
The graph is composed of $n^{1/2}$ components, each component
is composed of %at least 
%$\floor{n^{1/2}}$ 
$n^{1/2}$
nodes.
% apart from the last component which contains
%$n-\floor{n^{1/2}}(\floor{n^{1/2}}-1)$ nodes.
Let $C_i$ denote the set of nodes from the $i$th component.
Every node is connected with every node from a different component,
i.e., for every $C_i, C_j$ ($i \neq j$) and
every pair of nodes $u \in C_i, v \in C_j$ we have $\Tup{u}{v} \in E$.
Let $C(v)$ denote the component to which a node $v$ belongs.

We now describe the set of edges connecting nodes of
a component $C_i$ composed of $k=n^{1/2}$ nodes.
We divide nodes of $C_i$ into two sets $A$ and $B$, each of size
$k/2$.
%$\floor{\frac{k}{2}}$ and $\ceil{\frac{k}{2}}$, respectively.
Let $a_j \in A$ denote the $j$th node from $A$, and let $b_j \in B$ denote
the $j$th node from $B$.
Then, we connect $a_j$ with the first $j$ nodes from $B$:
$\Tup{a_j}{b_1}, \Tup{a_j}{b_2}, \dots, \Tup{a_j}{b_j} \in E$. This concludes the construction of $G_n$. Observe that each $G_n$ is connected.
Note also that for any component, the set of  different degrees of nodes in this component
has size %size~$\geq \ceil{\frac{k}{2}}$.
$k/2$.
The degree of each node is in the range $[n - 2 n^{1/2}, n - n^{1/2}+1]$, since
each component has size 
%in the range $[\floor{n^{1/2}}, 2\floor{n^{1/2}})$.
$n^{1/2}$.
Note that %Intuitively, 
nodes of different degrees that have equal labels and equal communication histories cannot
learn that they have different degrees.
The family $\mathcal{G}$ consists of all graphs $G_n$, such that $n^{1/2}$ is an even
natural number.

An example of a graph from the family $\mathcal{G}$ is presented in Figure~\ref{f:lower}.
\begin{figure}[H]
	\centering
	\includegraphics[width=0.75\textwidth]{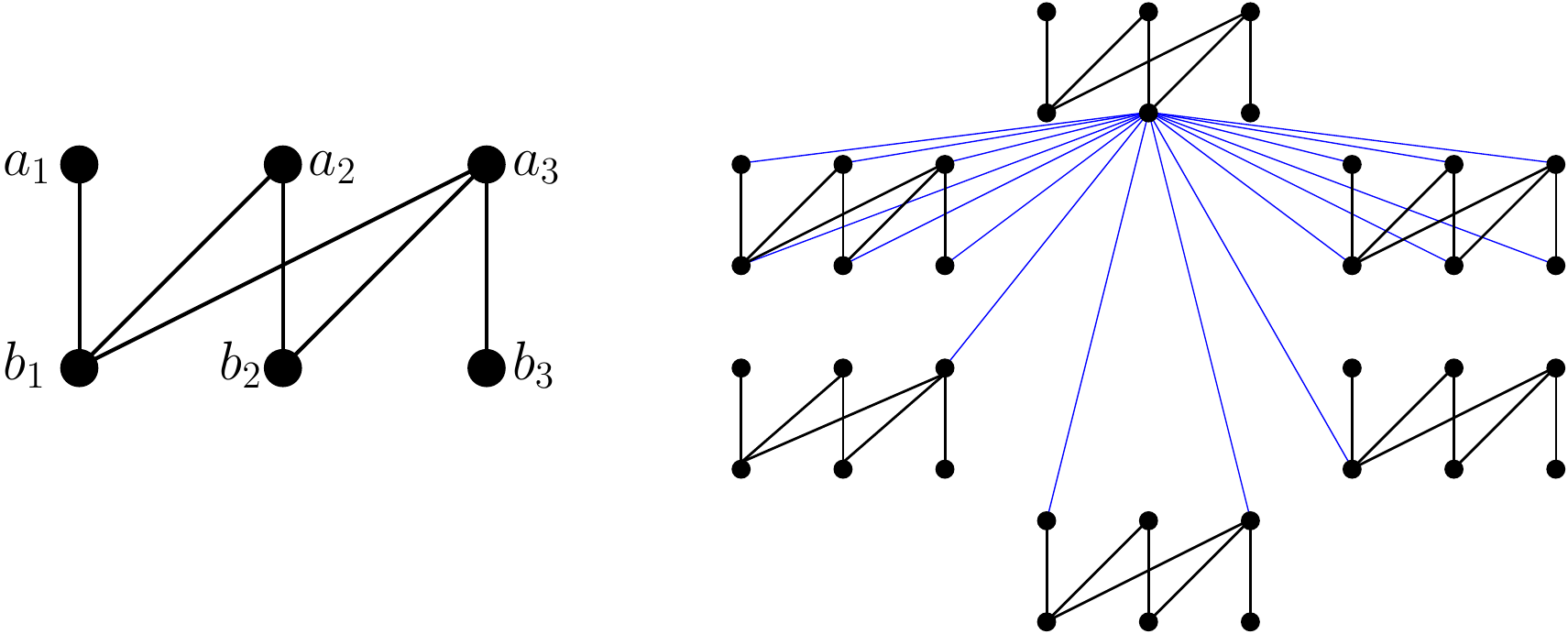}
	\caption{An illustration of the graph $G_{36}$ from the family $\mathcal{G}$. 
	The left picture depicts one component, the right picture depicts the whole graph. Blue edges connect nodes from different components. For clarity, only some of the inter-component edges are presented  (each pair of nodes from different components should be connected by an edge).  \label{f:lower}}
\end{figure}
Observe that, for each natural number $n$, there is at most one graph of size $n$ in the 
 family $\mathcal{G}$.

Below, we make a simple but useful observation regarding lack of information
received by most of  the nodes in graphs from $\mathcal{G}$ if at least two
nodes transmit in a round.

\begin{fact} \label{L:If 2 nodes transmit no other component hears}
	If at least two nodes $u\neq v$ transmit in the $i$th
	round then, {for every node $w$ %from any component
	%$C_j$ 
	such that $C(w) \neq C(u) \land C(w) \neq C(v)$,} $w$ does not hear
	any message
	 in the $i$th round.
%	Note that $u$ can belong to the same or different component as $v$.
\end{fact}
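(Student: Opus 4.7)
The plan is to unpack the construction of $G_n$: by definition, any two nodes lying in different components are connected by an edge, i.e., every node outside $C(u)$ is adjacent to $u$, and likewise every node outside $C(v)$ is adjacent to $v$. So the main observation is that a node $w$ whose component differs from both $C(u)$ and $C(v)$ is simultaneously a neighbor of $u$ and of $v$.

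Given this, the argument is essentially immediate from the radio model. First I would fix a round $i$ and nodes $u\neq v$ that both transmit in round $i$, together with an arbitrary $w$ satisfying $C(w)\neq C(u)$ and $C(w)\neq C(v)$. From the first inequality and the construction of $G_n$ (nodes in distinct components are adjacent), we get $(w,u)\in E$; from the second, $(w,v)\in E$. Thus $u$ and $v$ are two distinct neighbors of $w$ that both transmit in round $i$.

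By the definition of the radio network model (without—or even with—collision detection), if two or more neighbors of $w$ transmit in the same round, then $w$ experiences a collision and does not successfully receive any message in that round. Hence $w$ hears no message in round $i$, as claimed.

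I do not expect any real obstacle here: the statement is a direct consequence of the ``complete multipartite'' structure imposed on inter-component edges in $G_n$ and the standard collision rule of the radio model. The only thing worth being careful about is to note that the conclusion is independent of whether collision detection is available, since in both variants a collision fails to deliver any message to $w$; in the collision-detection model, $w$ would additionally register the collision, but it still does not hear a message.
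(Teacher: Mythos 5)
Your proof is correct and is exactly the argument the paper intends: since any two nodes in different components of $G_n$ are adjacent, such a $w$ has both $u$ and $v$ as transmitting neighbors, so it suffers a collision (and if $w$ itself transmits it trivially hears nothing either). The paper states this as a Fact without proof, so your write-up simply makes the omitted one-line reasoning explicit.
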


We now define the \textit{canonical history}, for a given graph $G$ from the family $\mathcal{G}$ and a given topology recognition algorithm $\mathcal{A}$ executed on $G$.
The canonical history is a specific communication history, which will be common to many nodes of $G$ in the early stages of the execution of a topology recognition algorithm.
It has the property that a node with the
canonical history receives a message in a round $i$ iff every node in the whole graph receives
this message in this round. 

\begin{definition} (Canonical history) \label{D:Canonical history}

		%inductively
	%such that $\widehat{\mathcal{H}}_i$
	%denotes canonical history 
	Fix a graph $G$ from the family $\mathcal{G}$, with a given labeling scheme $\mathcal{L}$, and a topology recognition algorithm $\mathcal{A}$ executed on $G$.
	With respect to $G$, $\mathcal{L}$ and $\mathcal{A}$, 
	we define the canonical history $\widehat{\mathcal{H}}_i$ at the end of the $i$th round as follows:
	\begin{itemize}
		\item %Initially 
		$\widehat{\mathcal{H}}_0 = \emptyset$.
		\item $ \widehat{\mathcal{H}}_i = [ \widehat{\mathcal{H}}_{i-1}, m ] $,
		if there is exactly one node in $G$ that transmits
		the message $m$ in the $i$th round.
		\item $\widehat{\mathcal{H}}_i =
		[ \widehat{\mathcal{H}}_{i-1}, \texttt{\#} ] $,
		if %no node transmits or 
		at least two nodes of $G$ transmit in round $i$,
		\item
		$\widehat{\mathcal{H}}_i =
		[ \widehat{\mathcal{H}}_{i-1}, \varepsilon ] $,
		if %no node transmits or 
			no node of $G$ transmits in round $i$.
	\end{itemize}
\end{definition}
We then define the set of components sharing the canonical history.
\begin{definition} \label{D:Set of components sharing the canonical history}
Fix a graph $G$ from the family $\mathcal{G}$, with a given labeling scheme $\mathcal{L}$, and a topology recognition algorithm $\mathcal{A}$ executed on $G$.
	We say that a node $v$ \textit{has the canonical history}
	$\widehat{\mathcal{H}}_{i}$ if $\mathcal{H}_i(v) = \widehat{\mathcal{H}}_{i}$ 
	at the end of the $i$th 
	round of the execution of $\mathcal{A}$ on $G$.
	A component $C_j$ \textit{has the canonical history} 	at the end of the $i$th 
	round if every node
	from that component has the canonical history at the end of this round.
	
	Let $\widehat{\mathcal{C}}_i$ denote the set of components of $G$
	that have the canonical history $\widehat{\mathcal{H}}_{i}$ at the end
	of the $i$th round:
	$$ \widehat{\mathcal{C}}_i = \Set{C_j}{
		\forall_{v \in C_j} \mathcal{H}_i(v) = \widehat{\mathcal{H}}_{i}
	}. $$
\end{definition}

%\noindent\textit{High-level idea of the proof of the lower bound.}
\paragraph{High-level idea of the proof of the lower bound.}
A node with the canonical history does not have any information
that distinguishes it from other nodes with the canonical history, apart from its label.  In topology recognition, nodes with different degrees must make different decisions (as each node has to situate itself in the graph), and there are at least $\frac12\sqrt{n}$ different degrees in each component of a graph $G_n \in \mathcal{G}$. 
The communication history of all nodes at the
beginning is equal to the canonical history. We show that, in every round, nodes from at most %two 
components can change their
histories from the canonical to a non-canonical  one (Fact~\ref{F:At most two components leave C}). 
%This implies 
%that a topology recognition algorithm works in $\Omega(\sqrt{n})$ rounds on a graph from the family $\mathcal{G}$. 
Using this fact we show that some nodes change their histories from the canonical one to non-canonical 
%in 
{after}
at least $\frac12\sqrt{n}$ rounds.
We also show that a node makes such a change (from the canonical to non-canonical history) in round $i$ iff at least one node from its component (call it a \emph{trigger}) transmits a message in that round (Lemma~\ref{L:node needs to transmit when component leaves C}) and that %nodes from at most 
%two components transmit messages 
{nodes from at most 
one other component transmit messages 
}
in that round (Fact~\ref{F:If three nodes transmit no one leaves C}). Using these properties we show that, for each label $l$, %$\mathcal{L}$
at most two nodes with the label $l$ can be triggers.
% However, as it turns out 
%(Fact~\ref{!!!})
On the other hand, we need a trigger in each component, so $\frac14\sqrt{n}$ different labels are needed which implies that the length of the labeling scheme is $\Omega(\log n)$.

\paragraph{A detailed proof of the lower bound.}

Below, we present a formal proof of the lower bound $\Omega(\log n)$ on the length of a labeling scheme for topology recognition, following the above described ideas.

\begin{lemma} \label{L:Set of components with canonical is empty}
	Assume that a deterministic algorithm $\mathcal{A}$ solves the topology
	recognition problem in every graph $G_n$ from the family $\mathcal{G}$ using a labeling scheme $\mathcal{L}$ 
	%%of length at most
	%%$\frac{1}{4} \log n$ 
	with at most $n^{1/4}$ distinct labels (for $n > 256$).
	Let $f$ be the time of the execution of $\mathcal{A}$ on $G_n$.
	Then, the set of components that have the canonical history is empty
	after the last round, i.e., $\widehat{\mathcal{C}}_f = \emptyset$.
\end{lemma}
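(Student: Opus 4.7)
The plan is to derive a contradiction from assuming $\widehat{\mathcal{C}}_f \neq \emptyset$, by exploiting the fact that each component contains many nodes of distinct degrees, while the labeling scheme provides few distinct labels. Concretely, I would first count the number of distinct degrees appearing inside a single component $C_j$ of $G_n$. By the construction, the intra-component degrees of $a_1,\ldots,a_{k/2}$ are $1,2,\ldots,k/2$ and those of $b_1,\ldots,b_{k/2}$ are $k/2, k/2-1,\ldots,1$, while every node receives the same inter-component contribution (namely, $(k-1)\cdot k$ edges, since every node is joined to every node of every other component). Consequently $C_j$ contains nodes of at least $k/2 = n^{1/2}/2$ pairwise distinct degrees.

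Next I would apply the pigeonhole principle. Since $\mathcal{L}$ uses at most $n^{1/4}$ distinct labels and $n^{1/2}/2 > n^{1/4}$ whenever $n > 16$ (hence certainly for $n > 256$), inside every component $C_j$ there exist two nodes $u,v \in C_j$ with $\mathcal{L}(u) = \mathcal{L}(v)$ and $\deg(u) \neq \deg(v)$.

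Now suppose for contradiction that some component $C_j \in \widehat{\mathcal{C}}_f$ survives through the last round. By Definition~\ref{D:Set of components sharing the canonical history}, every node of $C_j$, and in particular the pair $u,v$ above, has $\mathcal{H}_f(u) = \mathcal{H}_f(v) = \widehat{\mathcal{H}}_f$. In a deterministic algorithm the behaviour of a node — including its final output — depends only on its label, the round number, and its communication history. Since $u$ and $v$ agree on all of these data, they output the same pair $(G',w')$ at the end of $\mathcal{A}$. By the correctness of topology recognition this output must be accompanied by an isomorphism $\varphi_u : G_n \to G'$ with $\varphi_u(u) = w'$ and an isomorphism $\varphi_v : G_n \to G'$ with $\varphi_v(v) = w'$. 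Isomorphisms preserve degree, so $\deg(u) = \deg_{G'}(w') = \deg(v)$, contradicting $\deg(u)\neq \deg(v)$. Hence $\widehat{\mathcal{C}}_f = \emptyset$.

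The main conceptual step, and the only one that requires care, is the justification that two nodes with identical labels and identical communication histories truly produce identical outputs; the rest is a counting argument. This step is essentially the statement that the ``view'' of a node under a deterministic radio-network algorithm is exactly $(\mathcal{L}(v), \mathcal{H}(v))$, which follows directly from Definition~\ref{D:Communication history}. Everything else (degree counting, the pigeonhole bound $n^{1/2}/2 > n^{1/4}$, invocation of topology-recognition correctness) is routine.
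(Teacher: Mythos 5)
Your proposal is correct and follows essentially the same route as the paper's own proof: pigeonhole the at most $n^{1/4}$ labels against the at least $n^{1/2}/2$ distinct degrees within a surviving component to find two equally-labeled, equal-history nodes of different degrees, and then derive a contradiction from the fact that a deterministic algorithm forces them to produce the same output while topology recognition forces their outputs to reveal their (distinct) degrees. The only addition is your explicit verification of the degree count inside a component, which the paper states without recomputation.
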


\begin{proof}
	Recall that every component in $G_n$ has size %at least $\floor{n^{1/2}}$.
	$n^{1/2}$.
	For the sake of contradiction, assume that
	$\widehat{\mathcal{C}}_f$ is nonempty.
	Pick any $C_j \in \widehat{\mathcal{C}}_f$.
	\remove{Since each label has length at most $\frac{1}{4} \log n$, there
	are at most $2^{\frac{1}{4} \log n+1} = 2n^{1/4}$ different labels in $C_j$.
	Recall that the set of degrees of nodes in $C_j$ has size at least
	$\frac{n^{1/2}}{2}$.
	Hence 
	}
	Since there are at most $n^{1/4}$ distinct labels while the size of the set of degrees of nodes in $C_j$ is at least
	$\frac{n^{1/2}}{2}$,
	there are at least two nodes $u, v \in C_j$ with
	different degrees but with the same label and the same history
	$\widehat{\mathcal{H}}_{f}$. These nodes must make the same decision upon completion of $\mathcal{A}$ .
	Since  algorithm $\mathcal{A}$ solves the topology recognition problem, nodes $u$ and $v$ must output copies $R_u$ and $R_v$ of $G_n$ with their respective positions $u^*$ and $v^*$ marked,
	such that there exists an isomorphism $\phi$ from  $R_u$ to $R_v$ with $\phi(u^*)=v^*$. This contradicts the fact that $u$ and $v$ have
	different degrees in $G_n$.
\end{proof}

Let $\widehat{\mathcal{H}}$ be the sequence $(\widehat{\mathcal{H}}_{0}, \widehat{\mathcal{H}}_{1}, \ldots)$ 
of canonical histories for a given graph $G\in\mathcal{G}$ with a labeling scheme $\mathcal{L}$  and for a topology recognition algorithm $\mathcal{A}$. 
We say that a component $C_j$ leaves
$\widehat{\mathcal{H}}$ in the $i$th round
if $C_j \in \widehat{\mathcal{C}}_{i-1} \land
C_j \notin \widehat{\mathcal{C}}_{i}$.
If a component $C_j$ leaves $\widehat{\mathcal{H}}$
in the $i$th round then there is a node $v \in C_j$
such that
$\mathcal{H}_{i-1}(v) = \widehat{\mathcal{H}}_{i-1} \land
\mathcal{H}_{i}(v) \neq \widehat{\mathcal{H}}_{i}$,
i.e., $v$ stops having the canonical history in round $i$.
Let us discuss all possible cases when a component may leave
$\widehat{\mathcal{H}}$.

\begin{lemma} \label{L:node needs to transmit when component leaves C}
	A component $C_j$ may leave $\widehat{\mathcal{H}}$ in the $i$th
	round only in the case when at least one node from $C_j$ 
	and at most one node from some other component $C_k$ ($k \neq j$)
	transmit in the $i$th round.
\end{lemma}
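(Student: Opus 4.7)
The plan is to prove both parts of the lemma (``at least one node from $C_j$ transmits'' and ``at most one node from a different component transmits'') by a direct case analysis, exploiting the crucial structural feature of graphs in $\mathcal{G}$: every node outside $C_j$ is a neighbor of every node inside $C_j$.

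First, I would argue that if no node of $C_j$ transmits in round $i$, then $C_j$ does not leave $\widehat{\mathcal{H}}$. Suppose $C_j \in \widehat{\mathcal{C}}_{i-1}$ and no node of $C_j$ transmits in round $i$; I consider three sub-cases according to the number of transmitters in $G$. If no node of $G$ transmits, the canonical update is $\varepsilon$ and every $v \in C_j$ hears silence, hence also appends $\varepsilon$. If exactly one node $u$ transmits a message $m$, then $u \notin C_j$ by assumption, and by the construction of $G_n$, $u$ is a neighbor of every $v \in C_j$; since $u$ is $v$'s only transmitting neighbor, $v$ receives $m$, matching the canonical update $m$. If at least two nodes of $G$ transmit, all of them lie outside $C_j$ and therefore all are neighbors of every $v \in C_j$, so each $v \in C_j$ hears a collision and appends \texttt{\#}, matching the canonical update \texttt{\#}. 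Hence every node of $C_j$ retains the canonical history, contradicting the assumption that $C_j$ leaves.

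Next I would show that if two or more nodes outside $C_j$ transmit in round $i$, then $C_j$ again does not leave. In this case the whole graph has at least two transmitters, so the canonical update is \texttt{\#}. For any $v \in C_j$, either $v$ transmits (in which case its history gets \texttt{\#} appended by definition, matching the canonical), or $v$ listens; in the latter case the $\geq 2$ transmitters outside $C_j$ are all neighbors of $v$, so $v$ experiences a collision and appends \texttt{\#}. Again every node of $C_j$ keeps the canonical history, contradicting the assumption that $C_j$ leaves.

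Combining the two contrapositives: if $C_j$ leaves $\widehat{\mathcal{H}}$ in round $i$, then at least one node of $C_j$ transmits in round $i$ and at most one node outside $C_j$ transmits in that round; the latter transmitter, if it exists, belongs to a single component $C_k \neq C_j$, which gives the statement of the lemma. The only subtlety is being careful about which cases actually produce a discrepancy between a node's local reception and the canonical update; the dense bipartite connectivity between components makes each case essentially immediate, so I expect no significant obstacle beyond the bookkeeping of sub-cases.
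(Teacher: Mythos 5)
Your proposal is correct and follows essentially the same route as the paper: both arguments hinge on the fact that every node outside $C_j$ is adjacent to every node of $C_j$, so that whenever no node of $C_j$ transmits (or at least two nodes outside $C_j$ transmit), every node of $C_j$ necessarily records exactly the canonical update ($\varepsilon$, $m$, or \texttt{\#} according to the global number of transmitters) and $C_j$ cannot leave $\widehat{\mathcal{H}}$. The only difference is organizational — you phrase both halves as contrapositives with a case split on the number of transmitters, while the paper splits on the form of the canonical update — and this changes nothing of substance.
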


\begin{proof}
	First, let us argue that there can be at most one node transmitting
	in some other component than $C_j$ when $C_j$ leaves $\widehat{\mathcal{H}}$ in the $i$th round.
	Indeed, if there are two different nodes transmitting in some other components
	$C_k, C_l$ ($k \neq j \land l \neq j$ but it is not necessary that $k\neq l$) then for
	every node $v$ in $C_j$ ($C_j \in \widehat{\mathcal{C}}_{i - 1}$) we have
	$\mathcal{H}_i(v) = [\widehat{\mathcal{H}}_{i-1}, \texttt{\#}] $.
	Moreover, if there are two nodes transmitting in the $i$th round then
	$\widehat{\mathcal{H}}_{i} = [\widehat{\mathcal{H}}_{i-1}, \texttt{\#}]$,
	according to Definition \ref{D:Canonical history}.
	
	Second, we will show that there must be at least one node from $C_j$
	transmitting in the $i$th round  when $C_j$ leaves $\widehat{\mathcal{H}}$ in the $i$th round.
	Pick any node $v \in C_j$ such that
	$\mathcal{H}_{i-1}(v) = \widehat{\mathcal{H}}_{i-1} \land
	\mathcal{H}_{i}(v) \neq \widehat{\mathcal{H}}_{i}$.
	Consider the case when some message $m$ was appended to the
	canonical history $\widehat{\mathcal{H}}_{i-1}$, i.e.,
	$\widehat{\mathcal{H}}_i = [\widehat{\mathcal{H}}_{i-1}, m]$.
	Then, by Definition \ref{D:Canonical history}, there is exactly one node
	transmitting in the $i$th round. If that node is not in $C_j$ then every
	node from $C_j$ receives the message $m$ and it does not leave $\widehat{\mathcal{H}}$ in the $i$th round.
	So, some node from $C_j$ must transmit in the $i$th round.
	Now, consider the case when
	$\widehat{\mathcal{H}}_i = [\widehat{\mathcal{H}}_{i-1}, \texttt{\#}]$.
	Then, by Definition \ref{D:Canonical history}, at least
	two nodes are transmitting in $G$ in round $i$.
	However, it follows from the fact proved in the previous paragraph that at most one of those nodes could not
	be in $C_j$. 
	Finally, consider the case when
		$\widehat{\mathcal{H}}_i = [\widehat{\mathcal{H}}_{i-1}, \varepsilon]$.
		Then, by Definition \ref{D:Canonical history}, 
		no node is transmitting in $G$ in round $i$.
		The set of nodes with the canonical history after round $i$ is equal to the set of nodes with the canonical history after round $i-1$ and therefore no component leaves $\widehat{\mathcal{H}}$ in the $i$th round.
		Hence, at least one node from $C_j$ must transmit in round $i$, provided that $C_j$ leaves $\widehat{\mathcal{H}}$ in round $i$.
	%
%	 that was
%	transmitting in the $i$th round.
\end{proof}
Now, we state a few facts that will facilitate the proof of our lower bound. First, we observe
that nodes with equal labels and histories behave in the same way.
\begin{fact} \label{F:All nodes with the same label and hist transmit}
	Let $v$ be a node with label $\mathcal{L}(v) = l$
	such that $v$ has
	the canonical history at the end of round $i-1$, i.e.,
	$\mathcal{H}_{i-1}(v) = \widehat{\mathcal{H}}_{i - 1}$.
	If $v$ transmits in the $i$th round, then all nodes $u$
	such that $\mathcal{L}(u) = l \land
	\mathcal{H}_{i-1}(u) = \widehat{\mathcal{H}}_{i - 1}$ also transmit
	in the $ith$ round.
\end{fact}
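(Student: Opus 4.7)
My plan is to argue that this fact is an immediate consequence of the determinism of the algorithm $\mathcal{A}$, together with the observation that, aside from its label, the only external input to a node during rounds $1,\ldots,i-1$ is precisely the information recorded in its communication history.

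First, I would formalize the following principle: the action taken by a node $w$ in round $i$ of $\mathcal{A}$ (whether to transmit, and, if so, what message to transmit) is a deterministic function of its label $\mathcal{L}(w)$, the round index $i$, and the sequence $\mathcal{H}_{i-1}(w)$. This is because a node starts with only its label as private input, the algorithm itself is deterministic, and in every round up to $i-1$ the only new information that can affect its internal state is the symbol appended to $\mathcal{H}$ at that round (namely, a received message, the silence marker $\varepsilon$, or the collision/transmission marker \texttt{\#} as per Definition~\ref{D:Communication history}). So the entire local state of $w$ at the start of round $i$ is determined by $(\mathcal{L}(w),\mathcal{H}_{i-1}(w))$.

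Next, I would apply this principle to the pair $u,v$ hypothesized in the statement. By assumption, $\mathcal{L}(u)=\mathcal{L}(v)=l$ and $\mathcal{H}_{i-1}(u)=\mathcal{H}_{i-1}(v)=\widehat{\mathcal{H}}_{i-1}$. Consequently the local states of $u$ and $v$ at the beginning of round $i$ coincide, so the deterministic action prescribed by $\mathcal{A}$ in round $i$ is the same for both. Since $v$ transmits in round $i$, so does $u$ (and with the same message, although this is not required by the statement).

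There is really no obstacle here; the fact is essentially a restatement of determinism under the condition that the two nodes have had indistinguishable experiences so far. The only subtlety worth flagging explicitly is that the canonical history is a well-defined object independent of the individual node, so the equality $\mathcal{H}_{i-1}(u)=\mathcal{H}_{i-1}(v)$ follows directly from both histories being equal to $\widehat{\mathcal{H}}_{i-1}$, with no need to compare what $u$ and $v$ actually heard round by round.
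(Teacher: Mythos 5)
Your argument is correct and matches the paper's reasoning: the paper states this as a Fact without an explicit proof, implicitly relying on exactly the determinism principle you spell out (a node's action in round $i$ is a function of its label and its communication history through round $i-1$). Your write-up simply makes that implicit justification explicit, and there is no gap.
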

Lemma~\ref{L:node needs to transmit when component leaves C} directly implies that no component can leave
$\widehat{\mathcal{H}}$  in a round in which nodes from more than two components send messages. 
\begin{fact} \label{F:If three nodes transmit no one leaves C}
	If there are at least three nodes in three different
	components transmitting in a round, then
	no component leaves $\widehat{\mathcal{H}}$ in that round.
\end{fact}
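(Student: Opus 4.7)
The plan is to derive this Fact as an essentially immediate corollary of Lemma~\ref{L:node needs to transmit when component leaves C}, using contrapositive reasoning. Fix a round $i$ and suppose there exist three distinct components $C_a$, $C_b$, $C_c$ that each contain at least one transmitter in round $i$. I will argue that no component $C_j$ can leave $\widehat{\mathcal{H}}$ in round $i$.

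The key observation is that among the three components $C_a, C_b, C_c$, at least two of them are distinct from any fixed component $C_j$. Hence there exist at least two transmitting nodes $u, v$ lying outside $C_j$, with $C(u) \neq C(v)$. By Lemma~\ref{L:node needs to transmit when component leaves C}, however, if $C_j$ were to leave $\widehat{\mathcal{H}}$ in round $i$, then all transmitters outside $C_j$ would have to be confined to a single component $C_k \neq C_j$ and in fact to a single node. The existence of $u$ and $v$ in two different components contradicts this, so $C_j$ cannot leave $\widehat{\mathcal{H}}$. Since $C_j$ was arbitrary, no component leaves $\widehat{\mathcal{H}}$ in round $i$.

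I do not anticipate any genuine obstacle here; the statement is a direct weakening of the condition already isolated in the first half of the proof of Lemma~\ref{L:node needs to transmit when component leaves C}. The only point requiring a moment of care is to confirm the quantifier structure of the lemma, namely that ``at most one node from some other component'' is interpreted as: outside of $C_j$ there is at most a single transmitter in total (as clarified by the lemma's own proof, which rules out the case of two transmitters in components $C_k, C_l$ with $k \neq j$ and $l \neq j$, allowing $k = l$). Once this reading is fixed, the Fact is a one-line consequence.
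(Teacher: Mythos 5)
Your proposal is correct and matches the paper exactly: the paper also derives this Fact as a direct consequence of Lemma~\ref{L:node needs to transmit when component leaves C}, noting that a round with transmitters from three different components forces, for any component $C_j$, at least two transmitters outside $C_j$, which the lemma forbids. Your reading of the quantifier structure (at most one transmitter in total outside the leaving component) is the intended one, as the lemma's own proof confirms.
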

%Given Lemma~\ref{L:node needs to transmit when component leaves C} and Fact~\ref{F:If three nodes transmit no one leaves C}, one can see that at most two components may leave $\widehat{\mathcal{C}}$ in a round.
%
As a component can leave $\widehat{\mathcal{H}}$ only if at least one node from that component transmits (Lemma~\ref{L:node needs to transmit when component leaves C}) and no component leaves $\widehat{\mathcal{H}}$ in a round with transmitters from more than two components (Fact~\ref{F:If three nodes transmit no one leaves C}), we get the following fact.
\begin{fact} \label{F:At most two components leave C}
	In any given round at most two components may leave $\widehat{\mathcal{H}}$.
\end{fact}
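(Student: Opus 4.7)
The plan is a very short proof by contradiction, which combines the two preceding results directly. Suppose, for the sake of contradiction, that in some round $i$ there exist three pairwise distinct components $C_{j_1}, C_{j_2}, C_{j_3}$ that all leave $\widehat{\mathcal{H}}$ in round $i$.

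By Lemma~\ref{L:node needs to transmit when component leaves C}, whenever a component leaves $\widehat{\mathcal{H}}$ in a given round, at least one node from that component transmits in that round. Applying this to each of $C_{j_1}, C_{j_2}, C_{j_3}$, we obtain transmitters $v_1 \in C_{j_1}$, $v_2 \in C_{j_2}$, $v_3 \in C_{j_3}$, all transmitting in round $i$. Since the three components are pairwise distinct, these are three nodes belonging to three different components, all transmitting in the same round.

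Now Fact~\ref{F:If three nodes transmit no one leaves C} applies and asserts that no component leaves $\widehat{\mathcal{H}}$ in round $i$. This directly contradicts the assumption that $C_{j_1}$ (and $C_{j_2}$, $C_{j_3}$) leaves $\widehat{\mathcal{H}}$ in round $i$. Hence at most two components can leave $\widehat{\mathcal{H}}$ in any single round.

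There is essentially no obstacle: the fact is a clean corollary of the previous two statements, and the only thing to check is that the witnesses produced by Lemma~\ref{L:node needs to transmit when component leaves C} genuinely supply three transmitters in three distinct components, which is immediate from the assumption that the three leaving components are distinct.
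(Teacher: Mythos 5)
Your proof is correct and follows essentially the same route as the paper, which derives the fact in one sentence by combining Lemma~\ref{L:node needs to transmit when component leaves C} (each leaving component must contain a transmitter) with Fact~\ref{F:If three nodes transmit no one leaves C}. Your contradiction framing is just a slightly more explicit rendering of the same argument.
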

%\tj{SPRAWDZIC CZY TRZEBA PREZENTOWAC TE WSZYSTKIE FAKTY formalnie W SRODOWISKU fact? hm... do kazdego jest po jednym odwolaniu. JAKAS LADNA INTUICJA DO CALEGO DOWODU?}
Using the above facts and Lemmas~\ref{L:Set of components with canonical is empty}, \ref{L:node needs to transmit when component leaves C}, we are ready to prove a lower bound on the length of a labeling scheme necessary to solve the topology recognition problem.
\begin{theorem} \label{T:Log n lower bound}
	Any algorithm solving the topology recognition problem in a graph $G_n$ %$G$ 
	from the family $\mathcal{G}$, for $n>8^4$, 
	requires
	a labeling scheme %of length larger than $\frac{1}{4} \log n$, 
	with more than $n^{1/4}$ distinct labels.
\end{theorem}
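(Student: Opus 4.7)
The plan is to argue by contradiction: assume that $\mathcal{A}$ solves topology recognition on $G_n$ using a labeling scheme $\mathcal{L}$ with at most $n^{1/4}$ distinct labels. Since $n > 8^4 > 256$, Lemma~\ref{L:Set of components with canonical is empty} gives $\widehat{\mathcal{C}}_f = \emptyset$, so every one of the $\sqrt{n}$ components must leave $\widehat{\mathcal{H}}$ at some round of the execution of $\mathcal{A}$.

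For each component $C_j$, let $i_j$ denote the round in which $C_j$ leaves $\widehat{\mathcal{H}}$. By Lemma~\ref{L:node needs to transmit when component leaves C}, some node $v_j \in C_j$ transmits in round $i_j$; since $C_j \in \widehat{\mathcal{C}}_{i_j - 1}$, this node has the canonical history $\widehat{\mathcal{H}}_{i_j - 1}$ at the end of round $i_j - 1$. I shall call $v_j$ the \emph{trigger} of $C_j$ and $\mathcal{L}(v_j)$ its \emph{trigger label}. The strategy is then to bound the number of components that can share a trigger label, combining Fact~\ref{F:All nodes with the same label and hist transmit} with Fact~\ref{F:If three nodes transmit no one leaves C}, and to conclude via a simple pigeonhole count.

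The crux of the argument, and the step I expect to be the main obstacle, is showing that at most two components can share the same trigger label. Suppose, for contradiction, that three components $C_{j_1}, C_{j_2}, C_{j_3}$ have a common trigger label $l$, and set $i^* = \min(i_{j_1}, i_{j_2}, i_{j_3})$; without loss of generality $i^* = i_{j_1}$. Since $C_{j_2}, C_{j_3} \in \widehat{\mathcal{C}}_{i^* - 1}$, their $l$-labeled triggers still have the canonical history at the end of round $i^* - 1$. The trigger in $C_{j_1}$ transmits in round $i^*$, so Fact~\ref{F:All nodes with the same label and hist transmit} forces every node with label $l$ and canonical history at the end of round $i^* - 1$ to transmit in round $i^*$; in particular, the $l$-labeled triggers in $C_{j_2}$ and $C_{j_3}$ also transmit. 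Hence at least three nodes in three distinct components transmit in round $i^*$, and by Fact~\ref{F:If three nodes transmit no one leaves C} no component can leave $\widehat{\mathcal{H}}$ in that round, contradicting the fact that $C_{j_1}$ leaves in round $i^*$.

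With the claim established, each of the $\sqrt{n}$ components needs a trigger, and at most two components share any given trigger label, so $\mathcal{L}$ must use at least $\sqrt{n}/2$ distinct labels. Since $\sqrt{n}/2 > n^{1/4}$ whenever $n > 16$, and in particular for $n > 8^4$, this contradicts the assumption that $\mathcal{L}$ uses at most $n^{1/4}$ distinct labels, completing the proof.
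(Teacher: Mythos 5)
Your proof is correct and follows essentially the same route as the paper: the paper's formal argument bookkeeps via a multiset of labels indexed by the rounds in which components leave the canonical history, whereas you assign a trigger to each component directly, but the core step (three components sharing a trigger label forces three transmitters in three components at the earliest leaving round, contradicting Fact~\ref{F:If three nodes transmit no one leaves C}) is identical. Your per-component counting even gives the marginally better bound $\sqrt{n}/2$ in place of the paper's $\sqrt{n}/4$, which changes nothing in the conclusion.
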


\begin{proof}
	For a contradiction,  assume that there exists an algorithm $\mathcal{A}$
	that solves the topology recognition problem in a graph $G_n$ from the family $\mathcal{G}$, for $n>8^4$, using
	a labeling scheme $\mathcal{L}_n$ 
	%of length at most $\frac{1}{4} \log n$ 
	with at most $n^{1/4}$ distinct labels.
	It follows from Lemma \ref{L:Set of components with canonical is empty}
	that all components leave $\widehat{\mathcal{H}}$ during the execution
	of $\mathcal{A}$ on $G_n$.
	
	Consider all rounds when some component leaves $\widehat{\mathcal{H}}$.
	Let $S_p\subseteq C_p$ be the subset of nodes of $C_p$ that transmit
	a message in the round when $C_p$ leaves $\widehat{\mathcal{H}}$.
	From Lemma \ref{L:node needs to transmit when component leaves C}
	we know that $S_p$ is nonempty.
	From Fact \ref{F:At most two components leave C}
	there are at most two components leaving $\widehat{\mathcal{H}}$ in any given
	round.
	Thus, the number of rounds when some component leaves $\widehat{\mathcal{H}}$ is
	at least %$\floor{\frac{n^{1/2}}{2}}$.
	${\frac{n^{1/2}}{2}}$.
	Consider the $j$th round when at least one component
	leaves $\widehat{\mathcal{H}}$.
	Let $l_j$ denote the label of an arbitrary node $v$ such that $v$
	transmits in that round and $v$ is in the component that leaves
	$\widehat{\mathcal{H}}$ in that round.
	
	Let $t$ be the number of rounds in which some component(s) left
	$\widehat{\mathcal{H}}$. Consider the multiset
	$L = \{ l_1, l_2, \dots, l_t \}$.
	We claim that each label occurs at most twice in $L$.
	Indeed, assume that there are three indicies
	$i, j, k$ ($i < j < k$) such that
	$l_i = l_j = l_k$,
	i.e., there are at least three occurrences of the same
	label in $L$.
	Assume that $l_i, l_j, l_k$ are labels
	of nodes from components $C_i, C_j, C_k$, respectively.
	Assume that the round $i'$ was the $i$th round of the execution of $\mathcal{A}$ on $G$
	in which some
	component left $\widehat{\mathcal{H}}$, i.e.,
	$l_i$ is the label of a node that was transmitting in the
	$i'$th round.
	Recall that
	$C_i, C_j, C_k \in \widehat{\mathcal{C}}_{i' - 1}$, however,
	$C_i \notin  \widehat{\mathcal{C}}_{i'}$
	%\land 
	and $C_j, C_k \in \widehat{\mathcal{C}}_{i'}$.
	According to the definition of the sequence $L$, the choice of the values of $i$, $i'$ and in view of Lemma~\ref{L:node needs to transmit when component leaves C},
	some node $u \in C_i$ transmits in the round $i'$, such that
		$\mathcal{L}_n(u) = l_i$ and $u$ has the canonical history
	at least up to the $(i'-1)$st round, i.e.,
	$\mathcal{H}_{i'-1}(u) = \widehat{\mathcal{H}}_{i' - 1}$.
	%
	%In the $i'$th round there was some node $u \in C_i$ such that
	%$\mathcal{L}(u) = l_i$, $u$ had the canonical history
	%$\mathcal{H}_{i'-1}(u) = \widehat{\mathcal{H}}_{i' - 1}$
	%(at least up to the $i'-1$th round) and $u$ was transmitting.
	Our assumptions imply that there are two other nodes $v, w$ such that they have
	the same label as $u$ and the same canonical history up to the $(i'-1)$st round, i.e.,
	$$ v \in C_j \land w \in C_k, $$
	$$ \mathcal{L}_n(u) = \mathcal{L}_n(v) = \mathcal{L}_n(w), $$
	$$ \mathcal{H}_{i'-1}(u) = \mathcal{H}_{i'-1}(v) = \mathcal{H}_{i'-1}(w)
	= \widehat{\mathcal{H}}_{i' - 1}. $$
	Fact~\ref{F:All nodes with the same label and hist transmit}
	implies that all these nodes transmit in the $i'$th round.
	However, Fact \ref{F:If three nodes transmit no one leaves C}
	implies that no component leaves $\widehat{\mathcal{H}}$ in round $i'$, which
	contradicts our assumption that $C_i$ actually leaves $\widehat{\mathcal{H}}$ in this round.
	This contradiction %proves the fact 
	implies that each label occurs at most twice in $L$. Hence
	there are at least
	%$\floor{\frac{n^{1/2}}{4}}$ 
	${\frac{n^{1/2}}{4}}$ 
	different labels in the multiset $L$.
	
	\remove{Since the algorithm $\mathcal{A}$ uses a labeling scheme on $G_n$ of length at most
	$\frac{1}{4} \log n$, it can assign at most
	$2^{\frac{1}{4} \log n+1}$ different labels.}
	Recall that the algorithm $\mathcal{A}$ uses a labeling scheme on $G_n$ 
		%of length at most $\frac{1}{4} \log n$, it can assign at most
%		$2^{\frac{1}{4} \log n+1}$ different labels.
		with at most $n^{1/4}$ distinct labels.
	However, for all $n > 8^4$ we have
	%$\frac{n^{1/2}}{4} > 2^{1/4 \log_2 n+1}$. 
	$\frac{n^{1/2}}{4} > n^{1/4}$. 
	This contradiction concludes the proof.
\end{proof}
%For the sake of applications of the above theorem in the further part of the paper, the above theorem concerns the size of the set of labels, not their length. Below we state the following obvious corollary regarding the length of labels. Note that the corollary holds even in the (stronger) model with collision detection.
\begin{corollary} \label{c:logn:lower:bound}
	Any algorithm solving the topology recognition problem in %$G$ 
	all graphs from the family $\mathcal{G}$
	requires
	a labeling scheme of length %larger than $\frac{1}{4} \log n$, 
	$\Omega(\log n)$
	on $G_n$.
\end{corollary}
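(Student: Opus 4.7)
The plan is to deduce the length lower bound from Theorem~\ref{T:Log n lower bound} by a direct counting argument on binary strings. By that theorem, for every sufficiently large $n$ with $n^{1/2}$ an even integer, any topology recognition algorithm on $G_n \in \mathcal{G}$ must use a labeling scheme with more than $n^{1/4}$ distinct labels. So the only thing left is to translate this cardinality bound into a length bound.

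I would proceed as follows. Let $\ell$ denote the length of the labeling scheme, i.e.\ the maximum length of any label assigned to a node of $G_n$. Since labels are binary strings of length at most $\ell$, the total number of distinct strings available is at most
\[
\sum_{k=0}^{\ell} 2^k \;=\; 2^{\ell+1}-1.
\]
Combining this with the bound from Theorem~\ref{T:Log n lower bound}, for all sufficiently large $n$ we must have $2^{\ell+1}-1 > n^{1/4}$. Solving for $\ell$ gives $\ell > \tfrac{1}{4}\log n - 1$, hence $\ell = \Omega(\log n)$, as required.

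There is essentially no obstacle here: the work has already been done in Theorem~\ref{T:Log n lower bound}, and the corollary is just the observation that an alphabet of $m$ distinct binary strings requires strings of length $\Omega(\log m)$. The only minor care needed is to restrict attention to values of $n$ for which $G_n$ is actually defined (i.e.\ $n^{1/2}$ is an even integer and $n > 8^4$), which is harmless for an asymptotic statement since such $n$ can be chosen arbitrarily large.
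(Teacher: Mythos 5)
Your proposal is correct and matches the paper's intended argument: the corollary is stated as an immediate consequence of Theorem~\ref{T:Log n lower bound}, and the conversion from ``more than $n^{1/4}$ distinct labels'' to ``length $\Omega(\log n)$'' via counting the at most $2^{\ell+1}-1$ binary strings of length at most $\ell$ is exactly the step the paper leaves implicit. Your remark about restricting to $n$ with $n^{1/2}$ an even integer and $n>8^4$ is the right minor caveat.
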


\subsection{The Lower Bound for Graphs with Arbitrary Degrees}\label{ss:lower:topology:general}

In this section we show that, for all positive integers $\Delta<n$, there exists a graph
$H_{\Delta,n}$ of maximum degree $\Theta(\Delta)$ and of size $\Theta(n)$, such that any
algorithm solving the topology recognition problem in $H_{\Delta,n}$ %$\mathcal{G}$ requires
requires a labeling scheme of length $\Omega(\log \Delta)$.
In order to prove this lower bound we use the family $\mathcal{G}$ of graphs $G_n$ of maximum degrees
$2^{\Omega(\log n)}$, where $n$ is the size of the graph, constructed in Section~\ref{ss:lower:large:degre}.
%,
%in the construction of $\mathcal{G}$.

%Recall that any algorithm solving the topology recognition problem in a sufficiently large graph $G_n$
%from the family $\mathcal{G}$
%%Section \ref{ss:lower:large:degre}
%requires a labeling scheme of length $\Omega(\log n)$.
%Let $G_n$ be a graph from the family $\mathcal{G}$ with $n$ nodes.
Choose arbitrary positive integers $\Delta<n$.
We construct a $\Theta(n)$-node graph $H_{\Delta,n}$ from
$\Theta(n / \Delta)$ isomorphic copies of a graph $G_k$ from $\mathcal{G}$, for some
%each instance of $G$ will have 
$k=\Theta(\Delta)$.
Let $k$ be the smallest natural number such that $k\ge \Delta$ and $\sqrt{k}$ is a natural even number.
Let $G'_\Delta$ denote the graph obtained by the following slight modification of the graph $G_k$: add a new {\em special}
node $s$ connected to all other nodes of $G_k$.
Let $G'^{(i)}_\Delta$, for all $i = 1, 2, \dots, \ceil{\frac{n}{\Delta}}$,
denote the $i$th isomorphic copy of the graph $G'_\Delta$. % on $k$ nodes.
All these copies are pairwise disjoint.

The graph %$\mathcal{G}$ 
$H_{\Delta,n}$
%cannot be merely composed of
contains the above
$\ceil{\frac{n}{\Delta}}$ disjoint isomorphic copies $G'^{(i)}_\Delta$ of $G'_{\Delta}$, for all $i = 1, 2, \dots, \ceil{\frac{n}{\Delta}}$.
Moreover, 
we ensure connectivity of %$\mathcal{G}$ 
$H_{\Delta,n}$
in the following way.
Let $s_i$ denote the special node of $G'^{(i)}_\Delta$.
%Compose %$\mathcal{G}$ 
%$H_{\Delta,n}$
%by adding $\ceil{\frac{n}{\Delta}}$ instances of
%$G'^{(\Delta)}$ and 
We connect every $s_i$ with $s_{i + 1}$
for each $i = 1, 2, \dots, \ceil{\frac{n}{\Delta}} - 1$ by an edge, as well as $s_{\ceil{\frac{n}{\Delta}}}$ with $s_1$.
Thus the special nodes of all graphs $G'^{(i)}_\Delta$ are connected in a ring.
%All of $n \bmod \Delta $ leftover nodes connect to every other node in
%the graph, so that they don't break symmetry.
\begin{comment}
\begin{fact}
	The $\Theta(n)$-node graph $\mathcal{G}$ is connected and has the
	degree of $\Theta(\Delta)$.
\end{fact}
\end{comment}
As the graph $G'_\Delta$ is connected, the graph $H_{\Delta,n}$ is connected as well.
Moreover, $H_{\Delta,n}$ has maximum degree $\Theta(\Delta)$ and size $\Theta(n)$.

\begin{theorem}\label{lb-top}
	\remove{Any algorithm solving topology recognition problem in $\mathcal{G}$ requires
	a labeling scheme of the size greater than $\frac{1}{4} \log \Delta$.
}
	
%	\tj{Propozycja Andrzeja:
		
	For all sufficiently large positive integers $\Delta<n$, % \tj{such that $\log\log n=O(\log \Delta)$,} 
	there exists a graph of size $\Theta(n)$ and of maximum degree $\Theta(\Delta)$ such that any topology recognition algorithm for this graph requires a labeling scheme 
	with more than $\Delta^{1/4}$ distinct labels.
	%of length $\Omega (\log \Delta)$, \tj{even in the model with collision detection}.	
%	}
\end{theorem}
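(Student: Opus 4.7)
The plan is to replay the proof of Theorem~\ref{T:Log n lower bound} inside a single copy of $G'_\Delta$ sitting in $H_{\Delta,n}$. Let $k$ be as in the construction (so $k=\Theta(\Delta)$ with $\sqrt k$ an even integer), fix one copy $G'^{(1)}_\Delta$, and write $V^{(1)} := G'^{(1)}_\Delta\setminus\{s_1\}$; the subgraph induced on $V^{(1)}$ is isomorphic to $G_k\in\mathcal{G}$. Suppose, for a contradiction, that some algorithm $\mathcal{A}$ solves topology recognition on $H_{\Delta,n}$ using at most $\Delta^{1/4}\le k^{1/4}$ distinct labels.

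The key observation is that every neighbor of a $V^{(1)}$-node lies in $V^{(1)}\cup\{s_1\}$, so transmissions elsewhere in $H_{\Delta,n}$ influence $V^{(1)}$-histories only indirectly, via $s_1$, whose behavior is a deterministic function of the global execution and may be treated as a black box. I would therefore redefine the canonical history $\widehat{\mathcal{H}}_i$ of Definition~\ref{D:Canonical history} by appending $\varepsilon$, a message $m$, or $\texttt{\#}$ according to whether $0$, exactly $1$, or at least $2$ nodes of $V^{(1)}\cup\{s_1\}$ transmit in round $i$. With this definition I would re-prove the three structural statements underlying Theorem~\ref{T:Log n lower bound}, treating the singleton $\{s_1\}$ as an extra ``extended component'' in every transmitter count: (i) an analog of Lemma~\ref{L:Set of components with canonical is empty}, since each of the $\sqrt k$ components $C_j$ of $V^{(1)}$ contains $\sqrt k/2$ distinct node degrees (the $G_k$-degrees shifted uniformly by one for the edge to $s_1$) while only $k^{1/4}$ labels exist, so any $C_j$ whose nodes all still carry $\widehat{\mathcal{H}}_f$ at the end would force two nodes with different degrees in $H_{\Delta,n}$ to produce identical topology-recognition outputs; (ii) an analog of Lemma~\ref{L:node needs to transmit when component leaves C}, whose proof carries over because every $v\in C_j$ is adjacent to all of $(V^{(1)}\setminus C_j)\cup\{s_1\}$; and (iii) the resulting analogs of Facts~\ref{F:If three nodes transmit no one leaves C} and~\ref{F:At most two components leave C}, obtained by the same transmitter-counting argument with $\{s_1\}$ as the additional extended component.

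From this point the counting in the proof of Theorem~\ref{T:Log n lower bound} goes through verbatim: all $\sqrt k$ components of $V^{(1)}$ must leave $\widehat{\mathcal{H}}$, at most two per round, so at least $\sqrt k/2$ rounds witness a departure; Fact~\ref{F:All nodes with the same label and hist transmit} combined with (ii) and (iii) forces each label to appear at most twice in the corresponding multiset of ``triggering'' labels, whence at least $\sqrt k/4 = \Theta(\sqrt\Delta)$ distinct labels are needed. For all sufficiently large $\Delta$ this quantity exceeds $\Delta^{1/4}$, contradicting the hypothesis. The principal obstacle I anticipate is the correct bookkeeping for $s_1$: because it is adjacent to every node of $V^{(1)}$, interacts with its ring neighbors $s_2$ and $s_{\lceil n/\Delta\rceil}$, and belongs to no $G_k$-component, the restatements of Lemma~\ref{L:node needs to transmit when component leaves C} and Fact~\ref{F:If three nodes transmit no one leaves C} must carefully fold $\{s_1\}$ into the transmitter side without allowing it to inflate the count of $G_k$-components that can leave $\widehat{\mathcal{H}}$ in a single round; once this is done, the remainder reduces to the template of Section~\ref{ss:lower:large:degre} with only cosmetic changes.
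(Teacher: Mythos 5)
Your proposal is correct, and it takes a genuinely different route from the paper's. The paper proves Theorem~\ref{lb-top} by a black-box reduction to Theorem~\ref{T:Log n lower bound}: assuming $\mathcal{A}$ solves topology recognition on $H_{\Delta,n}$ with at most $\Delta^{1/4}$ distinct labels, it builds an algorithm $\mathcal{A}'$ on $G_\Delta$ by appending to every label one \emph{common} string $E$ encoding the entire transmission schedule of the special node $s_1$ in the execution of $\mathcal{A}$ on $H_{\Delta,n}$; each node then locally rewrites its history (substituting $s_1$'s message or a collision symbol where appropriate) so that its simulated history on $G_\Delta$ coincides with its actual history in $G'^{(1)}_\Delta$, and Theorem~\ref{T:Log n lower bound} is invoked unchanged. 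You instead rerun the canonical-history argument in place, restricting the transmitter counts to $V^{(1)}\cup\{s_1\}$ and treating $\{s_1\}$ as an extra pseudo-component; this works because every neighbor of a node of $V^{(1)}$ lies in $V^{(1)}\cup\{s_1\}$, so the three structural statements (the final emptiness of $\widehat{\mathcal{C}}$, the need for an in-component trigger, and the at-most-two-departures-per-round bound) survive verbatim with $\{s_1\}$ folded into the count --- and, as you note, a round in which $s_1$ transmits can only \emph{reduce} the number of $G_k$-components able to leave, never increase it. The trade-off: the paper's reduction keeps Section~\ref{ss:lower:large:degre} untouched and is shorter to state, at the cost of the somewhat delicate history-rewriting simulation and the observation that the (potentially long) common suffix $E$ does not change the number of distinct labels; your direct argument avoids any label augmentation and any simulation step, at the cost of re-verifying each lemma of Section~\ref{ss:lower:large:degre} in the extended setting. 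Both yield the same bound $\sqrt{k}/4=\Theta(\sqrt{\Delta})>\Delta^{1/4}$, and your anticipated obstacle (the bookkeeping for $s_1$) is exactly the right thing to check and is handled correctly.
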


\begin{proof}
	We will show that the above defined graphs $H_{\Delta,n}$ satisfy the statement of the theorem.
	
%	Each component $G'^{(i)}_\Delta$ of %in $\mathcal{G}$ 
%	$H_{\Delta,n}$	
%	corresponds % equivalent
%	to a separate copy of the graph $G'_{\Delta}$. % with its own canonical history.

%	Observe that Lemma~\ref{L:node needs to transmit when component leaves C} and Facts
%	\ref{F:If three nodes transmit no one leaves C} and
%	\ref{F:At most two components leave C}
%	apply to the graph $G'_{\Delta}$ obtained from $G_{\Delta}$ by adding the special
%	node~$s$. Hence Theorem \ref{T:Log n lower bound}  also applies to the graph $G'_{\Delta}$.

	% in $G'^{(\Delta)}_i$).
	
%	We know from Theorem \ref{T:Log n lower bound} that 
%	solving the topology recognition problem in
%	each of the subgraphs $G'^{(i)}_\Delta$ of $H_{\Delta,n}$ %$\mathcal{G}$ 
%	separately requires a labeling
%	scheme with more than $\Delta^{1/4}$ distinct labels.
	In order to get a contradiction, assume that there exists an algorithm $\mathcal{A}$ that solves the topology
	recognition problem in %$\mathcal{G}$ 
	$H_{\Delta,n}$
	using a labeling
	scheme with at most  $\Delta^{1/4}$ distinct labels, for $\Delta>8^4$.
	Then, we can construct
	an algorithm $\mathcal{A}'$ that solves the topology recognition
	problem in %every component 
	$G_{\Delta}$ with at most $\Delta^{1/4}$ distinct labels %of $H_{\Delta,n}$ 
	as follows.
%
	
	%As the graph $H_{\Delta,n}$ is uniquely determined by the values of $n$ and $\Delta$, 
	%one can simulate (locally) an execution of $\mathcal{A}$ on the $H_{\Delta,n}$ provided that
	%$\Delta$ and $n$ are known. 
	%Given this observation, we solve the topology recognition problem for
	Consider the component $G'^{(1)}_{\Delta}$ of $H_{\Delta,n}$.
	To each node $v$ of $G_{\Delta}$ we assign the label $(\mathcal{L}(v),E)$ such that:
	\begin{itemize}
		\item 
		The value of $\mathcal{L}(v)$ is equal to the label of the corresponding node of the component $G'^{(1)}_{\Delta}$ of $H_{\Delta,n}$.
		\item 
		The value of $E$ is equal to an encoding of all the messages sent by $s_1$ during an execution of $\mathcal{A}$ on $H_{\Delta,n}$ specifying the content of messages and rounds of transmissions of these messages.
		%Then, we add $E_1$ to the label of each node of $G_\Delta$.
	\end{itemize}
	Note that the value $E$ is common to all nodes $v$ of $G_{\Delta}$, and hence the number of distinct labels $(\mathcal{L}(v),E)$ is the same as the number of distinct labels $\mathcal{L}(v)$, i.e.,  at most $\Delta^{1/4}$.
	Using the above described labels, the algorithm $\mathcal{A}'$ works on $G_{\Delta}$ as follows. Each node $v$ works according to the algorithm $\mathcal{A}$, using the value $\mathcal{L}(v)$ as its label, with the following modification. After each round $t$, node $v$ modifies its communication history in this round using $E$, as follows. If $v$ sends a message in round $t$ or $v$ detects a collision in round $t$ or $s_1$ does not send a message in round $t$, then the communication history of $v$ in this round does not change. {However, if $v$ listens in round $t$, $v$ does not detect a collision in round $t$ and $s_1$ sends a message $m$ in round $t$, then $v$ replaces the term of its communication history in round $t$ by: 
	\begin{itemize}
		\item the message $m$, if $v$ does not receive any message in round $t$,
		\item the collision symbol \texttt{\#}, if $v$ does receive a message in round $t$.
	\end{itemize}
	} (Intuitively, $v$ simulates the message that it would receive from $s_1$ in the graph 
	$H_{\Delta,n}$ or the collission which would result from the transmission of $s_1$.) In the next round $v$ uses the modified communication history and executes algorithm $\mathcal{A}$.
	Hence, for every round $t$, the communication history of node $v$ when the above defined algorithm is executed on $G_\Delta$ is the same as the communication history of the corresponding node in
	 $G'^{(1)}_{\Delta}$ when $\mathcal{A}$ is executed on $H_{\Delta,n}$. Using the assumption that $\mathcal{A}$ solves the topology recognition problem in $H_{\Delta,n}$, we conclude that $\mathcal{A}'$ solves the topology recognition problem in $G_\Delta$ for $\Delta>8^4$. This is a contradiction with Theorem~\ref{T:Log n lower bound} that concludes the proof.

\remove{	All of the messages sent by the special node $s_i$ to nodes of $G'^{(i)}_\Delta$,
	during the execution of $\mathcal{A}$ on $H_{\Delta,n}$ % $\mathcal{G}$,
	can be encoded as a part of the algorithm $\mathcal{A}'$,
%	\footnote{Let us point out that the radio network model assumes that, apart from individual labels of nodes adjusted by a labeling algorithm to the current communication graph, nodes of a network know a common distributed algorithm of arbitrary size, executed on each node in parallel.}, 
	so
	that communication histories of all nodes of $G'^{(i)}_{(\Delta)}$, except $s_i$, during the execution of
	$\mathcal{A}$ on %$\mathcal{G}$ 
	$H_{\Delta,n}$
	are the same as during the execution
	of $\mathcal{A}'$ on $G'^{(i)}_{(\Delta)}$. Recall that nodes from $G'^{(i)}_\Delta$, excluding $s_i$, do not have neighbors outside of this subgraph.
}
	%Thus, a labeling scheme of size $o(\log \Delta)$ for $H_{\Delta,n}$ used by algorithm  $\mathcal{A}$ leads to a contradiction with Theorem~\ref{T:Log n lower bound}.
%	proving that solving topology recognition in $\mathcal{G}$
%	requires labeling scheme of the size $\Omega(\log \Delta)$.
\end{proof}

The following corollary of Theorem \ref{lb-top} is the main result of this section.

\begin{corollary}\label{cor-main}
For all positive integers $\Delta<n$, there exists a graph
 of maximum degree $\Theta(\Delta)$ and of size $\Theta(n)$, such that any
algorithm solving the topology recognition problem on this graph
requires a labeling scheme of length $\Omega(\log \Delta)$.
\end{corollary}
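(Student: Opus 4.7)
}
The plan is to reduce the statement directly to Theorem~\ref{lb-top}, using the elementary fact that the number of distinct binary strings of length at most $L$ is at most $2^{L+1}-1$. Concretely, for sufficiently large $\Delta<n$, Theorem~\ref{lb-top} produces a graph $H_{\Delta,n}$ of size $\Theta(n)$ and maximum degree $\Theta(\Delta)$ on which every topology recognition algorithm must employ a labeling scheme with more than $\Delta^{1/4}$ distinct labels. If such a labeling scheme has length $L$, then the number of distinct labels it can assign is at most $\sum_{i=0}^{L} 2^{i}=2^{L+1}-1$, so we must have
\[
2^{L+1}-1 > \Delta^{1/4},
\]
which gives $L \ge \tfrac{1}{4}\log \Delta - 1 = \Omega(\log \Delta)$.

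For the (finitely many) small values of $\Delta$ not covered by the "sufficiently large" hypothesis in Theorem~\ref{lb-top}, the bound $\Omega(\log \Delta)$ is trivial: any labeling scheme has length at least $0$, and the asymptotic claim $\Omega(\log \Delta)$ is only a statement about the behaviour as $\Delta\to\infty$, so these small cases can be absorbed into the hidden constant. Hence for every pair $\Delta<n$ (with $\Delta$ at least the threshold from Theorem~\ref{lb-top}) we take the graph $H_{\Delta,n}$ itself as the witness, and the corollary follows.

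There is essentially no obstacle here: the whole content of the lower bound is already packed into Theorem~\ref{lb-top}, and the corollary is just a change of units from "number of labels" to "length of labels". The only thing to be slightly careful about is the quantification: we should state the corollary for sufficiently large $\Delta$ (which is already implicit in the $\Omega$-notation) so that we may invoke Theorem~\ref{lb-top} directly; no new combinatorial or communication-theoretic argument is required.
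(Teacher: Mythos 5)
Your proposal is correct and follows exactly the route the paper intends: Corollary~\ref{cor-main} is stated as an immediate consequence of Theorem~\ref{lb-top}, obtained by the same conversion from ``more than $\Delta^{1/4}$ distinct labels'' to ``length at least $\tfrac14\log\Delta - 1$'' via the bound $2^{L+1}-1$ on the number of binary strings of length at most $L$. The remark about absorbing small values of $\Delta$ into the asymptotic notation is a reasonable way to handle the quantifier mismatch that the paper glosses over.
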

%%%%%%%%%%%%%%%%%%%%%%% lower bound end

\section{Topology Recognition: Labeling and Algorithm}\label{s:topology:algorithm}
Our final result is a topology recognition algorithm working for any graph of maximum degree $\Delta$ using a labeling scheme of length $O(\log \Delta)$. This length is optimal, in view of Corollary \ref{cor-main}.
Our algorithm works in time  $O\left(D\Delta+\min(n,\Delta^2)\right)$.
First, in Section~\ref{ss:BFS:broadcast:gather} we focus on constructing a BFS tree of the 
 graph and efficient broadcast/gathering algorithms working in this tree.
Then, in Section~\ref{ss:tp:algorithm}, the main topology recognition algorithm is presented, 
using the broadcast and gathering
subroutines from Section~\ref{ss:BFS:broadcast:gather}.

\subsection{Broadcast-gathering primitive}\label{ss:BFS:broadcast:gather}

In this section we describe a labeling scheme of length $O(\log \Delta)$ and algorithms for the (acknowledged) {\em broadcast} and {\em gathering} problems using this scheme. Both algorithm work in time $O(D \Delta)$.

%Nothing about model - it will be probably written earlier.

%\tj{sprawdzic czy ponizsze trzeba definiowac - powinno byc w jakims preliminaries}
Consider a graph $G = (V, E)$ of maximum degree $\Delta$ and diameter $D$, with a root node $r \in V$. 
\remove{
For the \texttt{broadcast} problem we assume that the root node $r$ has some massage $M$. At the end of an execution of a broadcast algorithm all nodes must know $M$. }
\remove{
A solution of the \texttt{Acknowledged Broadcast} problem should guarantee that, apart from the standard broadcast, the root node $f$ is aware of the round number at which the algorithm is finished.
}
In the {gathering} problem, each node $v \in V$ has some message $M_v$ and the root $r$ has to learn all messages $M_v$. 

%\tj{SPRAWDZIC ACK BROADCAST W CALEJ PRACY: WYMAGAMY ACK DO SOURCE CZY DO WSZYSTKICH WIERZCHOLKOW? I CZEGO GDZIE POTRZEBUJEMY?}

%%Anything about message size?

%\subsubsection{Labeling and the ideas of the solution}
%
%In this section we describe the labels assigned to nodes in order to perform broadcast and gathering algorithms.
Denote by $\textit{layer}(v)$ the distance from $r$ to a node $v$. 
We say that $v$ is at layer $i$ if $\layer(v)=i$.
Let $V_i$ be the set of nodes at  layer $i$. 
The neighborhood of a node $v$ in $G$ is denoted by $\mathcal{N}(v)$ and the set of neighbors of $v$ at layer $i$ is denoted by $\mathcal{N}_i(v)$. We fix an arbitrary strict total ordering on nodes denoted by $\prec$.
%, e.g. DFS-ordering.
For each node $v$, its label $\mathcal{L}(v)$ is equal to the tuple %of 4 values 
$(r_v, \leaf(v), a_v, b_v, g_v, \Delta)$, where $r_v$ is the $1$-bit flag indicating whether $v$ is the root node $r$, $\leaf(v)$ is the bit indicating whether $v$ has any neighbors on higher layers than $\layer(v)$, $a_v$ is the bit used for acknowledged broadcast, while both $b_v$ and $g_v$, called \emph{broadcast-label} and \emph{gather-label}, are special integer values used for broadcast and gathering respectively, described in more detail below. More precisely, the label $\mathcal{L}(v)$ is the concatenation of bits $r_v$, $\leaf(v)$, $a_v$, and binary representations of integers $b_v$, $g_v$ and $\Delta$.

%\tj{CZY POTRZEBNA JAKAS OGOLNA INTUICJA? HIGH-LEVEL OPIS?}
%%%%B_v

\paragraph{The idea of the broadcast/gather labels and algorithms.}
%The assignment of the values of $b_v$ is as follows. 
For the broadcast problem, we split the nodes of each layer $i$ into subsets $X_0,\ldots,X_{\Delta-1}$, %such that 
{where}
$X_j$ is a maximal set of nodes of layer $i$ not belonging $\bigcup_{k<j} X_k$ such that the sets of neighbors of $X_j$ at layer $i+1$ are pairwise disjoint, exluding the neighbors of nodes from $\bigcup_{k<j} X_k$. For $v\in X_j$ we set $b_v=j$. We observe that all nodes from $X_j$ can transmit in the same round in order to deliver the broadcast message to all their neighbors at layer $i+1$, excluding the neighbors of the nodes from $\bigcup_{k<j} X_k$. This fact makes possible to deliver the broadcast message from layer $i$ to layer $i+1$ in $\Delta$ rounds. For each node $v$ at layer $i+1$, the first node which successfully transmits the broadcast message to $v$ becomes $\parent(v)$ and $v$ becomes its child.

For the gathering problem, we assign distinct gather-labels in the range $[0,\Delta-1]$ to the children of each node.
This assignment is required to satisfy also the following additional restriction. If $u$ at layer $i+1$ is a neighbor of $v$ at layer $i$ and $u$ is not a child of $v$, then the gather-label of $u$ is not assigned to any child of $v$. With this restriction, the nodes at level $i$ with a given value of the gather-label can transmit successfully messages to their parents in the same round. Thanks to this property, the values/messages stored in the nodes from the layer $i>0$ can be gathered in their parents at layer $i-1$ in $O(\Delta)$ rounds.

More details, formal statement of the above mentioned properties with their proofs and the final broadcast and gathering algorithms are described below.

\subsubsection{Assignment of broadcast-labels and gather-labels}
\paragraph{Assignment of the values $b_v$ and construction of a BFS-tree} 
For each layer $i$, we assign the values $b_v$ as follows. First, we take any maximal subset $X_0$ of $V_i$ such that the sets of neighbors at layer $i+1$ of elements of this set are pairwise disjoint. % Each node in this set will have its broadcast value $b_v$ equal to $0$.
All nodes from $\mathcal{N}_{i+1}(v)$ will be called \textit{children} of $v$, for each $v\in X_0$. Then, we construct sets $X_1, X_2, \dots, X_{\Delta-1}$ as follows. %For constructing set $X_j$ we will assume 
Assume that the sets $X_0, \dots, X_{j-1}$ are already constructed for $j>0$. 
We define $X_j$ as a maximal subset of nodes $v$ at layer $i$ such that 
\begin{enumerate}[(a)]
	\item $v \notin X_0\cup\dots\cup X_{j-1}$,
	\item for each pair of nodes $v\neq u$ such that $v,u \in X_j$, we have $$ \mathcal{N}_{i+1}(v) \cap \mathcal{N}_{i+1}(u) \setminus \left(\bigcup_{k=0}^{j-1} \bigcup_{x\in X_k} \mathcal{N}_{i+1}(x)  \right) = \emptyset.$$ 
%	In other words, we conceptually ``remove'' the neighbors of the elements of $X_0\cup\cdots X_{j-1}$ from the layer $i+1$ and then we choose a maximal subset of the remaining elements of the layer $i$ (not in $X_0\cup\cdots X_{j-1}$) with pairwise disjoint neighborhoods at the layer $j+1$.
\end{enumerate} 
If a node $v\in X_j$ has a neighbor $u$ in $V_{i+1}$, such that no node from $X_k$ is in $\mathcal{N}_i(u)$ for all $k<j$, we say that $v$ is the \emph{parent} of $u$ and $u$ is a \emph{child} of $v$. Observe that the set of edges determined by this parent-child relationship will form a BFS tree of the graph, denoted $T_{\text{BFS}}$.

Below, we show that all nodes at layer $i$ belong to $X_0\cup\cdots\cup X_{\Delta}$.
\begin{lemma}
%	For all $k > \Delta$ we have $X_k = \emptyset$. 
For each layer $i$, the union of the sets $X_0\cup\cdots\cup X_{\Delta}$ from this layer is equal to the set of all nodes at layer $i$.
\end{lemma}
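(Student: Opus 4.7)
The plan is to argue by contradiction: assume some node $v \in V_i$ lies outside $X_0 \cup \cdots \cup X_{\Delta-1}$, and derive that $v$ must then have too many neighbors in layer $i+1$. First I would dispose of the base case $i=0$: here $V_0=\{r\}$, and since $\{r\}$ trivially satisfies the pairwise-disjoint-neighborhood condition, maximality of $X_0$ forces $r \in X_0$. So I can assume $i \geq 1$ throughout.

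Fix $i \geq 1$ and suppose $v \in V_i \setminus (X_0 \cup \cdots \cup X_{\Delta-1})$. For each $j \in [0,\Delta-1]$, condition (a) for $X_j$ is automatically satisfied by $v$, so the fact that $v$ cannot be added to the maximal set $X_j$ must be witnessed by a violation of (b). That is, there exist $u_j \in X_j$ and a common neighbor
\[
w_j \;\in\; \mathcal{N}_{i+1}(v)\cap\mathcal{N}_{i+1}(u_j)\;\setminus\;\bigcup_{k<j}\bigcup_{x\in X_k}\mathcal{N}_{i+1}(x).
\]
The main step is then to show that the $\Delta$ witnesses $w_0, w_1, \dots, w_{\Delta-1}$ are pairwise distinct neighbors of $v$ in layer $i+1$. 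Indeed, if $w_j = w_{j'}$ with $j<j'$, then $u_j \in X_j \subseteq \bigcup_{k<j'} X_k$ and $w_j \in \mathcal{N}_{i+1}(u_j)$, so $w_j$ lies in the forbidden set $\bigcup_{k<j'}\bigcup_{x\in X_k}\mathcal{N}_{i+1}(x)$, contradicting the defining property of $w_{j'}$.

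To finish, I note that $v \in V_i$ with $i \geq 1$ must have at least one neighbor in $V_{i-1}$ (every shortest path from $r$ to $v$ ends with such an edge). Combined with the $\Delta$ distinct neighbors $w_0,\dots,w_{\Delta-1}$ in $V_{i+1}$, this gives $\deg(v) \geq \Delta + 1$, contradicting the assumption that the maximum degree of $G$ equals $\Delta$. Hence every node of $V_i$ belongs to some $X_j$, as claimed. The main obstacle is merely bookkeeping: ensuring that the exclusion clause $\bigcup_{k<j}\bigcup_{x\in X_k}\mathcal{N}_{i+1}(x)$ in the definition of $X_j$ is used correctly to separate the witnesses — everything else is a routine maximality argument plus the degree bound.
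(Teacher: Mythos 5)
Your proof is correct, and it rests on the same key mechanism as the paper's: maximality of each $X_j$ forces a witness $w_j\in\mathcal{N}_{i+1}(v)\cap\mathcal{N}_{i+1}(u_j)$ not yet covered by $X_0,\dots,X_{j-1}$. The difference is in how this is exploited. The paper packages it as a decreasing potential: it sets $Z_j(v)=\mathcal{N}_{i+1}(v)\setminus\bigcup_{k<j}\bigcup_{x\in X_k}\mathcal{N}_{i+1}(x)$, notes $|Z_0(v)|\le\Delta$, observes that each witness strictly shrinks $Z_j(v)$ (the ``$+1$'' in the paper's displayed inequality is evidently a typo for ``$-1$''), concludes $Z_\Delta(v)=\emptyset$, and then places the leftover nodes in the extra set $X_\Delta$, whose pairwise condition holds vacuously once the uncovered neighborhoods are empty. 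You instead collect the witnesses, prove them pairwise distinct, and add the neighbor of $v$ at layer $i-1$ to exceed the degree bound. The two arguments count the same objects --- your $w_j$ is precisely the element deleted from $Z_{j-1}(v)$ in the paper's Case~2 --- but yours yields a marginally stronger conclusion: $X_0,\dots,X_{\Delta-1}$ already exhaust layer $i$, so $X_\Delta$ is superfluous (and a broadcast phase could in principle be shortened by one round). The price is a case split on $i=0$ and an argument by contradiction where the paper directly certifies membership in $X_\Delta$. Both proofs are complete.
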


\begin{proof}
	
	Let us check the sizes of the sets $Z_j(v) =  \mathcal{N}_{i+1}(v) \setminus \left(\bigcup_{k=0}^{j-1} \bigcup_{x\in X_k} \mathcal{N}_{i+1}(x)  \right)$ for each value of $j\in[0,\Delta]$. For $j=0$ the size of $Z_j(v)$ is at most $\Delta$. For other values of $j$ we have 
	
	\begin{itemize}
		\item Case~1: $v \in X_k$ for some $k<j$: then $Z_j(v)$ is empty.
		\item Case~2: $v \notin X_k$ for all $k<j$. 
		
		From the maximality of $X_{j-1}$ we know that there exists $u\in \mathcal{N}_{i+1}(v)$ which is also in $\mathcal{N}_{i+1}(v')$ for some $v' \in X_{j-1}$ and not in neighborhoods of any node from $X_0, \dots, X_{j-2}$. 
		This fact implies that $|Z_j(v)| \leq |Z_{j-1}(v)| + 1$ for each $0\le j\le k$.
	\end{itemize}
The above properties imply that, for each node $v\not\in X_0\cup\cdots\cup X_{j-1}$ at layer $i$, the size of
$Z_j(v)$ is at most $\Delta-j$. This in turn implies that $Z_\Delta(v)$ must be empty for each $v\not\in X_0\cup\cdots\cup X_{\Delta-1}$. 
So each node $v$ at layer $i$ is in some set $X_0, X_1, \dots, X_{\Delta}$.
\end{proof}
We set $b_v\gets k$ for each node $v\in X_k$, where the sets $X_0,\ldots,X_{\Delta}$ are as above.

%%%%%% A_v

For the acknowledged broadcast, we choose some arbitrary leaf node with the highest value of $\layer(v)$ and a path $P$
of length $\layer(v)$, such that $P$ starts at $r$ and ends at $v$ in the tree $T_{\text{BFS}}$ obtained through parent-child relationships described above. The value of $a_u$ is set to $1$ for each node $u$ on the path $P$ 
%has set the bit $a_u$ 
while the value $a_w$ is set to $0$ for each node $w$ outside of $P$.

%%%%G_v

\paragraph{Assignment of the values $g_v$.} 
As mentioned before, the bit $g_v$ of each node $v$ will be used in our gathering algorithm.
The value $g_r$ of the root node $r$ is equal to $0$. 
%This means we have chosen values for all nodes on layer $0$. 
Then we assign values $g_v$ layer by layer, assigning the values to nodes at layer $i$ in phase $i$.
%our labeling algorithm will work in phases, such that the $g_v$ label of each node $v$ with $\layer(v) = i$ is determined in the phase $i$. 
In the $i$th phase, we order nodes at layer $i-1$ according to the ordering determined by the values of $b_v$, from the smallest to the largest and, among nodes with the same value of $b_v$, according to the ordering $\prec$. Let $v_j$ be the  $j$th node at layer $i-1$ in this order. We give the value $g_u$ to each \textit{child} $u$ of $v_j$ in the BFS-tree  $T_{\text{BFS}}$, as follows.
%where the sets of children of nodes are described in the assignment of the values $b_v$. 
For a node $v$ at layer $i-1$, we order its children at layer $i$ (i.e., its neighbors at layer $i$ which have not been assigned gather-labels yet) according to $\prec$, and assign to each of them the smallest  non-negative integer not assigned earlier to any neighbor of $v$ at layer $i$.

Below, we state some properties of gather-labels which will be relevant fo the gathering algorithm presented in Section~\ref{ss:bandg:bfs}.
\begin{lemma}\label{l:gv:bits}
	The integers $g_v$ have the following properties:
	\begin{enumerate}[(a)]
		\item $g_v\in[0,\Delta-1]$,
		\item $g_v\neq g_u$ for each $u\neq v$ such that $\parent(u)=\parent(v)$,
		\item if $\parent(v)\neq\parent(u)$ and $g_u=g_v$ for some $u\neq v$ at the same layer, then $u$ and $\parent(v)$ are not connected by an edge.
	\end{enumerate}
\end{lemma}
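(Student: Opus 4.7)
The plan is to verify items (a), (b), and (c) in order, each as a consequence of the greedy gather-label assignment rule together with the layered $X_j$ decomposition used to define broadcast-labels. For (a), fix $v$ at layer $i$ and let $p = \parent(v)$; when $g_v$ is chosen, it is the smallest non-negative integer not already used by any neighbor of $p$ at layer $i$. Since $|\mathcal{N}_i(p)| \le \Delta$ and $v \in \mathcal{N}_i(p)$ itself is unassigned at that moment, at most $\Delta-1$ values are forbidden, so $g_v \in [0,\Delta-1]$. For (b), if $\parent(u) = \parent(v) = p$, then both $u$ and $v$ receive their gather-labels during the processing of $p$, in the order induced by $\prec$ on its children. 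Whichever of the two is processed first, say $u$, is already a labeled neighbor of $p$ at layer $i$ when $v$'s turn arrives; hence $g_u$ is in the forbidden set for $v$ and $g_v \neq g_u$.

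For (c), the main step is to show that if $u$ is a neighbor of $p := \parent(v)$ with $p' := \parent(u) \neq p$, then $p'$ is processed strictly before $p$ in phase $i$. Let $p' \in X_j$ and $p \in X_{j'}$ in the decomposition of $V_{i-1}$. Because the BFS-parent of $u$ is chosen from the earliest $X$-set containing a neighbor of $u$, we have $j \le j'$. If $j = j'$, then $p$ and $p'$ both lie in $X_j$ with $u \in \mathcal{N}_i(p) \cap \mathcal{N}_i(p')$; the maximality/disjointness property defining $X_j$ then forces $u$ to have been claimed by some earlier $X_k$ with $k < j$, contradicting $\parent(u) = p' \in X_j$. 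Hence $j < j'$, so $b_{p'} < b_p$ and $p'$ is processed before $p$. Consequently, when $g_v$ is chosen during $p$'s turn, the value $g_u$ has already been assigned to $u$, which is a neighbor of $p$ at layer $i$; so $g_u$ is in the forbidden set, giving $g_v \neq g_u$ and contradicting the assumption $g_u = g_v$.

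The step I expect to be the main obstacle is the ordering argument in (c): it requires bridging the gather-label phase with the $X_j$-decomposition introduced (originally for broadcast-labels) to force $p'$ to be processed before $p$. Items (a) and (b) are then immediate from the greedy assignment rule, without any further appeal to the structure of the $X_j$'s.
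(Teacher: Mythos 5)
Your proof is correct and follows essentially the same approach as the paper: items (a) and (b) from the greedy forbidden-set rule, and item (c) by showing that $\parent(u)$ must be processed before $\parent(v)$, so that $g_u$ is already forbidden when $g_v$ is chosen. The only cosmetic difference is that you justify the processing order via the indices of the sets $X_j$ and their disjointness property, whereas the paper argues that if $\parent(v)$ were processed first then $u$ would have become a child of $\parent(v)$; both rest on the same facts.
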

\begin{proof}
As each node at layer $i>0$ has a neighbor at layer $i-1$, each node will be assigned a gather-label.  

\noindent \textbf{(a, b):}~The values of gather-labels belong to the range $[0,\Delta -1]$ because of the upper bound $\Delta$ on the degrees of nodes
and the rule that, for a child of a node $v$, we choose 
the smallest non-negative integer not assigned earlier to any other neighbor of $v$ at layer $i$.
Property (b) follows from the assigning rules.

\noindent \textbf{(c):}~First, assume that the assignment of gather-labels to the children of $\parent(v)$ is done before the assignment of gather-labels to the children of $\parent(u)$. In this case, if $(u,\parent(v))\in E$ then $u$ is a child of $\parent(v)$ which contradicts the assumption $\parent(v)\neq \parent(u)$.
Now, assume that the assignment of gather-labels to the children of $\parent(u)$ is done before the assignment of gather-labels to the children of $\parent(v)$. In this case, if $(u,\parent(v))\in E$ then none of the children of $\parent(v)$, including $v$, is assigned the gather-label $g_u$ which contradicts the assumption $g_u=g_v$.
%
%And, the assigned gathering labels are in the range $[0,\Delta-1]$ since the degree of each node is at most $\Delta$.
\end{proof}

%such that no other node from $v$'s neighborhood on layer $i$ has the same value.

%%%Better would be code?

Since the label $\mathcal{L}(v)$ of any node is a concatenation of three bits and three representations of integers at most $\Delta$, our labeling scheme has length $O(\log \Delta)$.

\subsubsection{Broadcast and Gathering Algorithms}\label{ss:bandg:bfs}

\paragraph{Broadcast algorithm.}

Our broadcast algorithm \textsc{BroadcastBFS} works in phases. Let $D^*$ be the depth of the tree $T_{\text{BFS}}$. Note that $D^*$ is $\Theta(D)$, where $D$ is the diameter of the graph. In phase $i\in\{0,...,D^*-1\},$ we want to deliver the message $M$ from layer $i$ to layer $i+1$
of $T_{\text{BFS}}$. Each phase consists of {$\Delta+1$} rounds, which gives {$D^*(\Delta+1)$} rounds in total. 
A node $v$ transmits only in round $\layer(v)\cdot\Delta + b_v + 1$.
In the first round of the broadcast algorithm only the root, i.e., the node with $r_v = 1$ transmits. Every other node $v$ registers the round number in which it receives a message for the first time. From that information, it can deduce the value of $\layer(v)$ and, as it knows the value of $\Delta$ from its label, it is able to determine the round of its own transmission of the broadcast message (which is equal to the $b_v$th round of the next phase).

\paragraph{Acknowledged broadcast.}

For the acknowledged version \textsc{AckBrBFS} of broadcast, we first execute the above broadcast algorithm \textsc{BroadcastBFS}. However, instead of sending just the broadcast message $M$, each node $v$ transmits the pair $(0,M)$. When the chosen node $v$ with $\leaf(v)=1$ and $a_v = 1$ receives the message $(0,M)$, it can determine the maximum layer $D^*$ in the tree $T_{\text{BFS}}$ (based on the round number of reception of the message).
%\footnote{{We abuse the notation $D$ here, since the maximum layer of $T_{\text{BFS}}$ is not exactly equal to the diameter. However, the maximum layer of $T_{\text{BFS}}$ gives a constant factor approximation of the diameter.}}. 
After the last round of broadcast, which is the round $D^*(\Delta+1)$, $v$ transmits the pair $(1,D^*)$. Whenever a node $u$ with $a_u = 1$ receives the message $(1,D^*)$, it transmits the same message in the next round. When $(1, D^*)$ reaches the root $r$, node $r$ starts the next execution of \textsc{BroadcastBFS} with the broadcast message equal to $D^*+2D^* (\Delta+1)$, i.e., the number of rounds of the whole execution of \textsc{AckBrBFS}.

%the acknowledged broacast algorithm finishes its execution.

\paragraph{Gathering algorithm.}

%Gathering algorithm works in an analogous way to broadcast algorithm. We will 
The gathering algorithm \textsc{GatherBFS} starts by an execution of the acknowledged broadcast algorithm \textsc{AckBrBFS} followed by another execution of the broadcast algorithm \textsc{BroadcastBFS} with the broadcast message $M$ equal to the value of $D^*$, {which is known to the source node at the beginning of this execution}.
After that, we can
assume that each node $v$ knows $\layer(v)$ and $D^*$.
%, the values determined by each node during the acknowledged broadcast algorithm. 
%If not we can learn it using acknowledged broadcast. 
%Here 
%Similarly to the broadcast algorithm, 
The main part of the gathering algorithm \textsc{GatherBFS}
also consists of $D^*$ phases $i=0,...D^*-1$, each phase lasting $ \Delta$ rounds. If a node is at layer $i$, it listens in phase $D^*-i-1$ and stores all messages received in this phase in its local memory. %During the $i$th phase the 
A node $v$ at
layer $D^*-i$ transmits in round $g_v$ of the $i$th phase. The message transmitted by $v$ contains its original message $M_v$ and all messages that $v$ received until the round of its transmission. All the phases %The algorithm ends 
end in $D^*\Delta$ rounds and, by Lemma~\ref{l:gv:bits}, each node receives the messages transmitted by all its children. In particular, the root receives messages of all nodes of the tree. % until that round.

The description of the labeling scheme and the properties of the above algorithms imply the following lemma.
\begin{lemma}\label{l:BFS:broadcast:gather}
	The algorithms \textsc{BroadcastBFS}, \textsc{AckBrBFS} and \textsc{GatherBFS} solve the broadcast problem, the acknowledged broadcast problem and the gathering problem, respectively,
	using a labeling scheme of length $O(\log\Delta)$, and work in time
	 $O(D\Delta)$.
\end{lemma}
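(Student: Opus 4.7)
The label-length bound $O(\log\Delta)$ is immediate from inspection: $\mathcal{L}(v)$ stores three single bits together with binary encodings of $b_v\in[0,\Delta]$, $g_v\in[0,\Delta-1]$, and $\Delta$ itself. The remaining content is correctness of the three algorithms and an $O(D\Delta)$ round count; since the BFS depth $D^*$ is $\Theta(D)$, this reduces to $O(D^*\Delta)$.

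For \textsc{BroadcastBFS} I would induct on the layer $i$ and prove that at the end of phase $i$ every node of $V_{i+1}$ knows $M$. Fix $u\in V_{i+1}$ and let $j^*$ be the smallest index with $X_{j^*}\cap\mathcal{N}_i(u)\neq\emptyset$. The crux is to show that (a) this intersection is a singleton $\{v\}$, (b) $v$ already knows $M$ at the start of phase $i$ (inductive hypothesis applied to the smaller layer of $v$), and (c) in the round when $v$ transmits, $u$ hears no collision. Claim~(a) is enforced by the maximality condition in the definition of $X_{j^*}$: if two distinct $v,v'\in X_{j^*}$ both lay in $\mathcal{N}_i(u)$, then $u\in\mathcal{N}_{i+1}(v)\cap\mathcal{N}_{i+1}(v')$ would have to lie in $\bigcup_{k<j^*}\bigcup_{x\in X_k}\mathcal{N}_{i+1}(x)$, contradicting the minimality of $j^*$. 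Claim~(c) reduces to observing that transmissions from layers $i'\neq i$ occupy a disjoint range of round indices from phase $i$, while within phase $i$ the same maximality of $X_{j^*}$ rules out any other neighbor of $u$ transmitting simultaneously with $v$.

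\textsc{AckBrBFS} then follows by adding two auxiliary phases: the distinguished leaf on the path $P$ learns $D^*$ from the round of first reception, and the pair $(1,D^*)$ propagates back to the root along $P$ one edge per round without collision because at each step exactly one node of $P$ has both $a_u=1$ and a freshly received signal. A final invocation of \textsc{BroadcastBFS} then broadcasts the global round count. For \textsc{GatherBFS}, the preliminary broadcasts supply every node with $\layer(v)$ and $D^*$, after which the main loop is analysed by induction on $D^*-i$. In each phase $i$, the transmission of a node $v\in V_{D^*-i}$ is delivered to $\parent(v)$ without collision thanks to Lemma~\ref{l:gv:bits}: part~(b) eliminates sibling collisions, and part~(c) eliminates the only remaining threat, namely a same-layer node $u$ with $g_u=g_v$ that is simultaneously adjacent to $\parent(v)$.

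All three algorithms fit into $O(D\Delta)$ rounds because each phase of each main loop uses $O(\Delta)$ rounds, there are $D^*=\Theta(D)$ phases, and the auxiliary path-based transmissions contribute at most $O(D^*)$ more. The main subtlety I expect is giving a clean, formal argument for the singleton-parent property in \textsc{BroadcastBFS}; once that is in place, the acknowledgement layer and the gathering analysis follow straightforwardly from the structural guarantees already packaged in Lemma~\ref{l:gv:bits} and in the construction of $\{X_j\}_{j=0}^{\Delta}$.
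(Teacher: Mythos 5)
Your argument is correct and matches the paper's approach: the paper gives no separate proof of this lemma beyond the construction itself, relying precisely on the covering property of the sets $X_0,\ldots,X_{\Delta}$ and on Lemma~\ref{l:gv:bits} — the same two structural facts you invoke — and your layer-by-layer induction simply fills in what the paper leaves implicit. The one imprecision is your claim that in \textsc{AckBrBFS} exactly one node of $P$ transmits per round (already-informed nodes of $P$ re-echo $(1,D^*)$ whenever they hear it again, so echoes do occur), but this is harmless: the node of $P$ one layer closer to the root than the current frontier still hears the frontier node alone, because its other $P$-neighbour, two layers closer to the root, has not yet received the pair, so the message still reaches the root in $D^*$ rounds and the $O(D\Delta)$ bound stands.
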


\subsection{Topology recognition algorithm}\label{ss:tp:algorithm}
In order to solve the topology recognition problem, we choose an arbitrary node as the root $r$ and assign to each node $v$ the label which is the concatenation of two binary strings: the label $\mathcal{L}(v)$ used by our broadcast-gathering primitive and the binary representation of a natural number $\mathcal{C}(v)$ called the \emph{color} of $v$  which is the color of $v$ in some fixed distance-two vertex coloring of the graph $G$ with the set of colors equal to $[1,\Delta^2]$. Given these labels, we execute the following four-stage algorithm \textsc{TopRec}:
%\begin{itemize}
%\item[] 

\noindent\textbf{Stage~1.} First, the acknowledged broadcast \textsc{AckBrBFS} is executed with the modification that each node $v$ transmits the concatenation of the consecutive substrings $g_u$ of the labels of nodes of the path from the root $r$ to $v$, including $g_v$. This concatenation will be denoted by ID($v$). 

%\item[] 

\noindent\textbf{Stage~2.} Then, in the block of $\Delta^2$ rounds, each node $v$ such that $\mathcal{C}(v)=i\in[1,\Delta^2]$ transmits in the round $i$ of the block, sending the string ID($v$), and each node stores received messages in its local memory.

\noindent\textit{Remark.} If $\Delta^2>n$ then $\log\Delta\ge \frac12\log n$. In this case, in order to reduce the number of rounds of Stage~2 to $\min(\Delta^2,n)$, we extend the labeling by assigning to each node a unique identifier in the range $[1,n]$ of length $O(\log n)=$ $O(\log\Delta)$. Then, Stage~2 consists of $n$ rounds, where the node with the identifier $i$ transmits in round $i$ of the stage.

%\item[] 
\noindent\textbf{Stage~3.}
Next, we execute the gathering algorithm \textsc{GatherBFS}, where the message $M_v$ of a node $v$ is equal to the set of IDs received by $v$ in Stage~2, together with its own ID. 
The set of these messages, over all nodes $v$, permits each node to reconstruct the topology of the graph and situate itself in it.

\noindent\textbf{Stage~4.}
Finally, we execute the broadcast algorithm \textsc{BroadcastBFS}, where the message $M$ is equal to the set of messages of all nodes gathered at the root $r$ in Stage~3. 

%\end{itemize}

\begin{theorem}\label{t:tr:algorithm}
	The algorithm \textsc{TopRec} solves the topology recognition problem on every graph, using a labeling scheme of length $O(\log\Delta)$. It works in time $O(D\Delta+\min(n,\Delta^2))$.
	\end{theorem}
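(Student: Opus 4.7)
My plan is to verify correctness of each of the four stages of \textsc{TopRec} and then sum up the per-stage round counts and label field sizes. The label of $v$ is the concatenation of $\mathcal{L}(v)$ from Lemma~\ref{l:BFS:broadcast:gather}, the binary representation of the distance-two color $\mathcal{C}(v)\in[1,\Delta^2]$, and, when $\Delta^2>n$, a unique identifier in $[1,n]$. Each field has length $O(\log\Delta)$; in particular, when $\Delta^2>n$ we use $\log n<2\log\Delta$. Hence the labeling scheme has length $O(\log\Delta)$.

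For Stage~1, the transmission schedule of \textsc{AckBrBFS} depends only on $(\layer(v),b_v,\Delta)$ and on the acknowledgement path, not on payload content, so replacing the common broadcast message by per-node distinct payloads preserves all collision-freeness properties. Consequently each $v\neq r$ first receives a message in the round in which $\parent(v)$ transmits, and that payload is $\text{ID}(\parent(v))$. Appending its own gather-label $g_v$ yields $\text{ID}(v)$. Using a fixed-width encoding of $\lceil\log\Delta\rceil$ bits per gather-label, Lemma~\ref{l:gv:bits}(b) guarantees that siblings have distinct $g$-values; combined with the fact that IDs of nodes at different depths have different lengths, the map $v\mapsto\text{ID}(v)$ is injective. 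The acknowledgement phase, unchanged from \textsc{AckBrBFS}, informs every node of the exact length of Stage~1, so Stage~2 begins in a synchronized round. The total time of Stage~1 is $O(D\Delta)$ by Lemma~\ref{l:BFS:broadcast:gather}.

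For Stage~2, the distance-two property of $\mathcal{C}$ means that any two nodes at distance at most $2$ have different colors; in particular, for distinct $u_1,u_2\in\mathcal{N}(v)$ we have $\mathcal{C}(u_1)\neq\mathcal{C}(u_2)$, and also $\mathcal{C}(v)\neq\mathcal{C}(u_i)$. Hence in the round indexed by $\mathcal{C}(u)$ only $u$ transmits among $\mathcal{N}(v)\cup\{v\}$, so $v$ receives $\text{ID}(u)$ free of collision. After Stage~2 each node $v$ knows $M_v:=\{\text{ID}(u):u\in\mathcal{N}(v)\}\cup\{\text{ID}(v)\}$. If $\Delta^2>n$ we instead schedule by the unique identifier field, reducing the stage length to $n$ rounds; this yields the bound $O(\min(\Delta^2,n))$. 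Stages~3 and~4 are direct invocations of \textsc{GatherBFS} and \textsc{BroadcastBFS} with payloads $M_v$ and the reconstructed adjacency-list description, respectively. After Stage~3 the root holds the pairs $(\text{ID}(v),M_v)$ for every $v$, which determines the graph up to isomorphism; after Stage~4 every node has this description and, using its own $\text{ID}(v)$ from Stage~1, can locate itself in the topology. By Lemma~\ref{l:BFS:broadcast:gather} each of Stages~3 and~4 costs $O(D\Delta)$, so summing all four stages gives the claimed $O(D\Delta+\min(n,\Delta^2))$.

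The main subtlety to check carefully is Stage~1: that the modification to \textsc{AckBrBFS} carrying distinct per-node payloads still runs without collision, and that the produced IDs are globally unique. Both reduce to routine observations about the schedule of \textsc{BroadcastBFS} (depending only on $(\layer,b_v)$) and about Lemma~\ref{l:gv:bits}(b). The correctness of Stages~2--4 is then essentially immediate from the defining property of distance-two coloring and Lemma~\ref{l:BFS:broadcast:gather}.
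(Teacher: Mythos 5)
Your proof is correct and follows essentially the same route as the paper's: verify each of the four stages via Lemma~\ref{l:BFS:broadcast:gather}, uniqueness of IDs via distinct gather-labels among siblings (Lemma~\ref{l:gv:bits}(b)), and collision-freeness of Stage~2 via the distance-two coloring. You actually spell out two points the paper leaves implicit (that the \textsc{AckBrBFS} schedule is payload-independent, and the injectivity of the ID map), which only strengthens the argument.
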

\begin{proof}
	As shown in Lemma~\ref{l:BFS:broadcast:gather}, Stages~1, 3 and 4 work in $O(D\Delta)$ rounds.
	Moreover, as the children of each node in $T_{\text{BFS}}$ have different gather-labels, the IDs assigned to nodes in Stage~1 are unique. In Stage~2, lasting $\min(\Delta^2,n)$ rounds, each node $v$ receives a message from each of its neighbors in the graph $G$ and it learns the distinct IDs of these neighbors. Hence the message $M_v$ that each node $v$ sends in Stage 3 contains the information about the neighborhood of $v$ with distinct ID's.Thus, in Stage~3, the root gathers information about all edges in $G$ and it broadcasts this information to all nodes in Stage~4. 
\end{proof}

\section{Conclusion}
We constructed labeling schemes of asymptotically optimal length for size discovery and topology recognition in arbitrary radio networks without collision detection. We also designed algorithms to solve these problems using these optimal schemes. Our results show that, for arbitrary radio networks, the optimal length of labeling schemes permitting topology recognition is exponentially larger than that permitting the easier problem of size discovery.

In the case of size discovery, we showed that our algorithm using the labeling scheme of optimal length $O(\log\log \Delta)$ has also asymptotically optimal time among size discovery algorithms using optimal schemes for this problem. However, for topology recognition, our algorithm using the labeling scheme of optimal length $O(\log \Delta)$  works in time $O\left(D\Delta+\min(\Delta^2,n)\right)$  and we do not know how far this time is from optimal. Hence the main open problem left by our research is the following: What is the time of the fastest topology recognition algorithm using a labeling scheme of optimal length $O(\log \Delta)$?

%\section*{Acknowledgments}
%	Adam Ga\'nczorz, Tomasz Jurdzi\'nski and Mateusz Lewko are supported by Polish National Science Centre grant 2017/25/B/ST6/02010.
	
%	Andrzej Pelc is supported in part by NSERC Discovery Grant RGPIN--2018--03899,
%	and by the Research Chair in Distributed Computing of the
%	Universit\'e du Qu\'ebec en Outaouais.

\ifbibtex	
\bibliographystyle{abbrv}
\bibliography{references}
\else

\fi

\end{document}